\newtheorem{definition}{Definition}
\newtheorem{assu}{Assumption}
\newtheorem{lemma}{Lemma}
\newtheorem{theorem}{Theorem}
\def \E {\mathbb{E}}
\def \R {\mathbb{R}}
\newtheorem{claim}{Claim}
\newenvironment{theoremfour}[1]{%
   
  \theoremfourtemp%
}{%
  \endtheoremfourtemp%
}
\begin{document}

\begin{center}

{\bf{\Large{Last-iterate Convergence for Symmetric, General-sum, $2 \times 2$ Games \\ Under The Exponential Weights Dynamic}}}

\vspace*{.2in}

{\large{
\begin{tabular}{cccc}
 Guanghui Wang$^1$, Krishna Acharya$^2$, Lokranjan Lakshmikanthan$^2$\\
 Juba Ziani$^{2}$, Vidya Muthukumar$^{3,2}$   \\
\end{tabular}}}

\vspace*{.05in}

\begin{tabular}{c}
$^1$College of Computing, Georgia Institute of Technology\\
$^2$School of Industrial and Systems Engineering, Georgia Institute of Technology\\
$^3$H. Milton Stewart School of Electrical and Computer Engineering, Georgia Institute of Technology\\
  \texttt{\{gwang369,krishna.acharya,llakshmi3,jziani3,vmuthukumar8\}@gatech.edu}
\end{tabular}

\vspace*{.2in}
\date{}

\end{center}
\begingroup
\renewcommand{\thefootnote}{}
\footnotetext{A preliminary version of this paper has been accepted for publication at ALT 2026.}
\endgroup

\begin{abstract}
We conduct a comprehensive analysis of the discrete-time exponential-weights dynamic with a constant step size on all \emph{general-sum and symmetric} $2 \times 2$ normal-form games, i.e. games with $2$ pure strategies per player, and where the ensuing payoff tuple is of the form $(A,A^\top)$ (where $A$ is the $2 \times 2$ payoff matrix corresponding to the first player). Such symmetric games commonly arise in real-world interactions between ``symmetric" agents who have identically defined utility functions --- such as Bertrand competition and multi-agent performative prediction, and display a rich multiplicity of equilibria despite the seemingly simple setting.
Somewhat surprisingly, we show through a first-principles analysis that the exponential weights dynamic, which is popular in online learning, converges in the last iterate for such games regardless of initialization with an appropriately chosen step size. For certain games and/or initializations, we further show that the convergence rate is in fact exponential and holds for any step size.

We illustrate our theory with extensive simulations and applications to the aforementioned game-theoretic interactions. In the case of multi-agent performative prediction, we formulate a new ``mortgage competition" game between lenders (i.e. banks) who interact with a population of customers, and show that it fits into our framework.
\end{abstract}

\section{Introduction}\label{sec:intro}

Game theory is a classical mathematical framework in which the formal study of multi-agent non-cooperative interaction is situated~\citep{myerson2013game,karlin2017game}. 
In a ``one-shot'' normal-form game, if the players have full knowledge of the game, are selfish and rational, and no mediator is present, they are expected to play a \emph{Nash equilibrium} (NE) strategy~\citep{myerson2013game} (see also Section~\ref{sec:prelim} for a self-contained definition).
Emerging applications of game theory abound in modern machine learning and artificial intelligence, such as the training of generative adversarial networks and robust models~\citep{daskalakis2018training,madry2018towards}, multi-agent reinforcement learning~\citep{zhang2021multi}, and situations where the pipeline for data may respond \emph{strategically} to ML model deployment (commonly called \emph{strategic classification}~\citep{hardt2016strategic} or more recently \emph{performative prediction}~\citep{perdomo2020performative}). 
Here, agents are actively \emph{learning} about the game being played, often through repeated interaction.
This raises the natural questions of: a) what learning algorithms should the agents use when interacting with one another, and b) do the ensuing \emph{game dynamics} converge \emph{in a day-to-day sense} to some Nash equilibrium?

These questions have a decades-long history in an interdisciplinary community spanning economics~\citep{foster1997calibrated,foster1999regret,hart2000simple,hart2005adaptive}, online learning~\citep{hannan1957approximation,freund1999adaptive,cesa2006prediction}, control and dynamical systems~\citep{hofbauer1998evolutionary,fox2013population,bailey2018multiplicative}. 
In fact, game dynamics remain generally difficult to analyze and characterize, and most positive results in the literature are limited in scope. For example, it is known that a very specific class of dynamics satisfying \emph{vanishing (external) regret}---initially introduced by~\cite{hannan1957approximation}\footnote{We discuss the implications of stronger notions of regret, such as internal/swap regret~\citep{blum2007external,greenwald2003general}, in Section~\ref{sec:related-work}.}---played by one player, combined with best-response play by the other, converges to a Nash equilibrium in the special case of zero-sum games. When both players use no-regret strategies, the dynamics are only guaranteed to converge to a broader notion of equilibrium, a \emph{coarse correlated equilibrium}~\citep{cesa2006prediction} (CCE), in general-sum games. Importantly, these convergence guarantees hold only in a \emph{time-average} sense: i.e. the average of the strategies over time converges. 

This provides no guarantee for the actual play at any given round.
In fact, guarantees for convergence in a \emph{last-iterate} sense---where the strategy at each individual time step $t$ eventually approaches equilibrium----are far more limited. Most results in this direction are negative, showing non-convergence or even chaotic behavior in broad classes of games. Our object of focus, the \emph{Exponential Weights} dynamic~\citep{cesa1997use} (commonly called the replicator dynamics in continuous time as surveyed in~\cite{hofbauer1998evolutionary}), has been shown to be oscillatory for simple zero/constant-sum games~\citep{bailey2018multiplicative,cheung2018multiplicative} and more generally can even be chaotic, both for zero-sum and general-sum games~\citep{cheung2019vortices,palaiopanos2017multiplicative,chotibut2021family,cheung2020chaos,cheung2021evolution}. While an optimistic variant of exponential weights (and/or, relatedly, an extragradient method) is commonly employed to mitigate the oscillations in the zero-sum case~\citep{daskalakis2019last,cai2022finite}, a fix for non-zero-sum games remains elusive.
In fact, there exist general-sum finite games for which \emph{any} uncoupled, ``first-order" dynamic would not converge~\citep{hart2003uncoupled}. 

In this paper, we establish a new set of surprisingly positive last-iterate convergence results for a previously overlooked class of games: two-player symmetric games with two actions per player, i.e. \emph{$2 \times 2$ symmetric games}.
Formally, these are normal-form games with 2 players, 2 actions, and where the payoff matrices of players $1$ and $2$, respectively $A$ and $B$, satisfy\footnote{Note that this is a zero-sum game when $A$ is a skew-symmetric matrix.} $B = A^\top$. 
Symmetric games describe many practical scenarios of interest. For example, applications such as road traffic and packet routing on networks can be modeled as congestion games---where strategies correspond to path choice, and ``traffic" affects all participants equally. These games, in their atomic form, are known to correspond to symmetric games~\citep{rosenthal1973class}. Symmetry also arises in Bertrand competition games, which inspire the canonical setting of multi-agent performative prediction~\citep{narang2023multiplayer} as well as our practically motivated case study to follow.

We center our study on the \emph{Exponential Weights} (EW) algorithm, one of the most celebrated no-regret learning rules~\citep{cesa1997use}. EW is widely used in online learning applications (both full-information and bandit learning) due to its simplicity and optimal (worst-case) regret guarantees. Importantly, we identify a significantly stronger form of convergence compared to the related literature: we are able to show not only that our dynamics i) converge in a \emph{last-iterate} (as opposed to the standard time-average) sense in $2\times 2$ symmetric games, but also that ii) they converge to specific Nash equilibria of our game, as opposed to the more general class of CCEs.

\paragraph{Our contributions:} We conduct a systematic and comprehensive study of the exponential weights dynamic in the simple but non-trivial case of $2 \times 2$ symmetric games defined above.
Note that even this simple setting admits a multiplicity of Nash and correlated equilibria --- see Table~\ref{Table:NEs} for a complete characterization.
In particular, we show the following results:
\begin{itemize}
\vspace{-2mm}
    \item Section  \ref{Section:convergecnce} shows that, surprisingly, the exponential weights dynamic always converges to either a pure or a strictly mixed NE for \emph{any} ``non-degenerate" initialization and on \emph{any} ``non-degenerate" $2 \times 2$ symmetric game\footnote{See Assumption~\ref{as:nondegenerate} for formal assumptions. A ``degenerate" initialization would be one for which the dynamics never move, and a ``degenerate" game would be one for which any tuple of strategies constitutes a NE.}. In the cases where convergence to a pure NE is achieved, the step size/learning rate used in exponential weights can be any positive constant.
    When convergence to a strictly mixed NE is achieved, the step size needs to be smaller than a positive constant that depends on the parameters of the game. We obtain these results by conducting an intricate, first-principles, case-by-case analysis of the exponential weights dynamic. 
\vspace{-2mm}
    \item In Section~\ref{sec:applications}, we interpret our results in the context of multi-agent performative prediction between $2$ agents. We design a stylized, but canonical model of mortgage competition between two banks (or ``learners"), where the strategies consist of interest rates and credit score thresholds, and the utilities of the banks depend on the distribution of the customers who choose each of the banks, as well as the probability of repaying their loans.
    Interestingly, our results in Section~\ref{Section:convergecnce} imply that convergence to \emph{asymmetric} pure NE can happen (despite the fact that a symmetric NE, either pure or mixed, always exists for symmetric games).
\end{itemize}
We next set up the formal framework for $2 \times 2$ symmetric games, and present our main results in Sections~\ref{Section:convergecnce} and~\ref{sec:applications}.
We provide a detailed treatment of related work in Section~\ref{sec:related-work}.



\section{Preliminaries}\label{sec:prelim}

We begin by introducing the basic setting of $2$-player, $2$-action symmetric games.
We then briefly review basic equilibrium concepts, and present (for completeness) the Nash and correlated equilibrium landscape of this setting.

\paragraph{Two-player-two-action symmetric games:} We study a general two-player, two-action symmetric game.  
Each player $i \in \{1,2\}$ selects one of two available actions (i.e. pure strategies), which we denote by $\theta_1$ and $\theta_2$.  
The corresponding payoffs are summarized in Table~\ref{tab:utilities111}. 
Note that the $2 \times 2$ utility matrix $A$ is parameterized by fixed constants $a, b, c, d \in \mathbb{R}$. We also define
\begin{align}\label{eq:eps-defs}
\textstyle \epsilon_1 := a - b \text{ and } \epsilon_2 = c - d
\end{align}
as two shorthand parameters---we will see that the signs of these parameters critically dictate both the equilibrium landscape and the behavior of exponential weights.

For any pair of indices $m,n \in \{1,2\}$, let $u_i(\theta_m, \theta_n)$ denote the payoff of player $i\in\{1,2\}$ when player 1 chooses $\theta_m$ and player 2 chooses $\theta_n$. 
Note that by symmetry of the game, we have $u_1(\theta_m, \theta_n) \;=\; u_2(\theta_n, \theta_m)$ for all $m,n \in \{1,2\}$.
We further define a pair of mixed strategies $(p_1,p_2)$ where $p_i := (p_{i,1},p_{i,2})$ denotes a probability distribution (on the $2$-dimensional simplex denoted by $\Delta_2$) over pure strategies $\theta_1$ and $\theta_2$ respectively for player $i \in \{1,2\}$. Next, we recall the following basic equilibrium concepts.

\begin{table}[t]
\centering
\scalebox{0.9}{ 
\begin{tabular}{c}
\textbf{Player 1 utility matrix $A$} \\
\begin{tabular}{|c|c|c|}
\Xhline{1.2pt}
\diagbox{\textbf{Player 2}}{\textbf{Player 1}} & \textbf{$\theta_1$} & \textbf{$\theta_2$} \\
\Xhline{1.2pt}
\textbf{$\theta_1$} & $a$ & $b$ \\
\hline
\textbf{$\theta_2$} & $c$ & $d$ \\
\Xhline{1.2pt}
\end{tabular}
\end{tabular}
\hspace{2cm}
\begin{tabular}{c}
\textbf{Player 2 utility matrix $B = A^\top$} \\
\begin{tabular}{|c|c|c|}
\Xhline{1.2pt}
\diagbox{\textbf{Player 2}}{\textbf{Player 1}} & \textbf{$\theta_1$} & \textbf{$\theta_2$} \\
\Xhline{1.2pt}
\textbf{$\theta_1$} & $a$ & $c$ \\
\hline
\textbf{$\theta_2$} & $b$ & $d$ \\
\Xhline{1.2pt}
\end{tabular}
\end{tabular}
} 
\caption{Utility matrices of Player 1 and Player 2}
\label{tab:utilities111}
\end{table}

\begin{definition}[Nash equilibrium]
A pair of strategies \((\theta_{m^*}, \theta_{n^*})\) constitutes a \emph{pure Nash equilibrium} (pure NE) iff the following conditions hold::
\[
u_1(\theta_{m^*}, \theta_{n^*}) \geq \max_{\theta'\in\{\theta_1,\theta_2\}} u_1(\theta', \theta_{n^*}), \quad \text{and} \quad u_2(\theta_{m^*}, \theta_{n^*}) \geq \max_{\theta' \in \{\theta_1,\theta_2\}} u_2(\theta_{m^*}, \theta').
\]
\noindent A pair of mixed strategies \((p_1^*, p_2^*)\) constitutes a \emph{mixed Nash equilibrium} (mixed NE) iff the following two conditions hold:  
\begin{align*}
&\E_{M \sim p^*_1, N \sim p^*_2}[u_1(\theta_M, \theta_N)] \geq \max_{p_1 \in \Delta_2} \E_{M' \sim p_1, N \sim p^*_2}[u_1(\theta_{M'},\theta_N)],\\
&\E_{M \sim p^*_1, N \sim p^*_2}[u_2(\theta_M, \theta_N)] \geq \max_{p_2 \in \Delta_2} \E_{M \sim p^*_1, N' \in p_2}[u_2(\theta_M, \theta_N)].
\end{align*}
\end{definition}

In Nash equilibria, players pick their actions simultaneously and independently of each other. 
While our central focus in this paper is convergence to NE, we also briefly consider the broader concept of \emph{correlated equilibria} (CE) of a game.
This concept was introduced by~\cite{aumann1987correlated} and, informally, uses a signaling device that recommends some action $(\theta_M,\theta_N)$ to the players where $(M,N)$ is now drawn from a \emph{joint distribution} (for example, think of a traffic signal at an intersection).
The signal induces implicit correlation between the players.
At a correlated equilibrium, no player benefits from deviating from the recommended action in expectation. We provide a detailed review of the definition of CE, as well as characterize the set of CE for our game, in Appendix \ref{section:Characterization of CE}.    

\begin{table}[t]
\centering
\scalebox{0.9}{%
\begin{tabular}{|c|c|c|c|}
\hline
Condition & Pure NE & Strict Mixed-NE & CE \\
\hline
$\epsilon_1<0,\epsilon_2<0$ & $(\theta_2,\theta_2)$ & $-$ & $(\theta_2,\theta_2)$ \\
\hline
$\epsilon_1>0, \epsilon_2>0$ & $(\theta_1,\theta_1)$ & $-$ & $(\theta_1,\theta_1)$\\
\hline
$\epsilon_1<0,\epsilon_2>0$ & $(\theta_1,\theta_2)$, $(\theta_2,\theta_1)$ & $\left(p_{\text{SE}},p_{\text{SE}}\right)$ & Given in \eqref{eqn:CEl0g0xx}    \\
\hline
$\epsilon_1>0,\epsilon_2<0$ & $(\theta_1,\theta_1)$, $(\theta_2,\theta_2)$  & $\left(p_{\text{SE}},p_{\text{SE}}\right)$ &  Given in \eqref{eqn:CEg0l0xx} \\
\hline
$\epsilon_1=0,\epsilon_2<0$ & $(\theta_1,\theta_1)$, $(\theta_2,\theta_2)$  & $-$ & Given in \eqref{eqn:2133121}\\
\hline
$\epsilon_1=0,\epsilon_2>0$ & $(\theta_1,\theta_1)$, $(\theta_2,\theta_1)$, $(\theta_1,\theta_2)$  & $(\theta_1,p), (p,\theta_1)$ & Given in \eqref{eqn:ddfdgjndauih}\\
\hline
\end{tabular}
}
\caption{NE under different conditions. Recall that $\epsilon_1=a-b,$ while $\epsilon_2=c-d$.  $p\in\Delta_2$ stands for any mixed strategy in the $2$-dimensional simplex. We define $p_{\text{SE}}=\left(\frac{|\epsilon_2|}{|\epsilon_1|+|\epsilon_2|},\frac{|\epsilon_1|}{|\epsilon_1|+|\epsilon_2|}
\right)$.}
\label{Table:NEs}
\vspace{-0.5cm}
\end{table}

\paragraph{Landscape of Nash and correlated equilibria:}

We now present the NE landscape of our canonical setting of symmetric $2 \times 2$ games in Table \ref{Table:NEs} under different conditions, dictated by the signs of $\epsilon_1 := a - b$ and $\epsilon_2:= c - d$.
(Proving that the corresponding entries in the table are correct is a simple exercise in undergraduate game theory that we leave to the reader.)
We make a few observations about the landscape of NEs:
\begin{itemize}
\vspace{-2mm}
\item When the signs of $\epsilon_1$ and $\epsilon_2$ match (both strictly positive or negative), there is a unique, \emph{symmetric}, pure NE. It is also easy to verify that these cases have a dominated strategy for both players (e.g. in the case where $\epsilon_1, \epsilon_2 < 0$, $\theta_1$ is dominated by $\theta_2$ for both players). Moreover, as we will subsequently see the set of CEs is also a singleton set comprising of the unique pure NE. Section~\ref{Section:convergecnce} will demonstrate that these are the ``easy" cases for convergence to NE.
\vspace{-2mm}
\item The cases where the signs of $\epsilon_1$ and $\epsilon_2$ \emph{do not} match (including cases where $\epsilon_1 = 0$ or $\epsilon_2 = 0$) are much more interesting as they admit a multiplicity of pure NE as well as potentially \emph{strictly} mixed NE. The set of CEs are also more complex in these cases, rendering a generic no-regret convergence analysis inadequate for our purposes. Proving convergence for these more complex cases is the main novelty in our analysis.
\vspace{-2mm}
\item We do not consider the case of $\epsilon_1 = \epsilon_2 = 0$ as this corresponds to a degenerate game where any tuple of mixed strategies would be a valid NE; see Assumption~\ref{as:nondegenerate} for more details.
\end{itemize}

\section{Last-iterate Convergence of Exponential Weights} 
\label{Section:convergecnce}
\begin{center}
\scalebox{0.9}{%
\begin{minipage}{\linewidth}
\begin{algorithm}[H] 
\caption{Online Learning Dynamic through Exponential Weights}
\begin{algorithmic}
\label{alg:Hedge}
\STATE \textbf{Input}: Step size $\eta>0$, Initialization $p_1^{(1)},p_2^{(1)}\in\Delta_2$
\FOR{$t=1,2,3\dots$}
\STATE \emph{Decision:} Each player $i\in\{1,2\}$ plays $\theta_j$ with probability $p_{i,j}^{(t)}$ for $j\in\{1,2\}$;
\STATE \emph{Update Rule:} Players update their probability distributions to $p_{1}^{(t+1)},~p_{2}^{(t+1)}$ as follows:
\begin{align*}
&p^{(t+1)}_{1,j}\propto p^{(t)}_{1,j} \cdot \exp\left(\eta \sum_{k=1}^2 p^{(t)}_{2,k} \cdot u_1\left(\theta_j,\theta_k\right)\right),~~\forall j\in\{1,2\};\\
&p^{(t+1)}_{2,j}\propto p^{(t)}_{2,j} \cdot \exp\left(\eta \sum_{k=1}^2 p^{(t)}_{1,k} \cdot u_2\left(\theta_k,\theta_j\right)\right),~~\forall j\in\{1,2\}.
\vspace{-1cm}
\end{align*}
\ENDFOR
\end{algorithmic}
\end{algorithm}
\end{minipage}%
}
\end{center}

We now characterize the last-iterate convergence of the exponential weights algorithm, formally described in Algorithm~\ref{alg:Hedge}.
We use $p_i^{(t)} \in \Delta_2$ to denote the mixed strategy used by player $i \in \{1,2\}$ at time step $t \geq 1$.
Thus, when both players play exponential weights, it induces a sequence of mixed strategy tuples $\{p_1^{(t)},p_2^{(t)}\}_{t \geq 1}$.
Our hope is that this sequence will eventually converge to some NE.

The most salient step in the exponential weights algorithm is the update from $t \to t+1$, which proceeds as follows. At each time step $t$, each player $i \in \{1,2\}$ randomly picks an action according to their current probability distribution $p_i^{(t)}$. Then, each player updates their weight on each action in a manner that is exponentially proportional to the expected utility derived by playing that action (which in turn depends on the mixed strategy of the other player). For example, considering round $t$ and candidate action $\theta_j$, player $1$ obtains an expected utility of $\sum_{k=1}^2 p^{(t)}_{2,k} \cdot u_1\left(\theta_j,\theta_k\right)$ since player $2$'s probability distribution is given by $p_2^{(t)}$. The speed of this update is controlled by a \emph{constant} step size $\eta>0$.
(As of now, our proofs, particularly in the more complex cases, do require both players to use the same step size; relaxing this requirement is an important future direction.)

Recall that we define the shorthand parameters $\epsilon_1 := a - b, \epsilon_2 := c - d$.
Let us define the shorthand functional $
\Delta_i^{(t)} = p^{(t)}_{i,1}\epsilon_1 + p^{(t)}_{i,2}\epsilon_2$.
Before diving into the details, we first make the following very mild \emph{non-degeneracy} assumption about the games and the algorithm's initialization:

\begin{assu}\label{as:nondegenerate}
Throughout the paper, we assume: 1) $|\epsilon_1|+|\epsilon_2|>0$; 2) $p^{(1)}_i$ is not a pure strategy for either player $i\in\{1,2\}$; 3) the initial functionals $\Delta_1^{(1)}$ and $\Delta_2^{(1)}$ are both non-zero. 
\end{assu}

The assumptions above only exclude trivial cases that are straightforward to analyze. First, note that when $\epsilon_1=\epsilon_2=0$, the algorithm does not update---since all strategy profiles in this case are a mixed NE, the algorithm in fact ``converges'' (in a single time step) to some NE. 
Regarding 2), if both $p^{(1)}_1$ and $p^{(1)}_2$ are pure strategies, then the algorithm does not update; if, for example, $p^{(1)}_1$ is a pure strategy but $p^{(1)}_2$ is not, then $p^{(t)}_1$ remains the same for all $t$, while $p^{(t)}_2$ converges to the corresponding best response---clearly, the resulting tuple need not be a NE. 

Finally and regarding 3), if both $\Delta_1^{(1)}=\Delta_2^{(1)}=0$, the algorithm does not update; if only one of them is $0$, then by round $t=2$ both become non-zero and automatically satisfy Assumption~\ref{as:nondegenerate} thereafter. This is formalized in the following lemma, which is proved in Appendix \ref{append:proofoflemma111sdewq}. 

\begin{lemma}
\label{lem:::verysimpleobsercation}
Suppose $|\epsilon_1|+|\epsilon_2|>0$, and $p^{(1)}_1$ and $p^{(1)}_2$ are not pure strategies. We have the following: 
\begin{itemize}
    \item If $\Delta_1^{(1)}=\Delta_2^{(1)}=0,$ then $p^{(t)}_i$ remains the same for all $t\geq 1$ and $i\in\{1,2\}$.
    \item  If (without loss of generality) $\Delta_1^{(1)}=0$ and $\Delta_2^{(1)}\not=0$, then $\Delta_1^{(2)}, \Delta_2^{(2)}\not=0$. 
\end{itemize}
\end{lemma}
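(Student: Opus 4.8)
The plan is to unwind the exponential weights update from round $1$ to round $2$ and directly track how the functionals $\Delta_i^{(t)} = p^{(t)}_{i,1}\epsilon_1 + p^{(t)}_{i,2}\epsilon_2$ change. The key observation is that $\Delta_i^{(t)}$ is exactly the \emph{difference} in expected utilities between the two actions available to the relevant player: for player $2$ at round $t$, the exponent multiplying $p^{(t)}_{2,1}$ minus the exponent multiplying $p^{(t)}_{2,2}$ equals $\eta(p^{(t)}_{1,1} u_2(\theta_1,\theta_1) + p^{(t)}_{1,2} u_2(\theta_2,\theta_1)) - \eta(p^{(t)}_{1,1} u_2(\theta_1,\theta_2) + p^{(t)}_{1,2} u_2(\theta_2,\theta_2))$, which by $u_2(\theta_m,\theta_n)=u_1(\theta_n,\theta_m)$ and the entries of $A$ collapses to $\eta(p^{(t)}_{1,1}\epsilon_1 + p^{(t)}_{1,2}\epsilon_2)=\eta\Delta_1^{(t)}$; symmetrically the action-utility gap driving player $1$'s update at round $t$ is $\eta \Delta_2^{(t)}$. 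So the sign and magnitude of $\Delta_1^{(t)}$ controls whether player $2$'s distribution moves, and vice versa — the two functionals are coupled ``across'' players.

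For the \textbf{first bullet}, suppose $\Delta_1^{(1)}=\Delta_2^{(1)}=0$. Then in the round-$1\to 2$ update, the exponent for action $\theta_1$ equals the exponent for action $\theta_2$ for player $1$ (it is governed by $\Delta_2^{(1)}=0$), so after normalization $p^{(2)}_1 = p^{(1)}_1$; likewise $p^{(2)}_2 = p^{(1)}_2$ since player $2$'s update is governed by $\Delta_1^{(1)}=0$. Hence $\Delta_i^{(2)}=\Delta_i^{(1)}=0$ for both $i$, and the claim follows by an immediate induction on $t$: the hypothesis at round $t$ reproduces itself at round $t+1$ by the exact same computation. I would write this as one induction step.

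For the \textbf{second bullet}, assume $\Delta_1^{(1)}=0$ and $\Delta_2^{(1)}\neq 0$. Since $\Delta_1^{(1)}=0$, player $2$ does not move: $p^{(2)}_2 = p^{(1)}_2$, so $\Delta_2^{(2)} = \Delta_2^{(1)} \neq 0$ — that half is free. The substantive part is showing $\Delta_1^{(2)} \neq 0$. Here player $1$ \emph{does} move, because its update is driven by $\Delta_2^{(1)}\neq 0$: writing out the normalized update, $p^{(2)}_{1,1} = p^{(1)}_{1,1} e^{\eta c_1} / (p^{(1)}_{1,1} e^{\eta c_1} + p^{(1)}_{1,2} e^{\eta c_2})$ where $c_1 - c_2 = \Delta_2^{(1)} \neq 0$, and since $p^{(1)}_1$ is not pure (both coordinates strictly positive), $p^{(2)}_{1,1} \neq p^{(1)}_{1,1}$ strictly. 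Now $\Delta_1^{(2)} = p^{(2)}_{1,1}\epsilon_1 + p^{(2)}_{1,2}\epsilon_2 = \epsilon_2 + p^{(2)}_{1,1}(\epsilon_1 - \epsilon_2)$, an affine function of $p^{(2)}_{1,1}$; since $\Delta_1^{(1)} = \epsilon_2 + p^{(1)}_{1,1}(\epsilon_1-\epsilon_2) = 0$, if additionally $\Delta_1^{(2)}=0$ then subtracting gives $(p^{(2)}_{1,1}-p^{(1)}_{1,1})(\epsilon_1-\epsilon_2)=0$, forcing $\epsilon_1 = \epsilon_2$ (as $p^{(2)}_{1,1}\neq p^{(1)}_{1,1}$). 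But $\Delta_1^{(1)}=0$ with $\epsilon_1=\epsilon_2$ gives $\epsilon_1(p^{(1)}_{1,1}+p^{(1)}_{1,2})=\epsilon_1=0$, hence $\epsilon_1=\epsilon_2=0$, contradicting $|\epsilon_1|+|\epsilon_2|>0$. Therefore $\Delta_1^{(2)}\neq 0$, and combined with $\Delta_2^{(2)}\neq 0$ we are done.

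\textbf{Main obstacle.} There is no deep obstacle — the lemma is a bookkeeping exercise. The one point that needs care is the case analysis hidden in ``$p^{(1)}_1$ is not pure $\Rightarrow$ the update strictly moves it'': one must use both that the exponents genuinely differ ($\Delta_2^{(1)}\neq 0$) \emph{and} that both initial weights are strictly positive, and be careful that the degenerate coincidence $\epsilon_1=\epsilon_2$ is precisely what the non-degeneracy assumption $|\epsilon_1|+|\epsilon_2|>0$ together with $\Delta_1^{(1)}=0$ rules out. Making that chain of implications airtight (and noting the symmetric ``without loss of generality'' swap of the roles of players $1$ and $2$) is the only thing to be vigilant about.
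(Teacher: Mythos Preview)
Your proof is correct and follows essentially the same approach as the paper's. The paper phrases the argument in terms of the ratio $r_i = p_{i,1}/p_{i,2}$ and the function $f(r) = \frac{\epsilon_1 r + \epsilon_2}{1+r}$ (so that $\Delta_i = f(r_i)$ and strict monotonicity of $f$ is equivalent to $\epsilon_1 \neq \epsilon_2$), while you work directly with the affine dependence $\Delta_1 = \epsilon_2 + p_{1,1}(\epsilon_1 - \epsilon_2)$; these are the same argument in different coordinates, and the key step --- deducing $\epsilon_1 \neq \epsilon_2$ from $\Delta_1^{(1)}=0$ together with $|\epsilon_1|+|\epsilon_2|>0$ --- is identical.
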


\subsection{Main result: convergence of the last iterate}

Our main result, Theorem 1 (Parts I-IV), fully characterizes the convergence of Algorithm \ref{alg:Hedge}. 
Our results in a nutshell are informally summarized in Table~\ref{Table:NEconvergence}.

\begin{table}[t]
\centering
\scalebox{0.88}{
\begin{tabular}{|c|c|c|c|}
\hline
Utility Condition & Initialization & Nature of result & Requirement on $\eta$ \\
\hline
r1. $\epsilon_1<0,\epsilon_2<0$ & Any & Exponential convergence to pure NE & None \\
\hline
r2. $\epsilon_1>0, \epsilon_2>0$ & Any & Exponential convergence to pure NE & None\\
\hline
r3. $\epsilon_1<0,\epsilon_2>0$ & Opposite-sign & Exponential convergence to one of pure NEs & None   \\
\hline
r4. $\epsilon_1<0,\epsilon_2>0$ & Same sign & Asymptotic convergence to one of pure NEs  & None    \\
\hline
r5. $\epsilon_1<0,\epsilon_2>0$ & Identical & Asymptotic convergence to strictly mixed NE & $< \frac{8}{|\epsilon_1| + |\epsilon_2|}$    \\
\hline
r6. $\epsilon_1>0,\epsilon_2<0$ & Same sign  & Exponential convergence to one of the pure NEs &  None \\
\hline
r7. $\epsilon_1>0,\epsilon_2<0$ & Opposite signs & Asymptotic convergence to one of the NEs & $< \frac{8}{|\epsilon_1| + |\epsilon_2|}$ \\
\hline
r8. $\epsilon_1=0,\epsilon_2<0$ & Any  & Exponential convergence to pure NE & None \\
\hline
r9. $\epsilon_1=0,\epsilon_2>0$ & Identical  & Asymptotic convergence to symmetric pure NE & None \\
\hline
r10. $\epsilon_1=0,\epsilon_2>0$ & Non-identical  & Asymptotic convergence to set of mixed NE & None\\
\hline
\end{tabular}
}
\caption{Summary of our convergence results. This table is best read along with Table~\ref{Table:NEs}, which summarizes the NE characterization for each case.}
\label{Table:NEconvergence}
\vspace{-0.4cm}
\end{table}

We start with the most \emph{naive} case, where $\epsilon_1$ and $\epsilon_2$ have the same sign.
This case is naive because it implies the existence of a common dominating action for both players, which also determines the unique pure NE.
The following result, which is proved in Appendix \ref{appdnex:profoofpar1}, shows that in this case the exponential weights dynamic converges to the corresponding pure NE at an exponentially fast rate.

\begin{theoremfour}{Part I}
\label{thm:same-sign}
Suppose $\epsilon_1$,  $\epsilon_2$ are non-zero and have the same sign. Then, Algorithm \ref{alg:Hedge} converges to a) the pure NE  $(\theta_2,\theta_2)$ in the case where $\epsilon_1, \epsilon_2 < 0$, and b) the pure NE $(\theta_1,\theta_1)$ in the case where $\epsilon_1, \epsilon_2 > 0$, at an exponential rate for any step size $\eta > 0$ (see Theorem~\ref{thm:partone-formal} in Appendix~\ref{appdnex:profoofpar1} for the explicit constant in the rate).
\end{theoremfour}

Next, we consider the more interesting and non-trivial cases where $\epsilon_1$ and  $\epsilon_2$ are non-zero and have opposite signs. 
We begin with the case where $\epsilon_1 < 0$ and $\epsilon_2 > 0$.
Recall from Table~\ref{Table:NEs} that this case admits two pure \emph{asymmetric} NEs $(\theta_1,\theta_2)$ and $(\theta_2,\theta_1)$, and there is also a strictly mixed NE given by $(p_{\text{SE}},p_{\text{SE}})$, where $p_{\text{SE}}=\left(\frac{|\epsilon_2|}{|\epsilon_1|+|\epsilon_2|},\frac{|\epsilon_1|}{|\epsilon_1|+|\epsilon_2|}
\right)$.  
The following result, which is proved in Appendix~\ref{appendix"thm:e1l0e2g0}, shows a rich variety of convergence behaviors of the exponential weights dynamic depending on the initialization. We also illustrate the results through simulation in Figure \ref{figure1111111111}.

\begin{figure}
  \centering

  \begin{subfigure}{0.45\textwidth}
    \includegraphics[width=\linewidth]{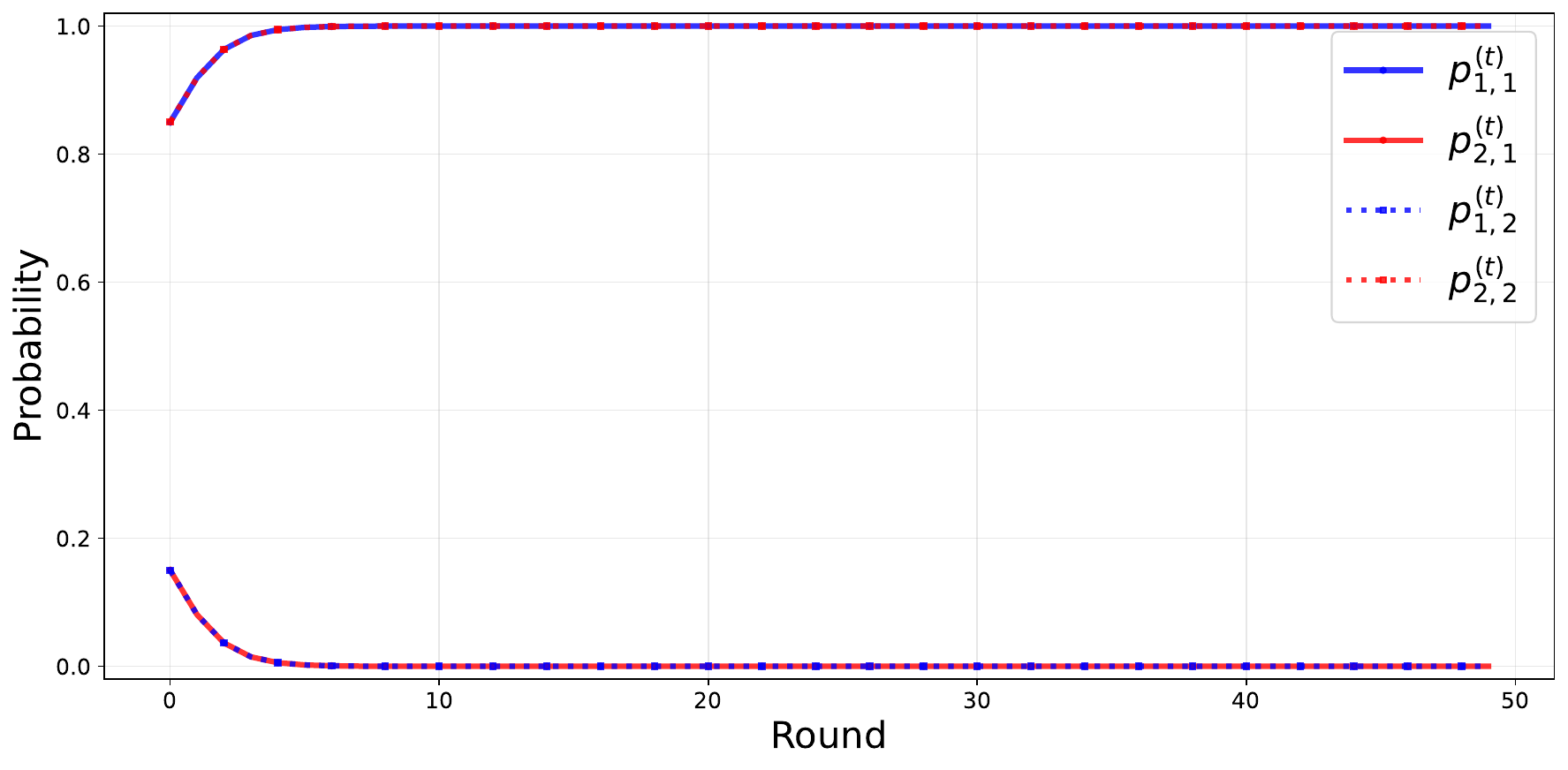}
    \caption*{r3. Opposite-sign initialization}
  \end{subfigure}
  \begin{subfigure}{0.45\textwidth}
    \includegraphics[width=\linewidth]{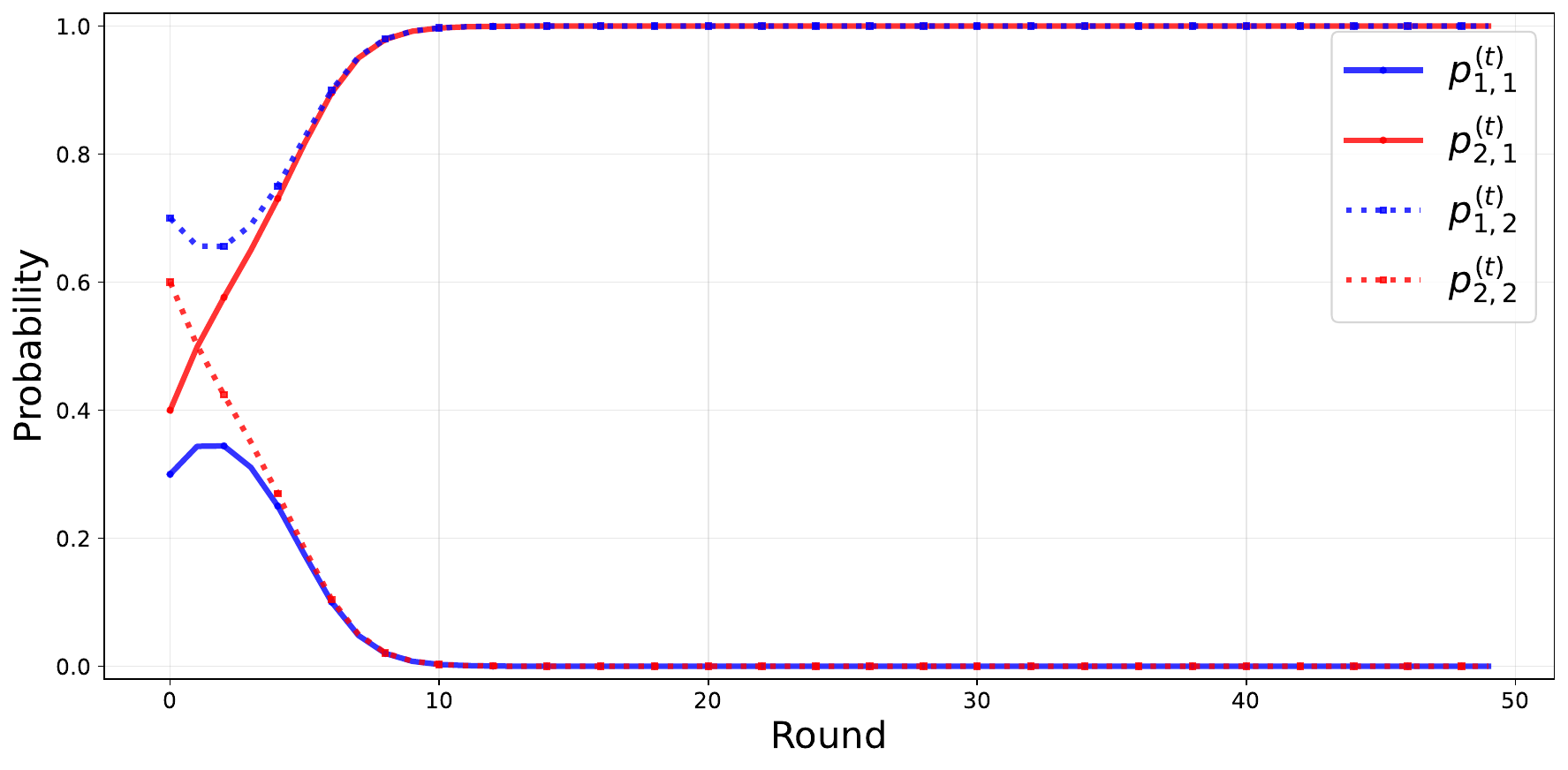}
    \caption*{r4. Same-sign initialization}
  \end{subfigure}
  \begin{subfigure}{0.45\textwidth}
    \includegraphics[width=\linewidth]{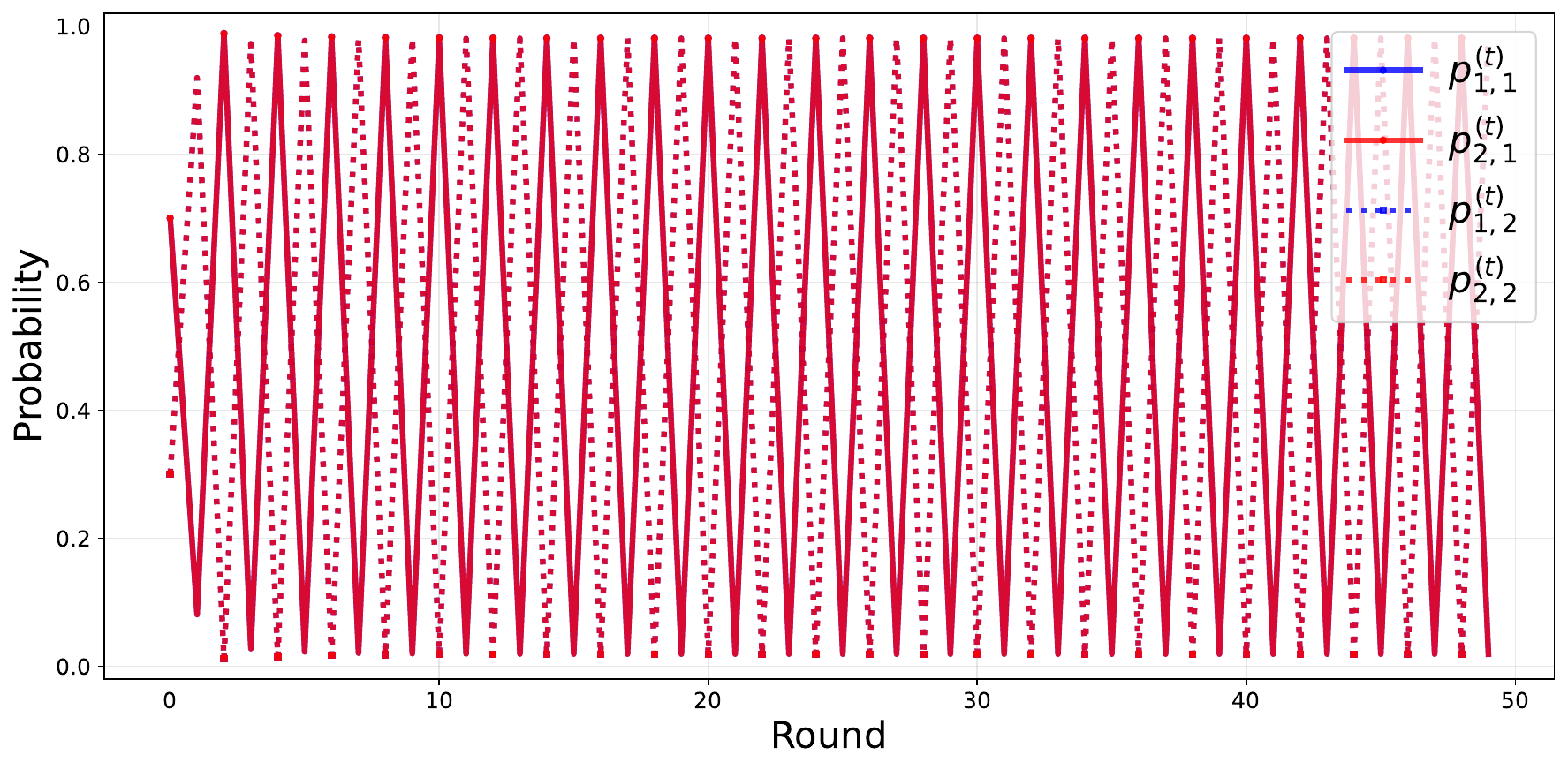}
    \caption*{r5. Identical initialization, with $\eta>\frac{8}{|\epsilon_1|+|\epsilon_2|}$}
  \end{subfigure}
  \begin{subfigure}{0.45\textwidth}
    \includegraphics[width=\linewidth]{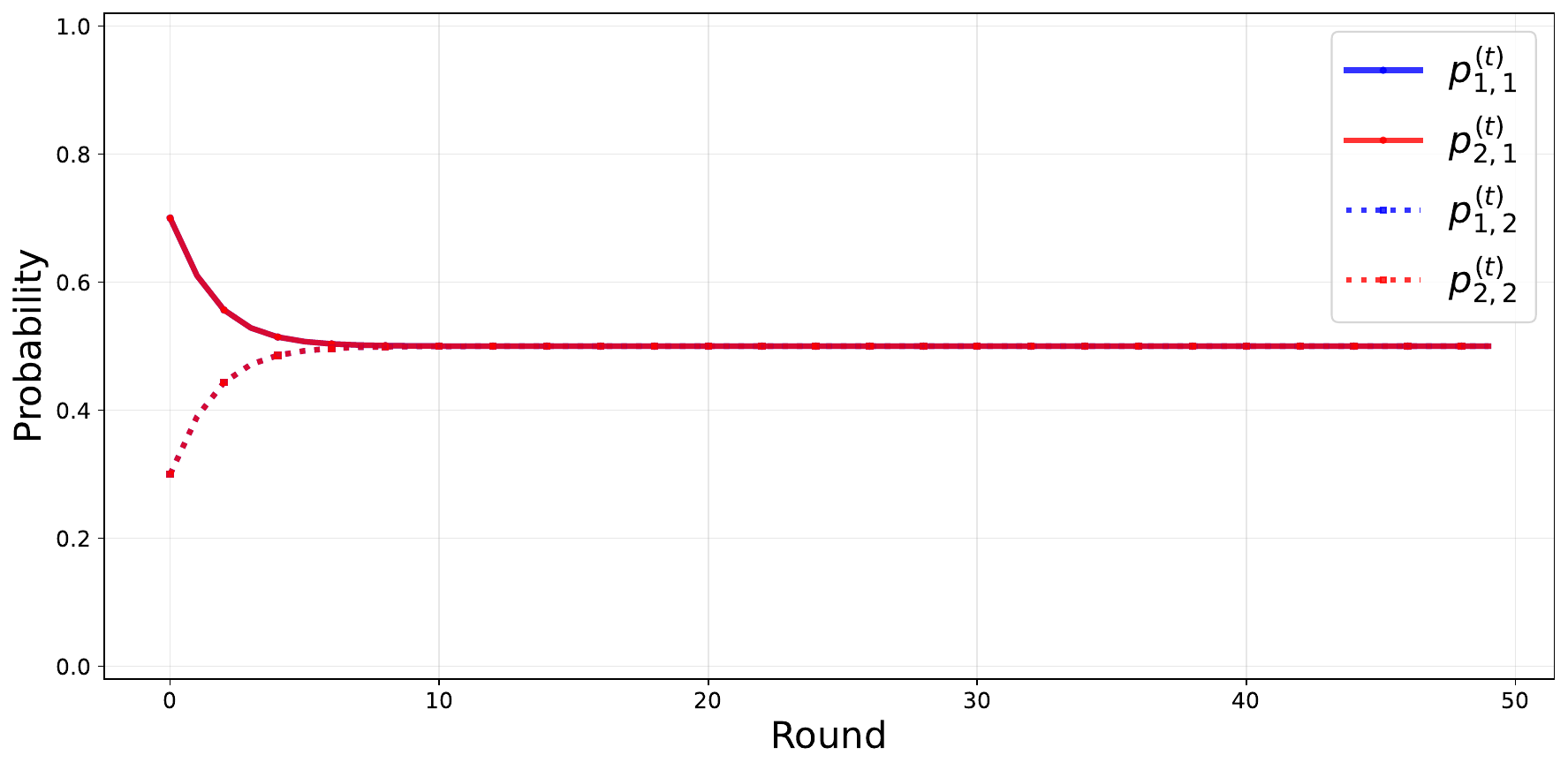}
    \caption*{r5. Identical initialization, with $\eta<\frac{8}{|\epsilon_1|+|\epsilon_2|}$}
  \end{subfigure}

  \caption{Simulation results for Theorem \ref{thm:e1l0e2g0}. The row in Table~\ref{Table:NEconvergence} to which each case corresponds is marked in the sub-captions. Corresponding simulations for Parts I, III and IV are in the appendices due to space limitations.}
  \vspace{-0.4cm}
  \label{figure1111111111}
\end{figure}

\begin{theoremfour}{Part II}
\label{thm:e1l0e2g0}  Suppose $\epsilon_1<0$, and $\epsilon_2>0$. Then:
\begin{itemize}
    \item \vspace{-2mm} (Opposite-sign initialization) If $\Delta_1^{(1)}$ and $\Delta_2^{(1)}$ have different signs, then Algorithm \ref{alg:Hedge} converges to one of the pure NEs $(\theta_1,\theta_2)$ or $(\theta_2,\theta_1)$ at an exponential rate (see Theorem~\ref{thm:parttwo-easycase-formal} in Appendix~\ref{appendix"thm:e1l0e2g0} for the explicit constant in the rate);
    \item \vspace{-2mm} (Same-sign but not identical initialization) If $\Delta_1^{(1)}$ and $\Delta_2^{(1)}$ have the same sign and $\Delta_{1}^{(1)}\not=\Delta_{2}^{(1)}$, then Algorithm \ref{alg:Hedge} converges asymptotically to one of the pure NEs $(\theta_1,\theta_2)$ or $(\theta_2,\theta_1)$;
    \item \vspace{-2mm}(Identical initialization) If $\Delta_1^{(1)}=\Delta_2^{(1)}$, and $\eta\in\left(0,\frac{8}{|\epsilon_1|+|\epsilon_2|}\right)$, then Algorithm \ref{alg:Hedge} converges to the strictly mixed NE $(\frac{\epsilon_2}{|\epsilon_1|+\epsilon_2},\frac{|\epsilon_1|}{|\epsilon_1|+\epsilon_2})$ asymptotically.  
\end{itemize}
\end{theoremfour}
Interestingly, Theorem~\ref{thm:e1l0e2g0} shows that the algorithm always converges to some pure NE when the initialization functionals $\Delta^{(1)}_1$ and $\Delta^{(1)}_2$ differ, but requires a sufficiently small step size $\eta \in \left(0, \frac{8}{|\epsilon_1| + |\epsilon_2|}\right)$ if the initialization is identical.
Note that this requirement on the step size is, however, relatively mild: if we normalized the utilities to be in $[-1,1]$, we have $|\epsilon_1|, |\epsilon_2| \leq 2$, and so the condition reduces to $\eta < 2$. For $\eta > \frac{8}{|\epsilon_1| + |\epsilon_2|}$, we also provide a simple example where the algorithm oscillates between two pairs of mixed strategies that are not mixed NE for such prohibitively large values of $\eta$. Moreover, in the special case of ($2 \times 2$) congestion games, \cite[Section 4]{palaiopanos2017multiplicative} provides counterexamples for specific large values of $\eta$, which we show exceed the threshold ($\eta > \frac{8}{|\epsilon_1| + |\epsilon_2|}$). { Finally, \cite{chotibut2021family} establishes last-iterate convergence of the multiplicative weights algorithm for $2 \times 2$ congestion games, which constitute a special case of this class of games with $\epsilon_1 = -\epsilon_2$. Therefore, our results recover their conclusion for congestion games.}

Next, we consider the case where $\epsilon_1>0$ and $\epsilon_2<0$. Recall that in this case the pure NEs are $(\theta_1,\theta_1)$ and $(\theta_2,\theta_2)$. There also exists a strictly mixed NE in the form of $(p_{\text{SE}},p_{\text{SE}})$.
The following result, which is proved in Appendix \ref{appendix:thm:theoppositcase}, again shows a rich variety of convergence behaviors.

\begin{theoremfour}{Part III}
\label{thm:theoppositcase} 
Suppose $\epsilon_1>0$, and $\epsilon_2<0$. Then: 
\begin{itemize}
    \item \vspace{-2mm} (Same Sign initialization) If $\Delta_{1}^{(1)}$ and $\Delta_{2}^{(1)}$ have the same sign, then Algorithm \ref{alg:Hedge} converges to one of the pure NEs at an exponential rate (see Theorem~\ref{thm:partthree-easycase-formal} in Appendix~\ref{appendix:thm:theoppositcase} for the explicit constant in the rate);
    \item \vspace{-2mm} (Opposite Sign initialization) If  $\Delta_{1}^{(1)}$ and $\Delta_{2}^{(1)}$ have different signs, and $\eta\in\left(0,\frac{8}{|\epsilon_1|+|\epsilon_2|}\right)$, then Algorithm \ref{alg:Hedge} converges to either a pure NE or  strictly mixed-NE asymptotically. 
\end{itemize}
\end{theoremfour}
The above conclusion is similar, but not identical to its counterpart in Theorem \ref{thm:e1l0e2g0}. On one hand, when $\Delta^{(1)}_{1}$ and $\Delta^{(1)}_{2}$ have the same sign, the algorithm converges to a pure NE. On the other hand, when $\Delta^{(1)}_{1}$ and $\Delta^{(1)}_{2}$ have different signs, the algorithm can converge either to the strictly mixed NE or pure NEs, but requires a sufficiently small step size (analogous to the case of identical initialization in Part II).
Interestingly, the proofs of Theorems~\ref{thm:theoppositcase} and~\ref{thm:e1l0e2g0}, proceed differently, despite the seeming symmetry in the cases $(\epsilon_1 < 0, \epsilon_2 > 0)$ and $(\epsilon_1 > 0, \epsilon_2 < 0)$.

Finally, we consider the case where one of $\epsilon_1$ and  $\epsilon_2$ is equal to zero. 
This may seem like a corner case (and is indeed a set of measure zero in the set of all $2 \times 2$ symmetric games), but we can handle it in our framework.
Without loss of generality, we assume $\epsilon_1=0$. (Note that the other case ($\epsilon_2=0$) is purely symmetric to this case, by just switching $\theta_1$ and $\theta_2$.) Recall that in this case, the pure NEs are $(\theta_1,\theta_1)$, $(\theta_1,\theta_2)$ and $(\theta_2,\theta_1)$, while the mixed NEs are $(\theta_1,p)$ and $(p,\theta_1)$ for any $p\in\Delta_2$. 
The following result, which is proved in Appendix \ref{appendix:thm:forcasewithonezero}, shows convergence to different classes of NEs depending on the initialization (and does not possess any requirements on the step size).

\begin{theoremfour}{Part IV}
\label{thm:forcasewithonezero}
Suppose $\epsilon_1=0$. Then:
\begin{itemize}
  \item \vspace{-2mm} If $\epsilon_2<0$, then Algorithm \ref{alg:Hedge} converges to the pure NE $(\theta_2,\theta_2)$ at an exponential rate;
  \item \vspace{-2mm} If $\epsilon_2>0$ and $\Delta_1^{(1)}=\Delta_2^{(1)}$, then Algorithm \ref{alg:Hedge} converges to the pure NE $(\theta_1,\theta_1)$ asymptotically;
  \item \vspace{-2mm} If $\epsilon_2>0$ and $\Delta_1^{(1)}\not=\Delta_2^{(1)}$, then Algorithm \ref{alg:Hedge} converges to a mixed NE. Moreover, there exist examples where the algorithm converges to a strictly mixed NE of the form $(\theta_1,[A,1-A])$, where $A\in(0,1)$ is some constant.    
\end{itemize}
\end{theoremfour}

\paragraph{Reconciliation of Theorem 1 with results showing oscillation/chaos:} It is worth briefly clarifying, given the overwhelmingly negative results for exponential weights dynamics in games, why we are able to obtain convergence results of such a sweeping nature for our setting.
The zero-sum games that induce oscillations are ones for which the \emph{only} NE is strictly mixed~\citep{bailey2018multiplicative,cheung2018multiplicative} (the Matching Pennies game is a common example).
As Table~\ref{Table:NEs} shows, this never happens in a symmetric $2 \times 2$ game.
Moreover,~\cite{cheung2019vortices} provide a sufficient and necessary condition for increasing volume of the flow of exponential weights (and thereby chaotic behavior) which, in the case of $2 \times 2$ symmetric games, can be verified to \emph{never} occur (see~\cite[Appendix G]{cheung2019vortices} for the detailed condition).

\subsection{Proof sketch of Theorem 1}\label{sec:proof-sketch}

\begin{figure}[H]
    \centering
    \scalebox{0.88}{
    \begin{minipage}{\linewidth}
        \begin{subfigure}{0.45\linewidth}
            \centering
            \includegraphics[width=\linewidth]{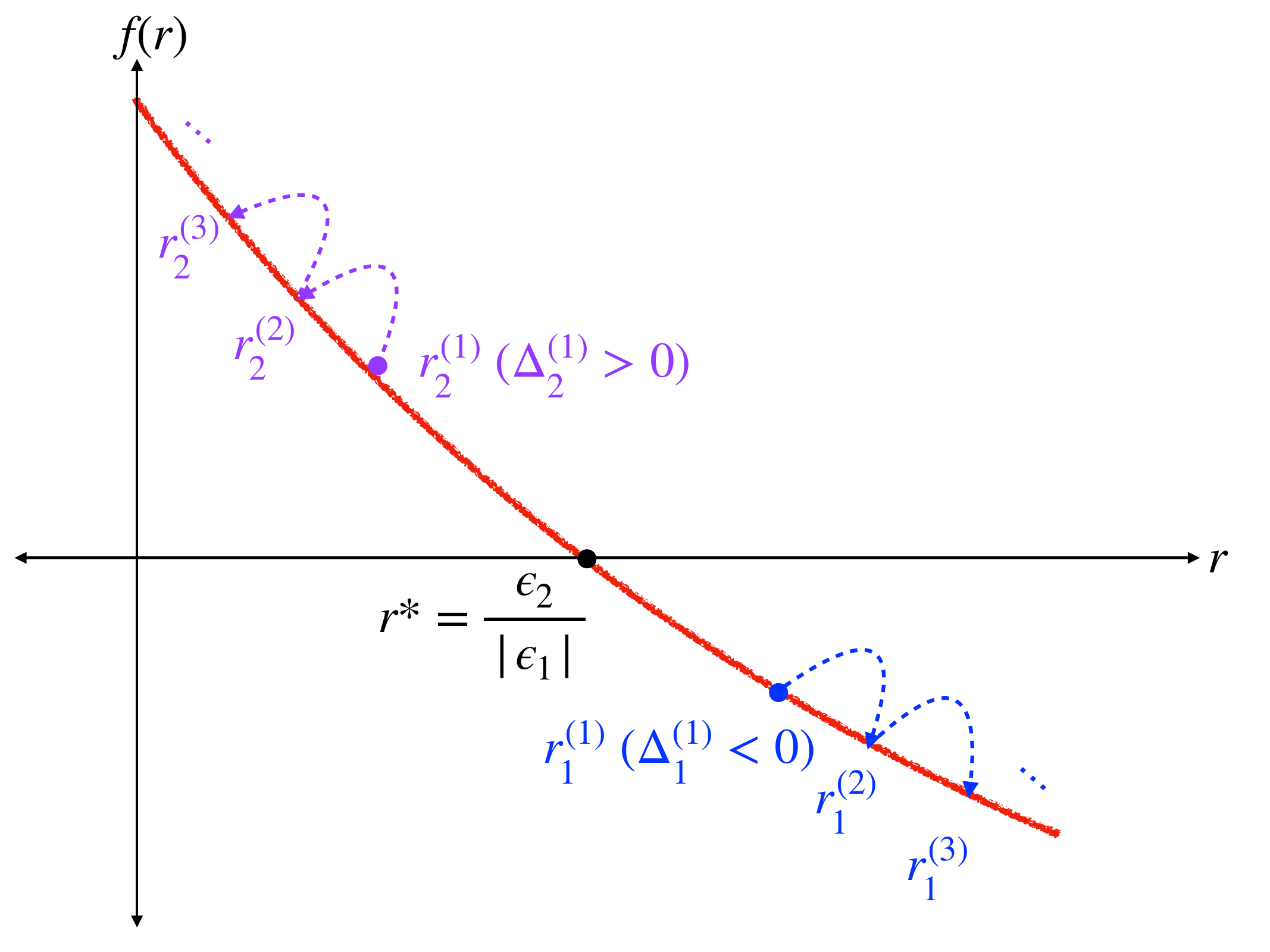}
            \caption{Opposite-sign initialization (Condition 1): 
            \\leads to ratios moving away from each other.}\label{fig:condition1sketch}
        \end{subfigure}
        \begin{subfigure}{0.45\linewidth}
            \centering
            \includegraphics[width=\linewidth]{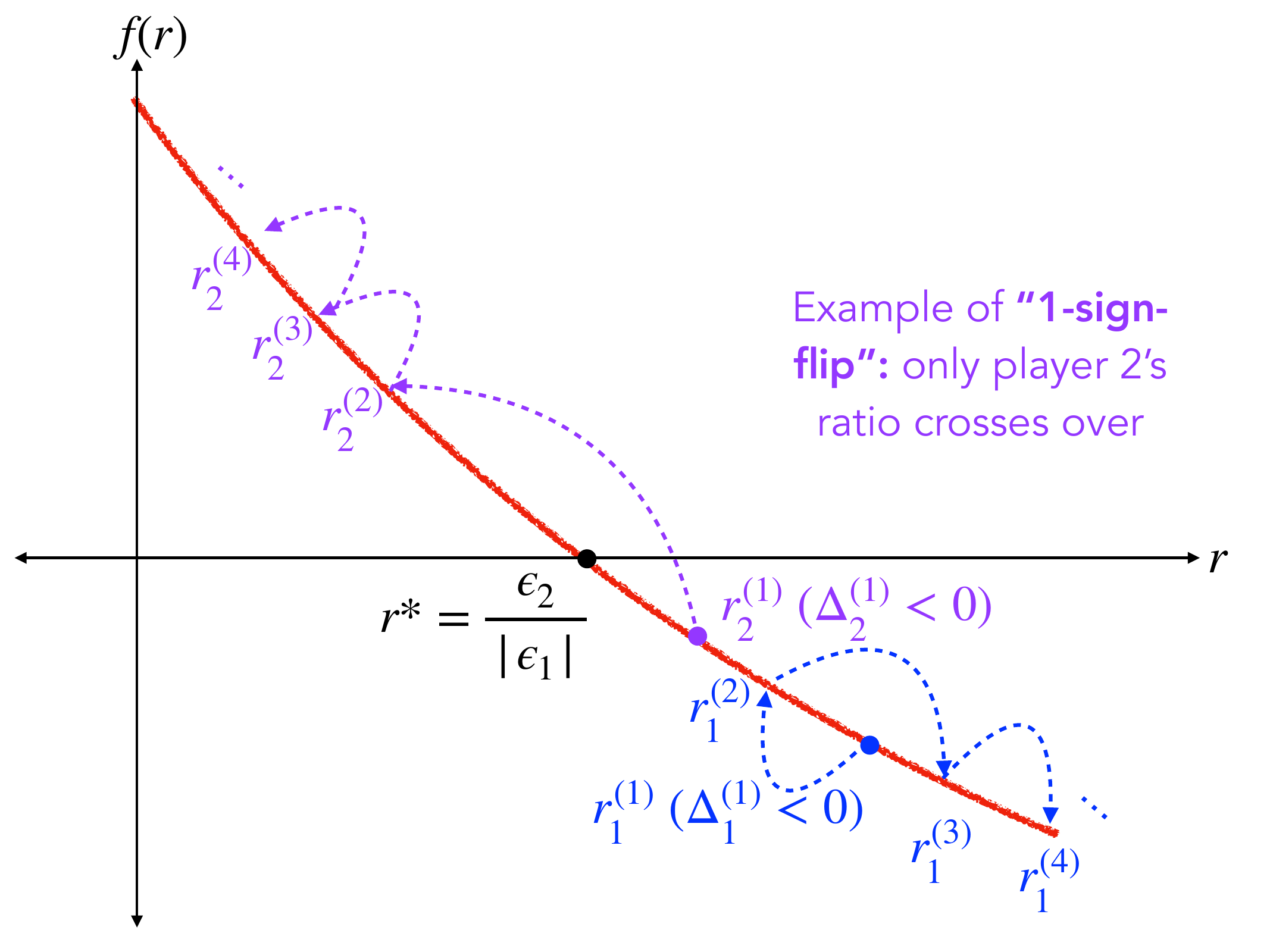}
            \caption{Same-sign initialization with a ``1-sign-flip": 
            \\reduces to Condition 1.}\label{fig:1signflip}
        \end{subfigure}
        
        \begin{subfigure}{0.45\linewidth}
            \centering
            \includegraphics[width=\linewidth]{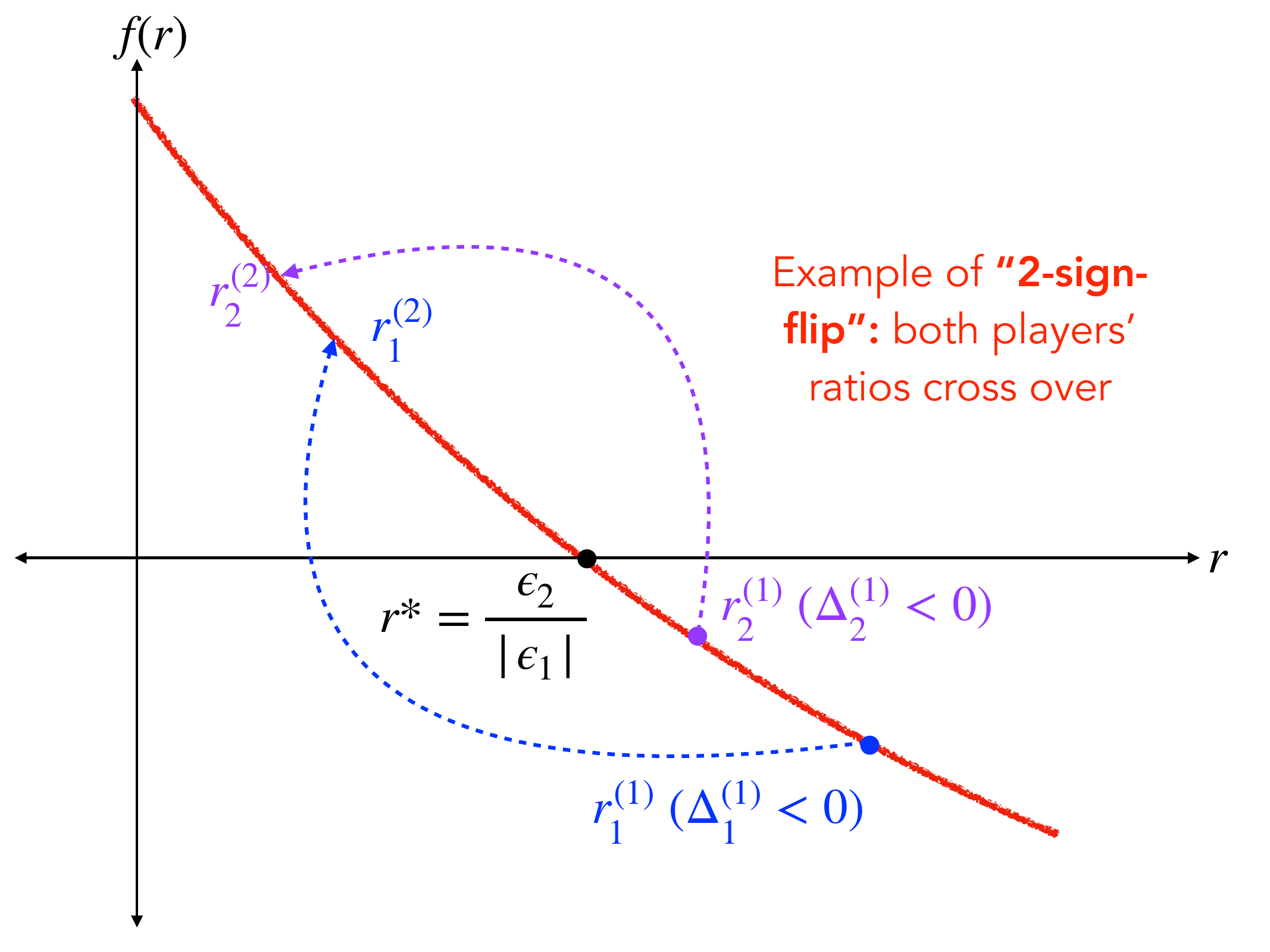}
            \caption{Same-sign initialization with a ``2-sign-flip": 
            \\happens finitely often.}\label{fig:2signflip}
        \end{subfigure}
        \begin{subfigure}{0.45\linewidth}
            \centering
            \includegraphics[width=\linewidth]{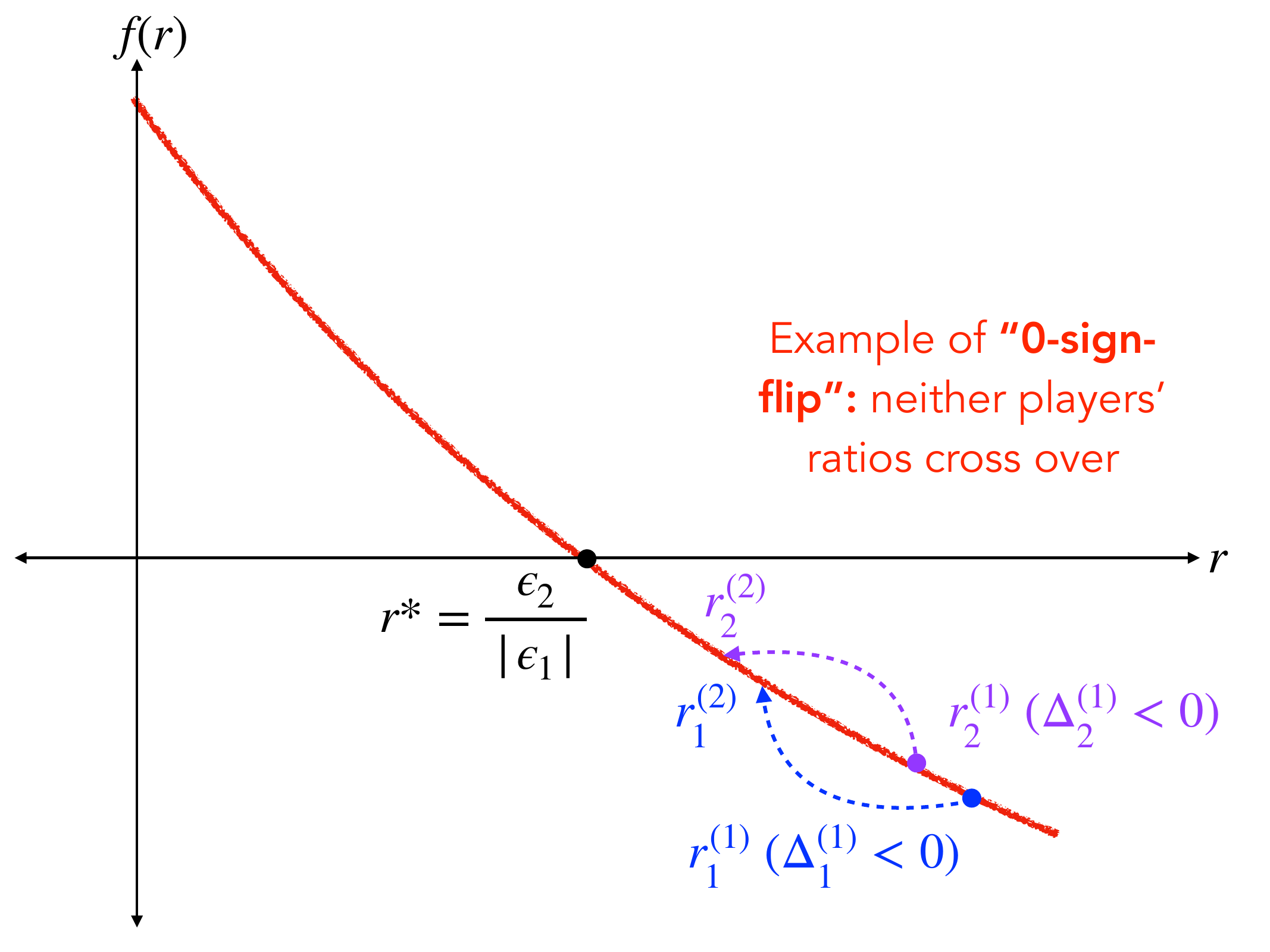}
            \caption{Same-sign initialization with a ``0-sign-flip": 
            \\happens finitely often.}\label{fig:0signflip}
        \end{subfigure}
    \end{minipage}
    }
    \caption{Illustrations of how the signs of ratio/initialization parameters $\Delta_i^{(1)}$, and their potential ``flips", affect analysis in the proof of Theorem~\ref{thm:e1l0e2g0}. $r^*$ is the unique root of $f(\cdot)$.}\label{fig:proofsketch}
    \vspace{-0.4cm}
\end{figure}

In this section, we provide some brief intuition for the rich variety of behavior that we observe across initialization and game environment.
An initial observation --- that simplifies the proof for all the cases --- is that we can parameterize any mixed strategy in the $2$-dimensional simplex through a single positive real number, i.e. the ratio $r_i := p_{i,1}/p_{i,2}$ for player $i$.
Since $p_{i,1} + p_{i,2} = 1$, $r_i$ uniquely identifies a mixed strategy $(p_{i,1},p_{i,2})$ for player $i$.
It is then easy to see that, for a given player $i\in\{1,2\}$ and the other player $j\in\{1,2\}$ (where $j\not=i$), the update for player $i$ can be written as a function of solely $(r_i,r_j)$, as
\[
r^{(t+1)}_i =r^{(t)}_i \cdot \exp(\eta f(r_j^{(t)})),
\]
where $f(r^{(t)}_j)=\frac{\epsilon_1r^{(t)}_j+\epsilon_2}{1+r^{(t)}_j}=\Delta_{j}^{(t)}.$
An initial question one might ask at this stage is what the \emph{fixed points} of the above mapping would be, i.e. when $\bar{r} = \bar{r} \cdot \exp(\eta f(\bar{r}))$.
It is easy to see that fixed points correspond either to $\bar{r} = 0$ (i.e. the pure strategy $\theta_1$), $\bar{r} = \infty$ (i.e. the pure strategy $\theta_2$) or, in the cases where the the signs of $\epsilon_1$ and $\epsilon_2$ do not match (Parts II and III) $\bar{r} = r^* = -\frac{\epsilon_2}{\epsilon_1}$ (i.e. the mixed strategy $p_{\text{SE}}$, which corresponds to the unique root of $f(r)$).
The name of the game is then to show that each of $\{r_i\}_{i=1}^2$ converges to one of the above fixed points, which in turn corresponds to either a pure/strictly mixed NE in each of the relevant cases.

\paragraph{Proof sketch of Theorem~\ref{thm:same-sign}} The cases where $\epsilon_1$ and $\epsilon_2$ have the same sign (Part I) are easy.
To see this, consider the case where $\epsilon_1, \epsilon_2 > 0$. 
In this case, $f(\cdot)$ is positive,  leading to monotonically increasing updates $\{r_i^{(t)}\}_{t \geq 1}$ in $t$ for each $i \in \{1,2\}$, that will converge exponentially fast (at a rate depending on the initial value $f(r_i^{(1)})$) to the pure NE $(\theta_1,\theta_1)$. (Similarly, if $\epsilon_1, \epsilon_2 < 0$, $f(\cdot)$ is instead negative, leading to exponential convergence to the pure NE $(\theta_2,\theta_2)$.)


\paragraph{Proof sketch of Theorem~\ref{thm:e1l0e2g0}:} The analysis of Parts II-IV is far trickier and heavily depends on the initialization of the players' strategies, controlled by $\Delta_1^{(1)}$ and $\Delta_2^{(1)}$. 
Due to space limitations we illustrate our analysis for the proof of Theorem~\ref{thm:e1l0e2g0}, noting that Parts III and IV utilize similar (but not identical) ideas.
A key initial observation is that $f(r)$ in Part II is monotonically \emph{decreasing}, meaning that the sign of $\text{sign}(\Delta_i^{(t)}) = - \text{sign}(r_i^{(t)} - r^*)$ for all $t \geq 1$ (see Figure~\ref{fig:proofsketch} for an illustration).
The analysis is then divided into three conditions:
\begin{enumerate}
\item \vspace{-2mm} Here as well, there is an ``easy" initialization in which $\Delta_1^{(1)}$ and $\Delta_2^{(1)}$ have opposite signs --- this type of initialization turns out to lead to similar monotonic behavior as in Part I.
To see this, consider the case where $\Delta_1^{(1)} < 0$ and $\Delta_2^{(1)} > 0$ (Condition 1 in Appendix~\ref{appendix"thm:e1l0e2g0}).
Because $r_1^{(t+1)} = r_1^{(t)} \exp(\eta \Delta_2^{(t)})$, the update leads to the ratio $r_1$ increasing which further \emph{decreases} $\Delta_1^{(t)}$.
Similarly, the update leads to the ratio $r_2$ \emph{decreasing} which further \emph{increases} $\Delta_2^{(t)}$.
As illustrated in Figure~\ref{fig:condition1sketch}, in this case the parameters $\Delta_i^{(t)}$ move further and further away from each other, eventually converging to the asymmetric pure NE $(\theta_1,\theta_2)$.
\item \vspace{-2mm} The more difficult case is when the signs of $\Delta_1^{(1)}$ and $\Delta_2^{(1)}$ \emph{match} at initialization.
In this case the above monotonicity argument no longer directly applies, due to the fact that the sign of $r_i^{(1)} - r^*$ for each player $i \in \{1,2\}$ may change unpredictably over time.
Clearly, if one of the players flips sign (what we call a ``1-sign-flip"), the analysis reduces to the easy initialization condition above; see Figure~\ref{fig:1signflip} for an illustration.
Remarkably, we are able to show that the other two bad events of a ``0-sign-flip", meaning the signs stay the same from round $t \to t+1$, and ``2-sign-flip", meaning the signs both change from round $t \to t + 1$, can only happen \emph{finitely many times}. (See Figures~\ref{fig:2signflip} and~\ref{fig:0signflip} for an illustration of these bad events). In fact, Lemma~\ref{lemmmm:noffilop} shows via a potential-based argument that a ``2-sign-flip" can happen at most a constant number of times (for an explicit constant that depends on the parameters of the game). 
\item \vspace{-2mm} Interestingly, the above analysis critically uses the fact that $r_1^{(1)} \neq r_2^{(2)}$, which only occurs at unequal initialization.
The case of identical initialization needs to be handled separately and turns out to correspond to analyzing a $1$-dimensional dynamical system on the variable $u^{(t)} := \ln \left(\frac{r^{(t)}}{r^*}\right)$ (where we dropped the index $i$ as the ratios are identical).
We show that the map $u \to T(u)$ is \emph{contractive} for the specified threshold on $\eta$ and appeal to Banach's fixed point theorem to show convergence to the unique fixed point, given by $\bar{u} = 0 \implies \bar{r} = r^*$, which corresponds to the unique strictly mixed NE.
\end{enumerate}

\vspace{-4mm}
\section{Applications to  multi-agent performative prediction}\label{sec:applications}

\label{sectionMPPP}

The recently proposed framework of Performative Prediction~\citep{perdomo2020performative} studies interactions in which machine learning models are trained on data that is not i.i.d. In particular, it focuses on scenarios where, at each time step, the observed data distribution depends on the previously deployed model or decision. This dependence may arise because individuals strategically adapt their behavior in response to the decision rule, or because the deployed policy itself shapes the underlying environment from which new data is generated.~\cite{narang2023multiplayer} extended Performative Prediction to multi-agent settings, where $n$ different agents simultaneously deploy machine learning models and decision rules. They showed that simple dynamics (specifically, retraining models at each time step and gradient descent) converge in a last-iterate sense to a stable solution. However, these results i) do not cover our exponential weights dynamic, and ii) the stable solutions obtained may not coincide with equilibria of the actual ensuing game without additional assumptions such as strong monotonicity (which does not hold for finite normal-form games).
\vspace{-2mm}
\paragraph{A special case of a lending game between two banks:} In this section, we formulate a new model for Performative Prediction in a special but practically motivated case: that of a game between two banks, or learners, which need to decide on application thresholds and interest rates for lendees. Here, the banks aim to approve loan applications from customers. Each bank \(i \in \{1, 2\}\) selects two parameters \( (\tau_i, \gamma_i):= \theta_i\), where $\tau_i$ is the minimum credit score threshold for giving a loan to a customer and $\gamma_i$ is the interest rate offered to approved customers.

Let \(y_j \in [0,1]\) be the normalized credit score of customer $i$, which is drawn from a distribution \(D_y\). A customer with a  credit score \(y_j\) is approved by Bank $i$ if only if \(y_j \geq \tau_i\). For simplicity, we assume that \(y_j\) directly represents the customer’s probability of repaying the loan. That is, letting $Y_j \in \{0,1\}$ be a random variable controlling whether the customer $j$ repays their loans, with $0$ meaning default and $1$ repayment, we have $y_j = \mathbb{P}[Y_j = 1]$. 
A customer's (normalized) credit score can be interpreted as a noisy observation of $y_j$. In this case, if the customer chooses Bank $i$ with parameter $\theta_i=(\tau_i,\gamma_i)$, the expected reward for this bank would be $(1+\gamma_i)\cdot y_j-(1-y_i)$: if the customer repays the loan, with probability $y_j$, the return is $1+\gamma_i$; if not, the return is 0.  

For given parameter choices \(\theta_1 = (\tau_1, \gamma_1)\) and \(\theta_2 = (\tau_2, \gamma_2)\), the allocation of customers and rewards for banks are determined by the following conditions:
\begin{enumerate}
    \item \textbf{Exclusive allocation by threshold}: If \(\tau_1 \leq y_j < \tau_2\), the customer goes to Bank 1, as the score qualifies for Bank 1 but not Bank 2. Conversely, if \(\tau_2 \leq y < \tau_1\), the customer chooses Bank 2.
    \item \textbf{Thresholds met by both banks}: If \(\tau_1, \tau_2 \leq y_j\) and \(\gamma_1 < \gamma_2\), the customer selects Bank 1 due to the lower interest rate. Conversely, if \(\gamma_1 > \gamma_2\), the customer chooses Bank 2. If there is a tie and \(\gamma_1 = \gamma_2\), the customer goes to each bank with probability $0.5$.
    \item \textbf{Rejection by both banks}: If \(y_j < \tau_1\) and \(y_j < \tau_2\), the customer is rejected by both banks.
\end{enumerate}

In particular, when considering two banks that learn their strategies over time, this is an exact example of performative prediction: the consumer data that each bank sees is a function of both of their deployed decision rules, since consumers strategically select which bank they get a loan from. 

\paragraph{A 2x2 bank game:} We focus on a simple case where there are two choices for each parameter: \(0 \leq \tau_{\ell} < \tau_{h} \leq 1\) and \(0 \leq \gamma_{\ell} < \gamma_{h} \leq 1\). We consider two possible actions per player: either $(\tau_l,\gamma_h)$, or $(\tau_h,\gamma_l)$; the rationale is that a bank with a lower threshold, i.e. which allows customers with worse credit, sets a higher rate with said customers\footnote{It is not hard to show that this is without loss of generality, as actions $(\tau_l,\gamma_l)$ and $(\tau_h,\gamma_h)$ are dominated under reasonable choices of parameters. See Appendix~\ref{sbusectionthebankgame} for a formal proof of this fact. }. Let 
\begin{equation}
\label{eqn:utility-onebank}
    h_{D_y}(\gamma,\tau_a,\tau_b)=\int_{\tau_a}^{\tau_b}[(2+\gamma)y-1]p(y)dy.
\end{equation}

It is not hard to see that a) this is a symmetric $2 \times 2$ game that fits into our framework, and moreover, b) the payoff matrix of player 1 is then given by Table~\ref{tab:my-table}, where the values of $a, b, c$ and $d$ are explicitly defined.
Therefore, Theorem 1 directly applies to this Bank Game and implies convergence to either pure symmetric NE, pure asymmetric NE or strictly mixed NE depending on the parameters of the Bank Game (i.e. the values of $\gamma_{\ell}, \gamma_h, \tau_{\ell}$ and $\tau_h$, as well as the distribution on customers $D_y$).
Our experiments section (Appendix~\ref{sec: exp-results}) provides simulations showing convergence under different values of $\epsilon_1$ and $\epsilon_2$.  
Overall, Theorem 1 implies rich consequences for lenders and lendees in this application: symmetric NE correspond to homogeneous bank behavior, but asymmetric pure NE correspond to heterogeneous behavior (e.g. one bank picking a high interest rate and low threshold, while the other bank does the opposite).
Our results show that either type of behavior can be realized depending on the parameters of the game and initialization conditions.

\begin{table}[t]
\centering
\scalebox{0.9}{
\begin{tabular}{c}\\
\begin{tabular}{|c|c|c|}
\Xhline{1.2pt}
\diagbox{\textbf{Bank 2}}{\textbf{Bank 1}} & \textbf{$\theta_1=(\tau_{\ell},\gamma_h)$} & \textbf{$\theta_2=(\tau_{h},\gamma_{\ell})$} \\
\Xhline{1.2pt}
\textbf{$\theta_1=(\tau_{\ell},\gamma_h)$} & $a=\frac{1}{2}h(\gamma_{h},\gamma_{\ell},1)$ & $b=h(\gamma_{\ell},\tau_h,1)$ \\
\hline
\textbf{$\theta_2=(\tau_{h},\gamma_{\ell})$} & $c=h(\gamma_{h},\tau_{\ell},\tau_{h})$ & $d=\frac{1}{2}h(\gamma_{\ell},\tau_{h},1)$ \\
\Xhline{1.2pt}
\end{tabular}
\end{tabular}
} 
\caption{Utility matrix of Bank 1, which exactly corresponds to the utility matrix of Player 1 in Table \ref{tab:utilities111}. Similar to Table \ref{tab:utilities111}, the utility matrix of Bank 2 is symmetric to that of Bank 1.}
\label{tab:my-table}
\vspace{-0.4cm}
\end{table}

\section{Related work}\label{sec:related-work}

We organize our discussion of related work under three verticals below.

\paragraph{Black-box results for ``no-regret" dynamics:} The design of algorithms that achieve vanishing external regret dates back to~\citep{hannan1957approximation,blackwell1956analog}, as well as the interpretation of such algorithms as solving repeated zero-sum games.
A modern non-asymptotic result by~\cite{freund1999adaptive} shows that the time-average of the payoff of two agents playing \emph{any} no-external-regret algorithms against each other will converge to the (unique) value of the zero-sum game. 
These results do not extend to arbitrary general-sum games; in fact, ``uncoupled" dynamics that only use first-order information can be shown to never converge to NE for certain games~\citep{hart2003uncoupled}, and ``higher-order" dynamics are instead needed~\citep{toonsi2023higher}.
Instead, a folklore result (see, e.g.~\citep{cesa2006prediction}) shows that the time-average of the joint strategy profile of the agents will converge to the polytope of \emph{coarse correlated equilibria} (CCE).
Such a result still allows for oscillation within the CCE polytope\footnote{It \emph{is} possible to achieve convergence to a specific equilibrium point within the CCE polytope for certain games if one player plays a no-regret algorithm and the other enacts a series of ``threats", or ``grim-trigger" strategies that penalize both players if they deviate from a specific equilibrium (see, e.g.~\citep{collina2023efficient}). Such dynamics are asymmetric and highly specialized (due to the threat-based nature of the strategies), and do not constitute the standard modus operandi in multi-agent learning.} (except in the special case where it is a singleton set, meaning that the set of all CCE coincides with the set of all NE).
When the players' algorithms achieve the stronger guarantee of vanishing \emph{internal/swap regret}~\citep{blum2007external,greenwald2003general} (see also the ``adaptive heuristics" defined in the economics literature~\citep{hart2005adaptive}, as well as the notion of calibrated learning~\citep{foster1997calibrated}), the time-averaged convergence is specialized to the polytope of Aumann's \emph{correlated equilibria} (CE)~\citep{aumann1987correlated}.
This result also allows for oscillations within the CE polytope.
Table~\ref{Table:NEs} showed that both the CE polytope and the CCE polytope are non-trivial in general even for $2 \times 2$ symmetric games.
Achieving convergence to a specific NE then necessitated a specialized study of a specific dynamic, which we chose to be exponential weights in this paper.
It is also worth noting that we analyzed exponential weights with a \emph{universally constant} step size, while the no-external-regret manifestation of exponential weights requires a step size that either uniformly decays with rounds $t$ or depends inversely on the horizon of learning $T$~\citep{cesa1997use} --- so, strictly speaking, our dynamic is not even no-external-regret.

Interestingly, the \emph{last iterate} of no-swap-regret dynamics was recently shown to converge to a specific symmetric Nash equilibrium for \emph{zero-sum symmetric games} when the players initialize identically~\citep{leme2024convergence}. 
This result is complementary to our work, which studies a specific dynamic in a different setting (still symmetric, but $2 \times 2$ and often not zero-sum).

\paragraph{Oscillation and chaos in last-iterate of game dynamics:} The \emph{last iterate} of game dynamics has seen intense recent study, motivated both by natural desiderata of behavioral stability~\citep{hofbauer1998evolutionary} and by applications in machine learning such as the training process of generative adversarial networks~\citep{daskalakis2018training}.
Results on the last-iterate, particularly for the standard exponential weights dynamic, are predominantly negative, frequently showing oscillations or some sort of chaotic behavior.
Even for zero-sum games,~\cite{bailey2018multiplicative,cheung2018multiplicative} showed that multiplicative weights with a constant step size oscillates towards the boundaries of pure strategies when the game only has strictly mixed NEs.~\cite{cheung2019vortices} further showed the advent of chaotic behavior for all algorithms within the ``Follow-the-Regularized-Leader" (FTRL) family (which includes exponential weights) for a broad class of games, including graphical constant-sum games, certain symmetric games and arbitrary $2 \times 2$ general-sum (not necessarily symmetric) games.
~\cite{palaiopanos2017multiplicative} also discovers a difference between the exponential weights dynamic and the ``linear" version of the multiplicative weights update, with the former leading to chaotic behavior with large enough step sizes, in congestion games.
The papers that provide a general framework for oscillations/chaos conduct a volume analysis showing that if the volume of the ``flow" of the dynamics (from some initialization set of non-zero Lebesgue measure) increases, it implies chaotic behavior --- but the converse does not hold.
Negative results exist even for the weaker requirement of \emph{local asymptotic stability} near any NE --- in particular, any mixed NE is shown to be unstable under the family of FTRL dynamics both in continuous and discrete time~\citep{vlatakis2020no,giannou2021survival}.
As we discussed in Sections~\ref{Section:convergecnce} and~\ref{sec:applications}, our positive convergence results for $2 \times 2$ symmetric games can be readily reconciled with these results.
We showed that the mixed NE is only approached through specific directions (involving identical initialization), which turns out to be the only stable direction of deviation. 
Moreover, many of the oscillatory situations (e.g. $2 \times 2$ congestion games~\citep{palaiopanos2017multiplicative} turned out to only occur under a sufficiently large step size.

\paragraph{Positive convergence results for specific dynamics:} 
A plethora of last-iterate convergence results have been shown for zero-sum games (most relevant to our setting are the ones for finite games, i.e. min-max optimization constrained to the probability simplices~\citep{daskalakis2019last,cai2022finite}) under modifications of exponential weights to include optimism and/or extragradient method, as well as a sufficiently small or decaying step size.
In contrast, we allow for non-zero-sum games but require symmetry and the game to be $2 \times 2$, and we consider the original exponential weights dynamic; moreover, in many cases we can handle an arbitrarily large step size.
Positive \emph{average-iterate} convergence results have also been developed for the fictitious-play dynamic~\citep{brown1951iterative,robinson1951iterative} (which corresponds to ``Follow-the-Leader in online learning) for zero-sum games with a diagonal payoff matrix~\citep{abernethy2021fast} as well as zero-sum symmetric games of a generalized Rock-Paper-Scissors variety~\citep{lazarsfeld2025fast}. 
It is an intriguing open question whether fictitious play dynamics would converge in our setting, whether in the average or last iterate. Specific to congestion games (with any finite number of players/pure strategies),~\cite{kleinberg2009multiplicative} showed that \emph{randomized} variants of exponential weights can converge \emph{with high probability} for a sufficiently small step size. {\cite{chotibut2021family} established the last-iterate convergence of the multiplicative weights algorithm for $2 \times 2$ congestion games. We recover their conclusion in
this special setting as a consequence of Theorem \ref{thm:e1l0e2g0}.}

A recent yet relatively extensive line of work studies performative prediction~(\cite{perdomo2020performative,miller2021outside,brown2022performative,bechavod2021gaming,mendler2022anticipating,jagadeesan2022regret,zrnic2021leads,shavit2020causal} to only name a few), which looks at single-agent machine learning settings where the training data on a given round depends on the previously deployed model(s)---capturing situations where deployed models induce strategic behavior or distribution shifts. The case of performative prediction in \emph{multi-agent settings}, however, has received significantly less attention. In particular,~\cite{narang2023multiplayer} study convergence of learning dynamics in specially structured games induced by these machine-learning-oriented problems. In these works, multiple players iteratively train their own machine learning models, and each player’s data distribution---and hence their utility---depends on the set of models deployed on a given round. This literature shows that both a form of \emph{repeated empirical risk minimization} as well as gradient descent (when used by all players in the game) can converge in a last-iterate sense to a stable outcome. However, two limitations remain: (i) said dynamics can converge to a ``sub-optimal'' point, which is not an equilibrium of the underlying game between learners, and (ii) convergence guarantees rely on strong assumptions both on the players' loss functions --- especially strong monotonicity, which does not hold for normal-form games, and on the sensitivity of outcomes to players’ actions.

The works~\citep{vlatakis2020no,giannou2021survival} show that for any general-sum game, a particular NE is locally asymptotically stable under FTRL dynamics if and only if it is pure. Note that these results only imply \emph{local convergence} of the dynamic, i.e. if it is initialized sufficiently close to a pure Nash Equilibrium. On the other hand, we utilize the special structure of $2 \times 2$ symmetric games to provide a \emph{global convergence} result, i.e. for a broader set of initializations that could be arbitrarily far from any NE.

\section{Discussion}\label{sec:discussion}

Overall, our results paint a more positive than expected picture for the exponential weights dynamic in the special case of $2 \times 2$ symmetric games.
The most pressing open question is whether this story extends in any way to symmetric games with more than $2$ actions.
Preliminary simulations suggest convergence behavior in many (but not all) cases, but characterizing the ensuing dynamical systems is an important future direction.
Even in the symmetric $2 \times 2$ case, interesting open questions remain --- such as whether we can prove fast non-asymptotic rates of convergence for all the cases in Theorem 1 (not just the easy ones), and whether we can handle unequal step sizes for different players.

\paragraph{Acknowledgment}{ GW was supported by the Apple Scholars in AI/ML PhD fellowship by Apple and ARC-ACO fellowship provided by Georgia Tech. VM
was supported by the NSF (through award CCF-2239151 and award IIS-2212182), an Adobe
Data Science Research Award and an Amazon Research Award. KA and JZ
were supported by the NSF (through award IIS-2504990 and award IIS-2336236).}

\bibliographystyle{unsrt}
\bibliography{name}

\appendix
\newpage

\section{Notation and Preliminaries for proofs}\label{sec:notation-proofs}
In this section, we introduce some notation that will help simplify the exposition of the proof of Theorem 1. In all that follows, recall that we defined $\epsilon_1 := a - b$ and $\epsilon_2 := c - d$. Specifically, $\epsilon_1$ denotes the change in utility of player $1$ when their strategy changes from $\theta_1$ to $\theta_2$ when player $2$ plays $\theta_1$, and $\epsilon_2$ denotes the change in utility of player $1$ when their strategy changes from $\theta_1$ to $\theta_2$ when player $2$ plays $\theta_2$. Also, for each player $i\in\{1,2\}$, recall that we defined the functional
\begin{equation}
    \label{eqn:defnofdelta}
    \Delta^{(t)}_{i} = p^{(t)}_{i,1}\epsilon_1+ p^{(t)}_{i,2}\epsilon_2.
\end{equation} 
We now introduce our new notation below:
\begin{itemize}
\item For each player $i \in \{1,2\}$, let $r_i^{(t)}=\frac{p^{(t+1)}_{i,1}}{p^{(t+1)}_{i,2}}$ be the ratio of probabilities allocated to pure strategy $\theta_1$ over pure strategy $\theta_2$. 
Then, for any pair of players $i \neq j \in \{1,2\}$, we have
\begin{equation}\label{eq:rt-deltat-update}
r_i^{(t+1)} = \frac{p^{(t)}_{i,1}}{p^{(t)}_{i,2}}\exp\left(\eta (p^{(t)}_{j,1}a + p^{(t)}_{j,2}c - p^{(t)}_{j,1}b - p^{(t)}_{j,2}d)  \right) =r_i^{(t)}\exp(\eta\Delta_j^{(t)}).
\end{equation}

\noindent Note that, since $p^{(t)}_{i,1}+p^{(t)}_{i,2}=1$ and $p^{(t)}_{i,1}=p^{(t)}_{i,2}r^{(t)}_i$, we have $p^{(t)}_{i,2}=\frac{1}{1+r^{(t)}_i}$, and $p^{(t)}_{i,1}=\frac{r_i^{(t)}}{1+r^{(t)}_i}$.
Clearly, the mapping from $r \to p$ is a bijection.
\item Recall that we defined the function $f(r) := \frac{\epsilon_1r+\epsilon_2}{1+r}$ in Section~\ref{sec:proof-sketch}. The function $f(r)$ allows us to rewrite our update step in the following simplified form, solely in terms of $r$.
Namely, for any pair of players $i \neq j \in\{1,2\}$, we have
\begin{equation}\label{eq:rt-update}
r^{(t+1)}_i =r^{(t)}_i\exp(\eta f(r^{(t)}_j)),
\end{equation}
noting that $f(r_j^{(t)})=\frac{\epsilon_1r_j^{(t)}+\epsilon_2}{1+r_j^{(t)}}= p_{j,1}^{(t)} \epsilon_1  + p_{j,2}^{(t)} \epsilon_2  = \Delta_j^{(t)}$. Further, it is easy to verify that $f'(r)=\frac{\epsilon_1-\epsilon_2}{(1+r)^2}$. 
\item Finally, let $r^*=\frac{|\epsilon_2|}{|\epsilon_1|}$. If $\epsilon_1\cdot\epsilon_2<0$ (which is the case for Parts II and III of Theorem 1), then $f(r^*)=0$. 
\end{itemize}

\section{Proof of Lemma \ref{lem:::verysimpleobsercation}}
\label{append:proofoflemma111sdewq}

In this section, we prove Lemma~\ref{lem:::verysimpleobsercation}.
We first consider the first bullet point of Lemma~\ref{lem:::verysimpleobsercation}, i.e. what happens when $\Delta_1^{(1)} = \Delta_2^{(1)} = 0$.
From Equation~\eqref{eq:rt-deltat-update}, it is easy to see that the ratio $r^{(t)}_i$ of each player $i \in \{1,2\}$ does not change for all $t \geq 1$. Since the map $r \to p$ is a bijection, the mixed strategies also do not change.\\ 

Next, we consider the second bullet point of Lemma~\ref{lem:::verysimpleobsercation}, i.e. the case when $\Delta_1^{(1)} = 0$ but $\Delta_2^{(1)} \neq 0$.
Since $\Delta_1^{(1)}=0$, Equation~\eqref{eq:rt-deltat-update} tells us that $r^{(2)}_2=r^{(1)}_2$, meaning that $\Delta_2^{(2)}=f(r^{(2)}_2)=f(r^{(1)}_2)=\Delta^{(1)}_{2}\not=0$. Moreover, if $\Delta_1^{(1)}=0$, the parameters of the game must satisfy $\epsilon_1\not=\epsilon_2$. (This is because, if we instead had $\epsilon_1=\epsilon_2=\epsilon$, then we would obtain $f(r)=\epsilon$ for all values of $r$, meaning that $0=\Delta_{1}^{(1)}=f(r_{1}^{(1)})=\epsilon$, leading to $\epsilon_1=\epsilon_2=0$. This is a contradiction with the assumption in the lemma that $|\epsilon_1| + |\epsilon_2| > 0$.) 

Because $\epsilon_1 \neq \epsilon_2$, the sign of $f'(r)$ is solely determined by the sign of $\epsilon_1 - \epsilon_2$ for all values of $r$, implying that the function $f$ is strictly monotone.
On the other hand, since $\Delta_2^{(1)}\not=0$, Equation~\eqref{eq:rt-deltat-update} again implies that $r^{(2)}_1\not=r^{(1)}_1$. Since 
$f$ is strictly monotone and $\Delta_{1}^{(1)}=f(r^{(1)}_1)=0$,  we have $\Delta_{1}^{(2)}=f(r^{(2)}_1)\not=f(r^{(1)}_1)=0$.
This ultimately yields $\Delta_1^{(2)} \neq 0$, which completes the proof of the lemma.

\section{Characterization of correlated equilibria (CEs)}
\label{section:Characterization of CE}

In this section, we first review the definition of correlated equilibrium (CE), and then  characterize the set of CEs for our game. 
Formally, we define $\nu = (\nu_{1,1},\nu_{1,2},\nu_{2,1},\nu_{2,2}) \in \Delta_4$ as a joint probability distribution over pairs of strategies $\{(\theta_m,\theta_n): (m,n) \in \{1,2\} \times \{1,2\}\}$.

\begin{definition}[Correlated Equilibrium (CE)] 
\label{defn:CE}
A joint probability distribution $\nu$ \text{over} pairs of strategies $\{(\theta_p,\theta_q): (p,q) \in \{1,2\} \times \{1,2\}\}$ is a correlated equilibrium iff the following inequalities hold:
\[
\sum_{(m,n) \in \{1,2\} \times \{1,2\}} \nu_{m,n} \, u_1(\theta_m,\theta_n)
\geq 
\sum_{(m,n) \in \{1,2\} \times \{1,2\}} \nu_{m,n} u_1(\theta_{m'},\theta_n),
~~\forall m' \in \{1,2\},
\]
\[
\sum_{(m,n) \in \{1,2\} \times \{1,2\}} \nu_{m,n} \, u_2(\theta_m,\theta_n)
\geq 
\sum_{(m,n) \in \{1,2\} \times \{1,2\}} \nu_{m,n} \, u_2(\theta_m,\theta_{n'}),
~~\forall n' \in \{1,2\}.
\]
\end{definition}

In general, the set of CEs is a polytope (in our case, a subset of $\Delta_4$) and is bigger than the set of NEs, which may be disjoint and non-convex. 
We summarized the landscape of CEs along with the landscape of NEs in Table~\ref{Table:NEs}; we provide the calculation of the CEs here for completeness.

Based on Definition \ref{defn:CE}, the condition for a strategy profile to be a CE can be written as:
\[
\nu_{1,1} \epsilon_1\geq \nu_{1,2} (-\epsilon_2), \ \ \nu_{2,1} (-\epsilon_1)\geq \nu_{2,2} \epsilon_2,\ \ \nu_{1,1} \epsilon_1\geq \nu_{2,1} (-\epsilon_2),\ \ \nu_{1,2} (-\epsilon_1)\geq \nu_{2,2} \epsilon_2.\]
When $\epsilon_1<0$ and $\epsilon_2>0$, this condition reduces to
\begin{equation}
\label{eqn:CEl0g0xx}
    \max\left\{\frac{|\epsilon_1|}{\epsilon_2} \nu_{1,1}, \frac{\epsilon_2}{|\epsilon_1|} \nu_{2,2}\right\}\leq \min\{\nu_{1,2}, \nu_{2,1}\}.
\end{equation}
Similarly, when $\epsilon_1>0$ and $\epsilon_2<0$, the condition reduces to
\begin{equation}
\label{eqn:CEg0l0xx}
    \max\left\{\frac{|\epsilon_1|}{\epsilon_2} \nu_{2,1}, \frac{\epsilon_2}{|\epsilon_1|} \nu_{1,2} \right\}\leq \min\{\nu_{1,1},\nu_{2,2}\}.
\end{equation}
When $\epsilon_1=0$ and $\epsilon_2<0$, we have 
\begin{equation}
\label{eqn:2133121}
    \nu_{1,2}=\nu_{2,1}=0.
\end{equation}
When $\epsilon_1=0$, $\epsilon_2>0$, we have 
\begin{equation}
\label{eqn:ddfdgjndauih}
    \nu_{2,2}=0.
\end{equation}
We note that the case with $\epsilon_2 = 0$ and $\epsilon_1 \neq 0$ is exactly symmetric to the case with $\epsilon_1 = 0$ and $\epsilon_2 \neq 0$, simply by switching $\theta_1$ and $\theta_2$. For simplicity, we omit this case. 

Observe that in all of the instances, excluding the ones where $\text{sign}(\epsilon_1) = \text{sign}(\epsilon_2)$, the set of CEs forms a non-trivial polytope, and there exist infinitely many points that are CE but not mixed or pure NEs.
This means that even average-iterate convergence to a specific NE would \emph{not} follow by directly plugging in regret bounds, let alone last-iterate convergence.

\section{Proof of Theorem \ref{thm:same-sign}} 
\label{appdnex:profoofpar1}

\begin{figure}
  \centering

  \begin{subfigure}{0.45\textwidth}
    \includegraphics[width=\linewidth]{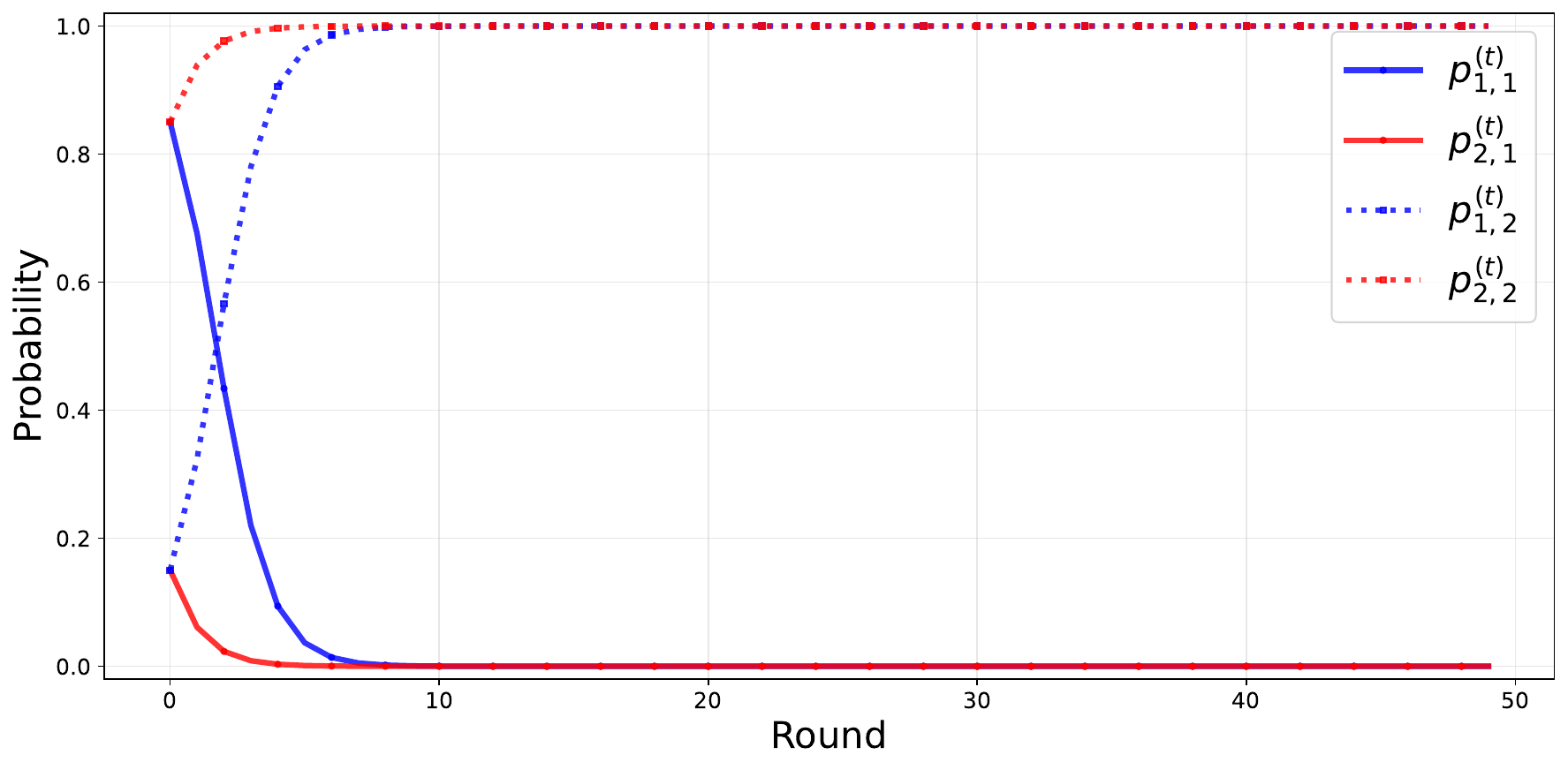}
    \caption*{r1.  Opposite-sign Initialization}
  \end{subfigure}
  \begin{subfigure}{0.45\textwidth}
    \includegraphics[width=\linewidth]{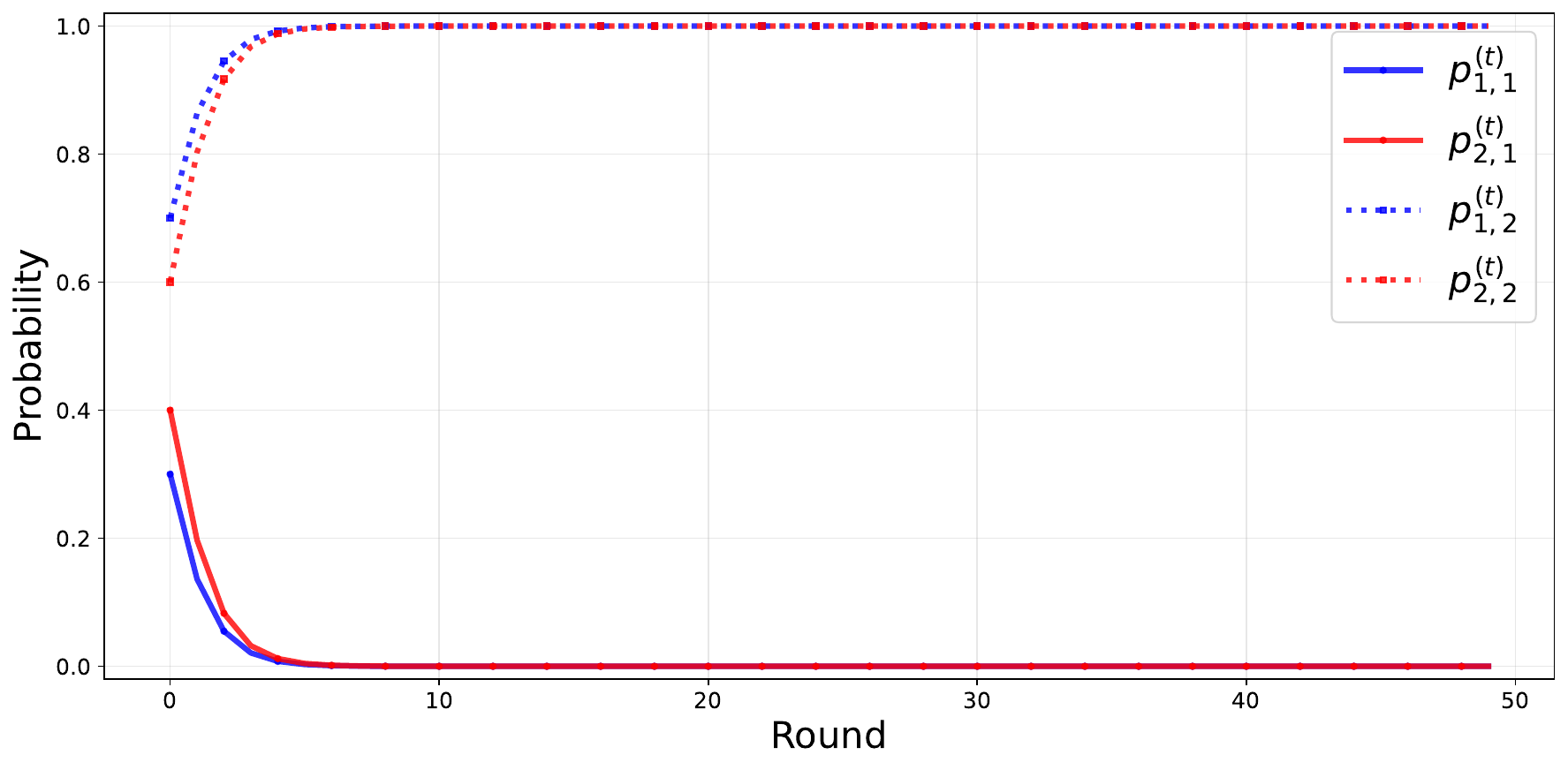}
    \caption*{r1. Same-sign Initialization}
  \end{subfigure}

  \begin{subfigure}{0.45\textwidth}
    \includegraphics[width=\linewidth]{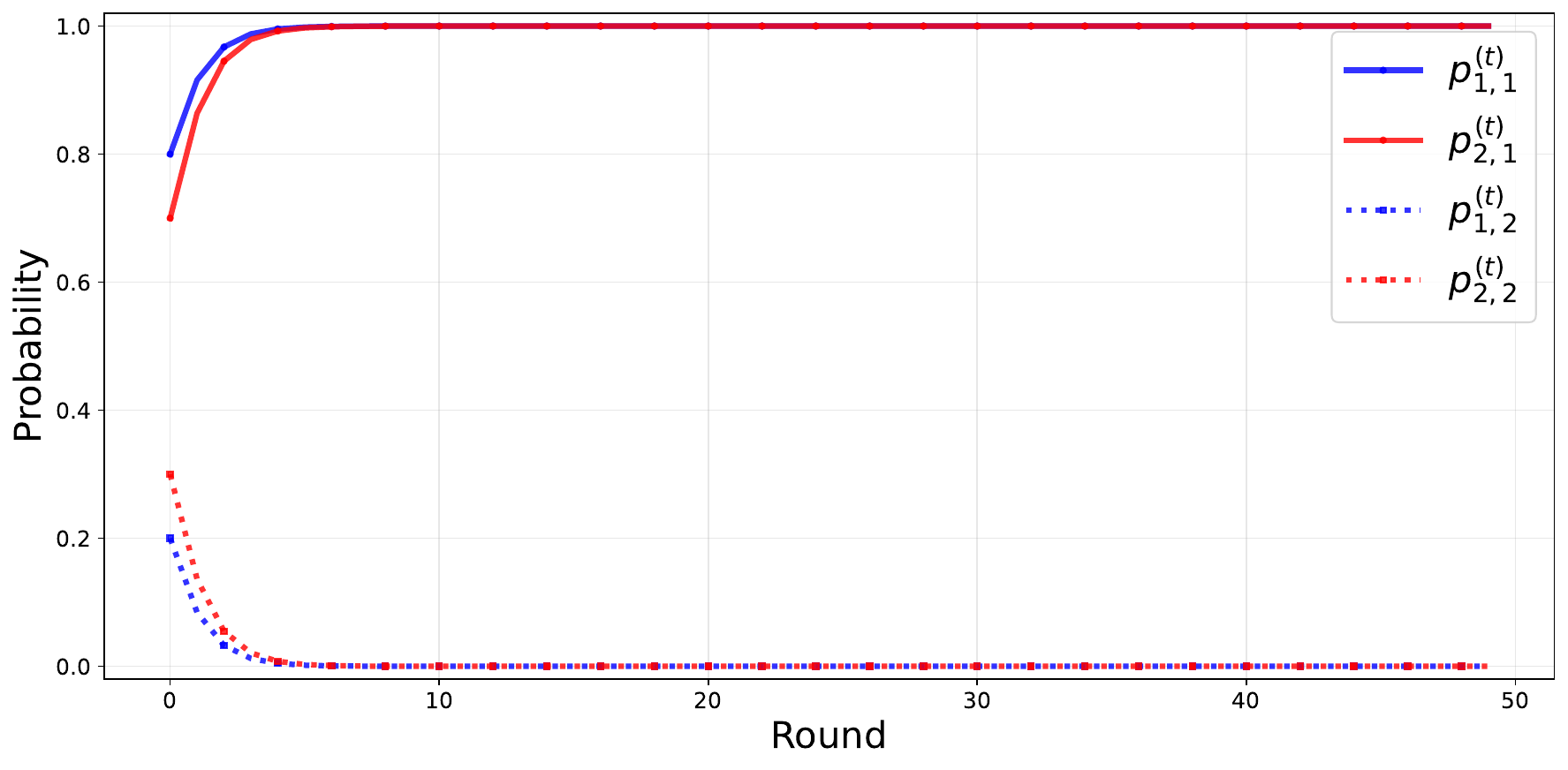}
    \caption*{r2. Same-sign Initialization}
  \end{subfigure}
  \begin{subfigure}{0.45\textwidth}
    \includegraphics[width=\linewidth]{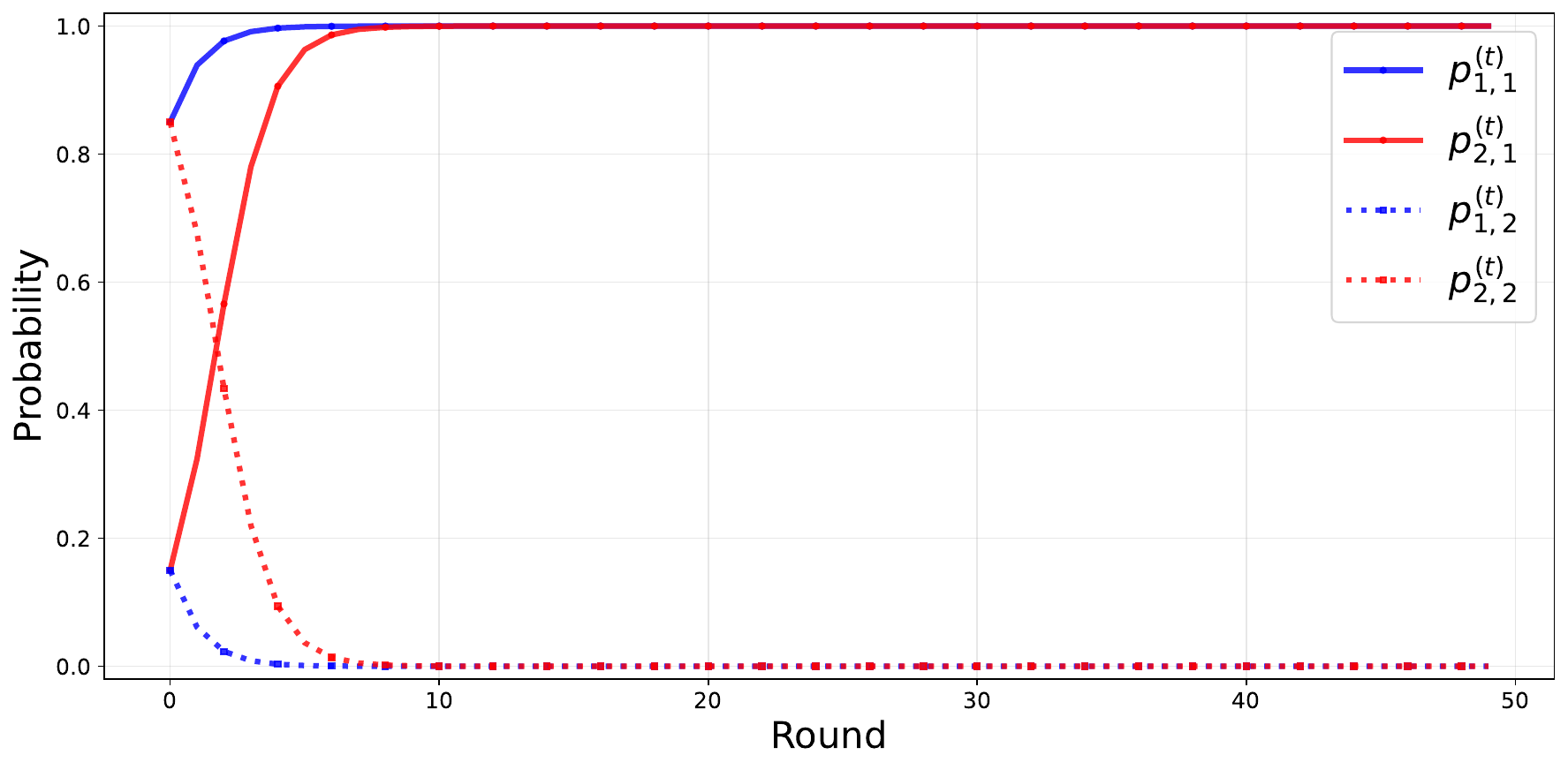}
    \caption*{r2. Opposite-sign initialization}
  \end{subfigure}

  \caption{Simulation results for Theorem \ref{thm:same-sign}. The row in Table~\ref{Table:NEconvergence} to which each case corresponds is marked in the sub-captions.}
  \vspace{-0.4cm}
  \label{figure:part 1}
\end{figure}

In this section, we prove Theorem~\ref{thm:same-sign}.
Without loss of generality, we consider the case where $\epsilon_1,\epsilon_2<0$. (Note that the other case is symmetric: it suffices to switch $\theta^{(1)}$ and $\theta^{(2)}$ to obtain the result.) Moreover, since $\epsilon_1,\epsilon_2<0$, Equation~\eqref{eqn:defnofdelta}, tells us that for each player $i\in\{1,2\}$,  we have $\Delta^{(t)}_i < 0$ for all $t$. This implies that the sequence $\{r^{(t)}_i\}_{t=1}^{\infty}$ is strictly decreasing. We first provide a formal convergence result for this case.
\begin{theorem}\label{thm:partone-formal}
Suppose Assumption \ref{as:nondegenerate} holds, and the game satisfies $\epsilon_1,\epsilon_2<0$. Then, Algorithm \ref{alg:Hedge} with any step size $\eta>0$ satisfies the following:
\begin{itemize}
    \item If $|\epsilon_1|<|\epsilon_2|$, then for each player pair $i \neq j \in\{1,2\}$, we have $p_{i,1}^{(t+1)}\leq r^{(1)}_i\exp(\eta t\Delta_{j}^{(1)})$, where $\Delta_{j}^{(1)}<0$. This implies that the algorithm converges to the symmetric pure NE $(\theta_2,\theta_2)$ exponentially fast;
    \item If $|\epsilon_1|\geq |\epsilon_2|$, then for each player pair $i \neq j \in\{1,2\}$, we have $p_{i,1}^{(t+1)}\leq r^{(1)}_i\exp(\eta t\epsilon_2)$, where $\epsilon_2<0$. This implies that the algorithm converges to the symmetric pure NE $(\theta_2,\theta_2)$ exponentially fast.
\end{itemize}
\end{theorem}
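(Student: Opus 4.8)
The plan is to prove Theorem~\ref{thm:partone-formal} by directly unrolling the ratio recursion~\eqref{eq:rt-update} and bounding each per-step multiplicative factor. Recall that under $\epsilon_1,\epsilon_2<0$ we have $\Delta^{(t)}_i = f(r^{(t)}_i) < 0$ for every player $i$ and every $t$, since $f(r) = \frac{\epsilon_1 r + \epsilon_2}{1+r}$ is a convex combination of $\epsilon_1$ and $\epsilon_2$ with weights $\frac{r}{1+r}$ and $\frac{1}{1+r}$. Consequently $\min\{\epsilon_1,\epsilon_2\} \le f(r) \le \max\{\epsilon_1,\epsilon_2\} < 0$ for all $r \ge 0$. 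This two-sided sandwich is the workhorse of the proof.

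First I would establish monotonicity of $f$: since $f'(r) = \frac{\epsilon_1 - \epsilon_2}{(1+r)^2}$, the function $f$ is monotone with the sign of $\epsilon_1 - \epsilon_2$. In the case $|\epsilon_1| < |\epsilon_2|$ (i.e. $\epsilon_1 > \epsilon_2$ since both are negative), $f$ is increasing, so $f(r) \le \lim_{r\to\infty} f(r) = \epsilon_1 < 0$; moreover $f$ being increasing and the sequence $\{r^{(t)}_j\}$ being strictly decreasing (each step multiplies by $\exp(\eta f) < 1$) means $f(r^{(t)}_j) \le f(r^{(1)}_j) = \Delta^{(1)}_j < 0$ for all $t \ge 1$. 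In the case $|\epsilon_1| \ge |\epsilon_2|$ (i.e. $\epsilon_1 \le \epsilon_2$), $f$ is non-increasing, so $f(r) \le \lim_{r\to 0^+} f(r) = \epsilon_2 < 0$ for all $r \ge 0$, which holds uniformly without needing any monotone-sequence argument. Then I would iterate~\eqref{eq:rt-update}: $r^{(t+1)}_i = r^{(1)}_i \exp\!\big(\eta \sum_{s=1}^{t} f(r^{(s)}_j)\big) \le r^{(1)}_i \exp(\eta t \cdot B)$ where $B = \Delta^{(1)}_j$ in the first case and $B = \epsilon_2$ in the second. Since $p^{(t+1)}_{i,1} = \frac{r^{(t+1)}_i}{1 + r^{(t+1)}_i} \le r^{(t+1)}_i$, the stated bounds $p^{(t+1)}_{i,1} \le r^{(1)}_i \exp(\eta t \Delta^{(1)}_j)$ and $p^{(t+1)}_{i,1} \le r^{(1)}_i \exp(\eta t \epsilon_2)$ follow immediately. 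As $t \to \infty$ the right-hand side tends to $0$, so $p^{(t)}_{i,1} \to 0$ and $p^{(t)}_{i,2} \to 1$ for both players, i.e. the iterate converges to $(\theta_2,\theta_2)$, and the convergence is exponential in $t$ with rate governed by $\eta |B|$. Finally I would note that $(\theta_2,\theta_2)$ is indeed the pure NE in this regime (consistent with Table~\ref{Table:NEs}), closing the argument.

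There is no serious obstacle here — this is the ``easy'' case flagged in the proof sketch. The only mildly delicate point is making sure the bound used in the first bullet ($\Delta^{(1)}_j$ rather than just $\epsilon_1$) is legitimate, which requires the observation that the strictly decreasing ratio sequence keeps $f$ evaluated at points where, by monotonicity of $f$, its value stays below $f(r^{(1)}_j)$; this is why the two bullets use different constants. One should also confirm that Assumption~\ref{as:nondegenerate} guarantees $\Delta^{(1)}_j \ne 0$ (indeed it is automatically negative here since $\epsilon_1,\epsilon_2 < 0$), so the exponent is genuinely negative and the rate is strictly positive. Everything else is a one-line geometric-series-style telescoping of the log-ratio recursion.
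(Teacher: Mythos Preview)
Your proposal is correct and follows essentially the same approach as the paper: establish that $f<0$ so the ratios are strictly decreasing, split on the sign of $\epsilon_1-\epsilon_2$ to determine the monotonicity of $f$, bound each $f(r^{(s)}_j)$ by either $f(r^{(1)}_j)=\Delta^{(1)}_j$ (when $f$ is increasing and $r^{(s)}_j$ is decreasing) or by $f(0)=\epsilon_2$ (when $f$ is non-increasing), then telescope~\eqref{eq:rt-update} and use $p_{i,1}\le r_i$. The only cosmetic differences are your extra remark $f(r)\le\epsilon_1$ in the first case (true but unused) and your phrasing of $f(0)$ as a limit; neither affects the argument.
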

We now provide the proof of Theorem~\ref{thm:partone-formal}, considering each condition listed above. 

\paragraph{Condition  1: $|\epsilon_2|>|\epsilon_1|$.} Under this condition, we have $f'(r)>0$, meaning that $f(\cdot)$ is strictly increasing. Since $\{r^{(t)}_i\}_{t=1}^{\infty}$ is
decreasing, we know $r^{(t)}_i < r^{(1)}_i$, so $f(r^{(t)}_i) < f(r^{(1)}_i)$, and as a consequence, we have
$$r^{(t+1)}_i=r_i^{(1)}\exp\left(\eta \sum_{k=1}^tf\left(r_j^{(k)}\right)\right)\leq r_i^{(1)}\exp(\eta t\Delta_{j}^{(1)}).$$
Therefore, we have $p^{(t+1)}_{i,1}\leq p^{(t+1)}_{i,2}r^{(1)}_i\exp(\eta t\Delta_{j}^{(1)})\leq r^{(1)}_i\exp(\eta t\Delta_{j}^{(1)})$. 
Note that $\Delta_{j}^{(1)}<0$ is a strictly negative constant that depends on the algorithm's initialization. 
In conclusion, $p^{(t+1)}_{i,1}\leq r^{(1)}_i\exp(\eta t\Delta_{j}^{(1)})$, while $p^{(t+1)}_{i,2}\geq 1-r^{(1)}_i\exp(\eta t\Delta_{j}^{(1)})$.
Therefore, the algorithm converges to the symmetric pure NE $(\theta_2,\theta_2).$

\paragraph{Condition II: $|\epsilon_2|\leq|\epsilon_1|$.} In this case, $f'(r)\leq 0$, so $f$ is non-increasing.  Since $r \geq 0$ (i.e. the ratio is always non-negative),  we have $f(r)\leq f(0)=\epsilon_2<0$.
So 
$$r^{(t+1)}_i=r_i^{(1)}\exp\left(\eta \sum_{k=1}^tf(r_j^{(k)})\right)\leq r_i^{(1)}\exp(\eta t\epsilon_2),$$
which gives us $p^{(t+1)}_{i,1}\leq p^{(t+1)}_{i,2}r^{(1)}_i \exp(\eta t\epsilon_2)\leq r^{(1)}_i\exp(\eta t\epsilon_2)$. 
Note that $\epsilon_2<0$ is a strictly negative constant. 
In conclusion, we have $p^{(t+1)}_{i,1}\leq r^{(1)}_i\exp(\eta t\epsilon_2)$, while $p^{(t+1)}_{i,2}\geq 1-r^{(1)}_i\exp(\eta t\epsilon_2)$.
Therefore, the algorithm again converges to the symmetric pure NE $(\theta_2,\theta_2).$ 

\section{Proof of Theorem \ref{thm:e1l0e2g0}}
\label{appendix"thm:e1l0e2g0}

In this section, we prove Theorem~\ref{thm:e1l0e2g0}.
In this case, we have $\epsilon_1<0$ and $\epsilon_2>0$. 
This implies that $f'(r)<0$, i.e. $f$ is monotonically decreasing. Moreover, we define $r^*=\frac{\epsilon_2}{|\epsilon_1|}$ to be the unique root of the function $f(\cdot)$. We consider different initialization conditions one by one.

\subsection{Exponential Convergence With Opposite-Sign Initialization}
We first provide a formal description of exponential convergence in the ``easy" cases where the signs of the initialization functionals $\Delta_1^{(1)}$ and $\Delta_2^{(1)}$ are opposite.
\begin{theorem}\label{thm:parttwo-easycase-formal}
  Suppose Assumption \ref{as:nondegenerate} holds, and $\epsilon_1<0,\epsilon_2>0$. Then, Algorithm \ref{alg:Hedge} with any step size $\eta > 0$ satisfies the following:
  \begin{itemize}
      \item If $\Delta_1^{(1)}<0$, $\Delta_2^{(1)}>0$, then we have $p^{(t)}_{1,2}\leq \frac{1}{r_1^{(1)}}\exp(-\eta t\Delta_2^{(1)})$,
while 
$p^{(t)}_{2,1}\leq r^{(1)}_2\exp(\eta t\Delta_1^{(1)})$. This implies that the algorithm converges to the asymmetric pure NE $(\theta_1,\theta_2)$ exponentially fast;
\item If $\Delta_1^{(1)}>0,$, $\Delta_2^{(1)}<0$, then we have $p_{1,1}^{(t)}\leq r_{1}^{(1)}\exp(\eta t\Delta_2^{(1)}),$ while $p_{2,2}^{(t)}\leq \frac{1}{r_{2}^{(1)}}\exp(-\eta t\Delta_1^{(1)})$. This implies that the algorithm converges to the asymmetric pure NE $(\theta_2,\theta_1)$  exponentially fast.
  \end{itemize}
\end{theorem}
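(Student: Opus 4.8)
The plan is to prove Theorem~\ref{thm:parttwo-easycase-formal} by exploiting the monotone feedback structure of the coupled update~\eqref{eq:rt-update}. Recall that in this regime $f$ is strictly decreasing, with unique root $r^* = \epsilon_2/|\epsilon_1|$, and $\text{sign}(f(r)) = -\text{sign}(r - r^*)$. The key observation is that $\Delta_i^{(t)} = f(r_i^{(t)})$, and $r_i^{(t+1)} = r_i^{(t)}\exp(\eta \Delta_j^{(t)})$, so the sign of $\Delta_j^{(t)}$ controls whether $r_i$ increases or decreases, and monotone movement of $r_i$ in turn feeds back monotonically into $\Delta_i$. I would handle the first bullet ($\Delta_1^{(1)} < 0$, $\Delta_2^{(1)} > 0$) in detail; the second is symmetric via the $\theta_1 \leftrightarrow \theta_2$ relabeling.

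First I would show, by induction on $t$, that $\Delta_1^{(t)} < 0$ and $\Delta_2^{(t)} > 0$ for all $t \geq 1$, and moreover that $\Delta_2^{(t)}$ is non-decreasing and $\Delta_1^{(t)}$ is non-increasing. The base case is the hypothesis. For the inductive step: since $\Delta_2^{(t)} > 0$, the update~\eqref{eq:rt-update} gives $r_1^{(t+1)} = r_1^{(t)}\exp(\eta\Delta_2^{(t)}) > r_1^{(t)}$, so $r_1$ strictly increases; since $f$ is decreasing and $\Delta_1^{(t)} = f(r_1^{(t)}) < 0$ already means $r_1^{(t)} > r^*$, we get $r_1^{(t+1)} > r^*$ as well, hence $\Delta_1^{(t+1)} = f(r_1^{(t+1)}) < f(r_1^{(t)}) = \Delta_1^{(t)} < 0$. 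Symmetrically, $\Delta_1^{(t)} < 0$ forces $r_2^{(t+1)} = r_2^{(t)}\exp(\eta\Delta_1^{(t)}) < r_2^{(t)}$, and since $\Delta_2^{(t)} > 0$ means $r_2^{(t)} < r^*$, we get $r_2^{(t+1)} < r^*$, hence $\Delta_2^{(t+1)} = f(r_2^{(t+1)}) > f(r_2^{(t)}) = \Delta_2^{(t)} > 0$. This establishes the monotonicity and sign invariance.

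Next I would convert this into explicit rates. Iterating $r_1^{(t+1)} = r_1^{(1)}\exp\!\big(\eta\sum_{k=1}^{t}\Delta_2^{(k)}\big)$ and using $\Delta_2^{(k)} \geq \Delta_2^{(1)} > 0$ for all $k$, we obtain $r_1^{(t+1)} \geq r_1^{(1)}\exp(\eta t \Delta_2^{(1)})$, hence $p_{1,2}^{(t+1)} = \frac{1}{1+r_1^{(t+1)}} \leq \frac{1}{r_1^{(t+1)}} \leq \frac{1}{r_1^{(1)}}\exp(-\eta t \Delta_2^{(1)})$. Similarly, $r_2^{(t+1)} = r_2^{(1)}\exp\!\big(\eta\sum_{k=1}^{t}\Delta_1^{(k)}\big) \leq r_2^{(1)}\exp(\eta t \Delta_1^{(1)})$ since each $\Delta_1^{(k)} \leq \Delta_1^{(1)} < 0$, giving $p_{2,1}^{(t+1)} = \frac{r_2^{(t+1)}}{1+r_2^{(t+1)}} \leq r_2^{(t+1)} \leq r_2^{(1)}\exp(\eta t \Delta_1^{(1)})$. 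Both bounds decay geometrically since $\Delta_2^{(1)} > 0$ and $\Delta_1^{(1)} < 0$ are fixed constants, so $p_{1,1}^{(t)} \to 1$ and $p_{2,2}^{(t)} \to 1$, i.e., convergence to the asymmetric pure NE $(\theta_1,\theta_2)$. (A small bookkeeping point: the statement is indexed by $p^{(t)}$ rather than $p^{(t+1)}$, so I would just shift the index, noting the bounds hold for $t \geq 1$ with the convention $r_i^{(1)}$ on the right.)

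I do not expect a serious obstacle here --- this is genuinely the ``easy" case, and the only thing to be careful about is getting the direction of every inequality right, since both $f$ being decreasing and the signs of $\epsilon_1, \epsilon_2$ contribute sign flips. The mild subtlety is establishing the monotonicity of the $\Delta_i^{(t)}$ sequences (not just their signs) so that the sums $\sum_k \Delta_j^{(k)}$ can be lower/upper bounded by $t$ times the initial term; without that one would only get convergence, not the stated exponential rate with the explicit constant $\Delta_j^{(1)}$. This is handled by the induction above, which couples "$r_i$ stays on the correct side of $r^*$" with "$r_i$ moves monotonically," so that $f$ evaluated along the trajectory is itself monotone.
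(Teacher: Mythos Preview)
Your proposal is correct and follows essentially the same argument as the paper: induct on $t$ to show that the opposite signs of $\Delta_1^{(t)}$ and $\Delta_2^{(t)}$ are preserved and in fact reinforced (because $f$ is strictly decreasing), then sum the resulting monotone sequences to get $r_1^{(t+1)} \geq r_1^{(1)}\exp(\eta t \Delta_2^{(1)})$ and $r_2^{(t+1)} \leq r_2^{(1)}\exp(\eta t \Delta_1^{(1)})$, and convert to the stated probability bounds. The paper's version is slightly terser but the content is identical; your observation about the $t$ versus $t+1$ indexing is a fair bookkeeping point that the paper glosses over.
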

We now provide the proof of Theorem~\ref{thm:parttwo-easycase-formal}, considering each condition listed above.

\paragraph{Condition 1: $\Delta_1^{(1)}<0$, $\Delta_2^{(1)}>0$ (Basic Case 1).} In this case, it can be easily shown by induction that the ratio of player $1$, $\{r_1^{(t)}\}_{t=1}^{\infty}$, is monotonically increasing, while the ratio of player $2$, $\{r_2^{(t)}\}_{t=1}^{\infty}$ is 
monotonically decreasing. 
To see this, first note that $f(r_1^{(1)})=\Delta_1^{(1)}<0$, and $f(r_2^{(1)})=\Delta_2^{(1)}>0$.
Equation~\eqref{eq:rt-update} then tells us that $r^{(2)}_1=r^{(1)}_1\exp(\eta f(r_{2}^{(1)}))>r_1^{(1)}$, while $r^{(2)}_2=r^{(1)}_2\exp(\eta f(r^{(1)}_1))<r^{(1)}_2$. Since $f$ is decreasing, $f(r_1^{(2)})$ becomes smaller (i.e. more negative) while  $f(r_2^{(2)})$  becomes larger (i.e. more positive). Therefore, we have
$$r_1^{(t+1)}=r_1^{(1)}\exp\left( \eta \sum_{k=1}^tf(r_2^{(k)})\right) \geq r_1^{(1)}\exp(\eta t f(r_2^{(1)}))=r_1^{(1)}\exp(\eta t\Delta_2^{(1)}),$$
and
$$r_2^{(t+1)}=r_2^{(1)}\exp\left( \eta \sum_{k=1}^tf(r_1^{(k)})\right) \leq r_2^{(1)}\exp(\eta t f(r_1^{(1)}))=r_2^{(1)}\exp(\eta t\Delta_1^{(1)}).$$
From the above, we get $p^{(t)}_{1,2}\leq \frac{1}{r_1^{(1)}}\exp(-\eta t\Delta_2^{(1)})$,
while 
$p^{(t)}_{2,1}\leq r^{(1)}_2\exp(\eta t\Delta_1^{(1)})$. 
Noting that $\Delta_1^{(1)} < 0$ and $\Delta_2^{(1)} > 0$ are constants, this proves the desired exponential rate of convergence to the asymmetric pure NE $(\theta_1,\theta_2)$.

\paragraph{Condition 2: $\Delta_1^{(1)}>0$, $\Delta_2^{(1)}<0$ (Basic Case 2).} This case is exactly symmetric to Condition 1. Similar to Condition 1, we can show (using induction and based on the fact that $f$ is decreasing) that the ratio of player $1$, $r_1^{(t)}$, is decreasing while the ratio of player $2$, $r_2^{(t)}$, is increasing. As a consequence, we have $p_{1,1}^{(t)}\leq r_{1}^{(1)}\exp(\eta t\Delta_2^{(1)}),$ while $p_{2,2}^{(t)}\leq \frac{1}{r_{2}^{(1)}}\exp(-\eta t\Delta_1^{(1)})$. 
Noting again that $\Delta_1^{(1)} > 0$ and $\Delta_2^{(1)} < 0$ are again constants, this proves the desired exponential rate of convergence to the asymmetric pure NE $(\theta_2,\theta_1)$.
\\

This argument further implies that, if there exists a finite round $t_0$ at which $\Delta_1^{(t)}$ and $\Delta_2^{(t)}$ have different signs, the algorithm thereafter converges exponentially fast to the asymmetric pure NE.

\subsection{Asymptotic Convergence in Same-Sign Initialization}
Next, we consider the more challenging initialization in which $\Delta_1^{(1)}$, $\Delta_2^{(1)}$ have the same sign, and $\Delta_1^{(1)}\not=\Delta_2^{(1)}.$

\paragraph{Condition 3: $\Delta_1^{(1)}$, $\Delta_2^{(1)}$ have the same sign, and $\Delta_1^{(1)}\not=\Delta_2^{(1)}.$} 

Note that since $f$ is strictly monotone and $f(r)=\Delta$, the map $r \to \Delta = f(r)$ is one-to-one. Thus, $\Delta_1^{(1)}\not=\Delta_2^{(1)}$ implies that $r_1^{(1)}\not =r_2^{(1)}$. Assume without loss of generality that $r_i^{(1)}<r_j^{(1)}$, for a pair of players $i \neq j\in\{1,2\}$. In this case, we will show that $\lim_{t\rightarrow\infty }r_j^{(t)}=\infty$ and $\lim_{t\rightarrow\infty }r_i^{(t)}=0$, that is, player $j$ converges to $\theta_1$ while player $i$ converges to $\theta_2$.  

Recall that $\epsilon_1<0$, $\epsilon_2>0$ implies that $f'<0$, i.e. $f(\cdot)$ is monotonically decreasing. Moreover, since $f(r^*)=0$, we have $\text{Sign}(r_i^{(t)}-r^*)=-\text{Sign}(\Delta_i^{(t)})$ for both $i$ and $j$. To show convergence to the pure NE, it is sufficient to show that $r_i^{(t)}-r^*$ and $r_j^{(t)}-r^*$ have opposite signs after a constant, finite number of time steps,  as the analysis then reduces to one of Basic Case 1 or Basic Case 2 above. 

We first show the following fundamental result, which indicates that the relation $r_j^{(t)}>r_i^{(t)}$ is preserved for all $t$. This lemma will be critical for our analysis.

\begin{lemma}
\label{lemma:rqgrp}
Assume that $r_j^{(1)}>r_i^{(1)}$. Then, we have $r_j^{(t)}>r_i^{(t)}$ for all $t\geq 1$. Moreover, we have $\ln\frac{r^{(t)}_j}{r^{(t)}_i}\geq \ln\frac{r^{(1)}_j}{r^{(1)}_i}>0$.  
\end{lemma}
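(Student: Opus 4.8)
\textbf{Proof plan for Lemma~\ref{lemma:rqgrp}.} The plan is to track a single scalar quantity, the log-ratio $L^{(t)} := \ln\!\frac{r_j^{(t)}}{r_i^{(t)}}$, and show by induction on $t$ that it is nondecreasing (in fact strictly increasing) and stays at least $L^{(1)} = \ln\!\frac{r_j^{(1)}}{r_i^{(1)}} > 0$. Both claims in the lemma follow immediately from this: $L^{(t)} \geq L^{(1)} > 0$ is exactly $\ln\!\frac{r_j^{(t)}}{r_i^{(t)}} \geq \ln\!\frac{r_j^{(1)}}{r_i^{(1)}} > 0$, which in turn gives $r_j^{(t)} > r_i^{(t)}$.

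The key computation is to exploit the ``cross" structure of the exponential weights update: player $i$'s ratio is updated using $f(r_j^{(t)})$ and player $j$'s using $f(r_i^{(t)})$. Dividing the two instances of \eqref{eq:rt-update} gives
\[
\frac{r_j^{(t+1)}}{r_i^{(t+1)}} = \frac{r_j^{(t)}}{r_i^{(t)}} \cdot \exp\!\left(\eta\big(f(r_i^{(t)}) - f(r_j^{(t)})\big)\right),
\]
so that, after taking logarithms,
\[
L^{(t+1)} = L^{(t)} + \eta\big(f(r_i^{(t)}) - f(r_j^{(t)})\big).
\]
Now recall that in Part II we have $\epsilon_1 < 0 < \epsilon_2$, hence $f'(r) = \frac{\epsilon_1 - \epsilon_2}{(1+r)^2} < 0$, i.e. $f$ is strictly decreasing on $[0,\infty)$. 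The induction then runs as follows. The base case is $L^{(1)} > 0$, which holds by the hypothesis $r_j^{(1)} > r_i^{(1)}$. For the inductive step, assume $L^{(t)} \geq L^{(1)} > 0$, equivalently $r_j^{(t)} > r_i^{(t)}$; then strict monotonicity of $f$ yields $f(r_i^{(t)}) - f(r_j^{(t)}) > 0$, so the display above gives $L^{(t+1)} > L^{(t)} \geq L^{(1)} > 0$, completing the induction.

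This argument is short and I do not anticipate a genuine obstacle; the only point requiring care is that the induction must carry \emph{both} facts simultaneously (the ordering $r_j^{(t)} > r_i^{(t)}$ is what licenses the use of monotonicity of $f$ at step $t$, which in turn produces the ordering at step $t+1$), and that it relies essentially on two structural features specific to this case: the antisymmetric coupling of the two players in the EW update, and the sign of $\epsilon_1 - \epsilon_2$ which makes $f$ decreasing. (In the same-sign cases $f$ would be increasing and the inequality would reverse, which is consistent with the fact that this lemma is invoked only within the $\epsilon_1<0<\epsilon_2$ analysis.)
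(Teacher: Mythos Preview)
Your proof is correct and essentially identical to the paper's: both proceed by induction on $t$, derive the recursion $\ln\frac{r_j^{(t+1)}}{r_i^{(t+1)}} = \ln\frac{r_j^{(t)}}{r_i^{(t)}} + \eta\big(f(r_i^{(t)}) - f(r_j^{(t)})\big)$ from the cross-coupled update \eqref{eq:rt-update}, and then use the strict monotone decrease of $f$ (since $\epsilon_1 - \epsilon_2 < 0$) together with the inductive hypothesis $r_j^{(t)} > r_i^{(t)}$ to conclude that the log-ratio is nondecreasing and bounded below by its initial value.
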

\begin{proof} We prove this lemma by induction. 
The relation holds for the base case ($t = 1$) by assumption, so we only need to prove the inductive step.
Suppose that $r_j^{(t)}>r_i^{(t)}$. Then, we have
$$ \ln r_j^{(t+1)} - \ln r_i^{(t+1)}=\ln\frac{r^{(t+1)}_j}{r^{(t+1)}_i} = \ln \frac{r^{(t)}_j}{r^{(t)}_i}+\eta[f(r_i^{(t)})-f(r_j^{(t)})]\geq \ln r_j^{(t)} - \ln r_i^{(t)}>0,$$
where the first inequality is based on the induction hypothesis, the positivity of $\eta$, and the fact that $f(\cdot)$ is monotonically decreasing.
This completes the proof of the lemma.
\end{proof}
The above lemma implies that $\lim_{t\rightarrow\infty} \frac{r_{j}^{(t)}}{r_{i}^{(t)}}=\infty$. However, it does not directly prove convergence of the algorithm, as it is possible for both the ratios $r_j^{(t)}$ and 
$r_i^{(t)}$ to oscillate, while \emph{their} ratio diverges to infinity.  We first rule out the following corner cases, where one of $r_j^{(t+1)}-r^*$ and  $r_i^{(t+1)}-r^*$ is equal to zero (note that based on Lemma \ref{lemma:rqgrp} they cannot be simultaneously equal to zero):
\begin{itemize}
    \item $r_i^{(t+1)}=r^*<r_j^{(t+1)}$: In this case, we have $r_j^{(t+2)}=r_j^{(t+1)}>r^*$, while $r_i^{(t+2)}=r_i^{(t+1)}\exp(\eta f(r_j^{(t+1)}))<r_i^{(t+1)}=0$.  Thus,  $r_j^{(t+2)}-r^*$ and  $r_i^{(t+2)}-r^*$ have different signs at round $t+2$.
    \item $r_i^{(t+1)}<r^*=r_j^{(t+1)}$: A similar argument yields $r_i^{(t+2)}<r^*$, while $r_j^{(t+2)}>r^*$. Again, $r_j^{(t+2)}-r^*$ and  $r_i^{(t+2)}-r^*$ have different signs at round $t+2$.
\end{itemize}
In summary, these corner cases, if they occur, reduce the problem to one of the Basic Cases above, after which the algorithm would converge to the pure NE. 

We now provide the argument when such corner cases do not manifest. 
In other words, we consider $r_j^{(1)}-r^*$ and $r_i^{(1)}-r^*$ to both be non-zero and have the same sign. Suppose that $r_j^{(t)}-r^*$ and $r_i^{(t)}-r^*$ have the same sign at round $t$. Then from round $t$ to round $t+1$, 
the sign of  $r_j^{(t)}-r^*$ and $r_i^{(t)}-r^*$ can change in only three ways (excluding the corner cases):
\begin{itemize}
    \item Event 1 (``2-sign-flip"): $r_j^{(t+1)}-r^*$ and $r_i^{(t+1)}-r^*$ still have the same sign, but different from   sign of  $r_j^{(t)}-r^*$ and $r_i^{(t)}-r^*$;
    \item Event 2 (``0-sign-flip"):  $r_j^{(t+1)}-r^*$ and $r_i^{(t+1)}-r^*$ still have the same sign, and also the same as the sign of $r_j^{(t)}-r^*$ and $r_i^{(t)}-r^*$;
    \item Event 3 (``1-sign-flip"): $r_j^{(t+1)}-r^*$ and $r_i^{(t+1)}-r^*$ have different signs. Based on Lemma \ref{lemma:rqgrp}, we must then have $r_j^{(t+1)}>r^*>r_i^{(t+1)}$. 
\end{itemize}

Note that, once Event 3 happens,  the problem again reduces to the Basic Cases.  Therefore, it is sufficient to show that the first two ``bad events" only happen finitely many times, and Event 3 must eventually happen. To do so, we state the following key lemma, which shows that 2-sign flips can only happen a finite number of times. The proof of Lemma~\ref{lemmmm:noffilop} is deferred to Appendix \ref{Sectionprofejjrhrhueihi}.

\begin{lemma}
\label{lemmmm:noffilop}
Suppose $\epsilon_1<0$, $\epsilon_2>0$. Let $n$ be the number of times that a ``$2$-sign flip" (Event 1) happens. Define the universal constants
\begin{align*}
\beta&=\eta\cdot\max\{-\epsilon_1,\epsilon_2\}, \\
C&=\eta(\epsilon_2-\epsilon_1)\min\left\{\frac{r^{*}\exp(-\beta)}{(1+r^{*}\exp(-\beta))^2},\frac{r^{*}\exp(\beta)}{(1+r^{*}\exp(\beta))^2}\right\}>0.
\end{align*}
Further, let $W_1=\ln\frac{r^{(1)}_j}{r^{(1)}_i}>0$. Then, we have
\[
n\leq\frac{1}{\ln(1+C)}\ln\frac{2\beta}{W_1}.
\]
\end{lemma}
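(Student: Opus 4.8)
The plan is to control the quantity $u_i^{(t)} := \ln(r_i^{(t)}/r^*)$ for each player and track how a ``$2$-sign-flip'' forces this quantity to shrink geometrically toward zero, while simultaneously $W_t := \ln(r_j^{(t)}/r_i^{(t)})$ stays bounded below by $W_1 > 0$ (Lemma~\ref{lemma:rqgrp}). Observe first that, since $\Delta_i^{(t)} = f(r_i^{(t)})$ and $|\Delta_i^{(t)}| \leq \max\{-\epsilon_1,\epsilon_2\}$, every single-step update satisfies $|u_i^{(t+1)} - u_i^{(t)}| = \eta|\Delta_j^{(t)}| \leq \beta$. In particular, immediately after any $2$-sign-flip at round $t\to t+1$ (where $u_i^{(t)}, u_j^{(t)}$ share a sign and $u_i^{(t+1)}, u_j^{(t+1)}$ share the opposite sign), both $|u_i^{(t+1)}|$ and $|u_j^{(t+1)}|$ are at most $\beta$ — the flipped values cannot have overshot the origin by more than one step size. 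So after the first flip, the iterates are confined to the band $|u| \leq \beta$, and we only need to bound the number of subsequent flips.

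The core of the argument is a potential/monotonicity estimate for $W_t = u_j^{(t)} - u_i^{(t)}$. From the update $u_i^{(t+1)} = u_i^{(t)} + \eta f(r_j^{(t)})$ and $u_j^{(t+1)} = u_j^{(t)} + \eta f(r_i^{(t)})$ we get $W_{t+1} - W_t = \eta[f(r_i^{(t)}) - f(r_j^{(t)})]$, which is nonnegative and, by the mean value theorem applied to the strictly decreasing $f$ (with $f'(r) = (\epsilon_1-\epsilon_2)/(1+r)^2$), equals $\eta(\epsilon_2-\epsilon_1)\cdot \frac{W_t \cdot r_\xi}{(1+r_\xi)^2 \cdot(\text{something})}$ for an intermediate point. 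Here is where the band $|u|\leq\beta$ is used: once all iterates lie in $r^*\exp(-\beta) \leq r \leq r^*\exp(\beta)$, the factor $\frac{r}{(1+r)^2}$ is bounded below by $\min\{\frac{r^*\exp(-\beta)}{(1+r^*\exp(-\beta))^2}, \frac{r^*\exp(\beta)}{(1+r^*\exp(\beta))^2}\}$, and one obtains a multiplicative lower bound $W_{t+1} \geq (1+C)\, W_t$ precisely at rounds where a $2$-sign-flip occurs (the sign-flip is what guarantees $r_i^{(t)}$ and $r_j^{(t)}$ straddle $r^*$ in the relevant way so that the increment is proportional to $W_t$ rather than something smaller). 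Combining: after $n$ flips, $W_{t} \geq (1+C)^{n-1} W_1$ (using $W_1$ as the baseline and noting $W$ never decreases), but also $W_t = u_j^{(t)} - u_i^{(t)} \leq |u_j^{(t)}| + |u_i^{(t)}| \leq 2\beta$ since we are in the band. Hence $(1+C)^{n-1} W_1 \leq 2\beta$, which rearranges to $n - 1 \leq \frac{1}{\ln(1+C)}\ln\frac{2\beta}{W_1}$, and absorbing the $-1$ (or being slightly more careful about whether the first flip counts) gives the stated bound $n \leq \frac{1}{\ln(1+C)}\ln\frac{2\beta}{W_1}$.

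The step I expect to be the main obstacle is establishing the \emph{multiplicative} lower bound $W_{t+1} \geq (1+C) W_t$ at a $2$-sign-flip round, as opposed to merely an additive one. The subtlety is that $W_{t+1} - W_t = \eta[f(r_i^{(t)}) - f(r_j^{(t)})]$, and to turn the right-hand side into a constant multiple of $W_t = \ln(r_j^{(t)}/r_i^{(t)})$ one must relate the \emph{difference of $f$-values} to the \emph{difference of logs of $r$-values}. This requires writing $f(r_i^{(t)}) - f(r_j^{(t)}) = \int_{u_i^{(t)}}^{u_j^{(t)}} \frac{d}{du} f(r^* e^u)\, du$ and lower-bounding $\bigl|\frac{d}{du}f(r^*e^u)\bigr| = \bigl|(\epsilon_2-\epsilon_1)\frac{r^*e^u}{(1+r^*e^u)^2}\bigr|$ uniformly over the band $u \in [-\beta,\beta]$; the one-step-Lipschitz confinement to this band (valid only \emph{after} the first flip) is exactly what makes this uniform lower bound possible, and it is the reason the constant $C$ in the lemma statement has the form it does. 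A secondary bit of care is needed to argue that we may indeed start counting from the band: if the very first flip happens at round $t_0$, then for all $t > t_0$ the iterates are in the band, $W$ is nondecreasing throughout (including before $t_0$, so $W_{t_0+1} \geq W_1$), and every subsequent flip multiplies $W$ by at least $1+C$; since $W$ is capped at $2\beta$ inside the band, only finitely many such flips can occur. The corner cases where some $r^{(t)} - r^*$ vanishes exactly are already handled separately in the main text (they reduce to the Basic Cases), so they need not be revisited here.
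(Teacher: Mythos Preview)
Your approach is essentially the paper's: track the potential $W_t = u_j^{(t)} - u_i^{(t)}$, show it grows multiplicatively by a factor at least $1+C$ at each $2$-sign-flip round, and cap it by $2\beta$. The only substantive difference is how the band constraint is obtained. Rather than your ``post-flip overshoot plus confinement'' route, the paper proves directly (Lemma~\ref{lem:uusdnsdji}) that the \emph{pre-flip} values satisfy $\max\{|u_i^{(t)}|,|u_j^{(t)}|\}\leq\beta$ at any round $t$ where a $2$-sign-flip occurs. The argument is a one-line contradiction: if, say, $u_i^{(t)}<-\beta$ while $u_j^{(t)}<0$, then $u_i^{(t+1)}=u_i^{(t)}+\eta f(r_j^{(t)}) < -\beta+\eta\epsilon_2\leq 0$, so $u_i$ could not have flipped sign. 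This avoids having to reason about what happens between flips.

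Two points in your write-up need fixing. First, your confinement claim (``for all $t>t_0$ the iterates are in the band'') is stated without justification; it is true, but you must note that during Event~2 rounds both $u_i,u_j$ move monotonically toward $0$ (since $f(r)$ has the opposite sign to $u$), so they cannot escape the band they entered after the previous flip. Second, and more seriously, your parenthetical explanation --- ``the sign-flip is what guarantees $r_i^{(t)}$ and $r_j^{(t)}$ straddle $r^*$'' --- is incorrect. At a $2$-sign-flip round the pre-flip values $u_i^{(t)},u_j^{(t)}$ share a sign by definition, so $r_i^{(t)},r_j^{(t)}$ lie on the \emph{same} side of $r^*$. The multiplicative identity $W_{t+1}=(1-\eta g'(\delta_t))W_t$ with $g(u)=f(r^*e^u)$ and $\delta_t\in[u_i^{(t)},u_j^{(t)}]$ holds at \emph{every} round, and the lower bound $1-\eta g'(\delta_t)\geq 1+C$ holds whenever $\delta_t\in[-\beta,\beta]$; the flip event plays no role in making the increment proportional to $W_t$. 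Its only role is to certify that the iterates sit in the band at that round --- which is exactly what the paper's Lemma~\ref{lem:uusdnsdji} (or your overshoot-plus-confinement argument, once properly justified) delivers.
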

To show that Event 2 (``0-sign-flip") happens finitely often, we \emph{disprove} the following claim.
We can now finish the proof by \emph{disproving} the following claim.  
\begin{claim}\label{clm:0signflip}
Let $t_0$ be the smallest round such that all ``2-sign-flips" have finished (as per Lemma~\ref{lemmmm:noffilop}, $t_0$ will be finite).
Moreover, suppose that $r_j^{(t_0)} - r^*$ and $r_i^{(t_0)} - r^*$ still have the same sign.
Then, both $r_j^{(t)}-r^*$ and $r_i^{(t)}-r^*$ do not change their sign for all $t>t_0$.
\end{claim}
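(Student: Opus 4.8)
The final statement, Claim~\ref{clm:0signflip}, is the last ``bad'' configuration that must be eliminated, and in keeping with the surrounding text the plan is to \emph{disprove} it. Concretely, I will show that its conclusion is impossible: after round $t_0$ (by which, by Lemma~\ref{lemmmm:noffilop}, every ``2-sign-flip'' has already occurred), the common sign of $r_i^{(t)}-r^*$ and $r_j^{(t)}-r^*$ cannot be retained for all $t>t_0$. Granting this, the only surviving way for a sign to change is a ``1-sign-flip'' (Event 3), since ``2-sign-flips'' are exhausted and a ``0-sign-flip'' is by definition no change; by Lemma~\ref{lemma:rqgrp} such a flip must land in the configuration $r_j^{(t)}>r^*>r_i^{(t)}$, which reduces the dynamics to Basic Case~1 or~2 and yields exponential convergence to an asymmetric pure NE. Thus refuting Claim~\ref{clm:0signflip} is exactly what shows Event 2 occurs only finitely often and closes the argument.

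To disprove the claim I assume, toward a contradiction, that its conclusion holds: for all $t>t_0$ the quantities $r_i^{(t)}-r^*$ and $r_j^{(t)}-r^*$ keep their common sign. Treat the case where both are positive, i.e. $r_i^{(t)},r_j^{(t)}>r^*$ throughout (the negative case is symmetric). Since $f$ is strictly decreasing with $f(r^*)=0$, we get $\Delta_i^{(t)}=f(r_i^{(t)})<0$ and $\Delta_j^{(t)}=f(r_j^{(t)})<0$ for every such $t$, so by the update~\eqref{eq:rt-update} both $\{r_i^{(t)}\}$ and $\{r_j^{(t)}\}$ are strictly decreasing and bounded below by $r^*$. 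Hence they converge to finite limits $L_i,L_j$ with $L_j\geq L_i\geq r^*>0$.

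The contradiction is then extracted from Lemma~\ref{lemma:rqgrp}. That lemma gives that $\ln\frac{r_j^{(t)}}{r_i^{(t)}}$ is non-decreasing, bounded below by $W_1>0$, with per-round increment $\eta\,[f(r_i^{(t)})-f(r_j^{(t)})]>0$ (positive because $r_j^{(t)}>r_i^{(t)}$ and $f$ is strictly decreasing). If $L_j>L_i$, this increment converges to $\eta[f(L_i)-f(L_j)]>0$, forcing $\ln\frac{r_j^{(t)}}{r_i^{(t)}}\to\infty$, i.e. $r_j^{(t)}/r_i^{(t)}\to\infty$, which contradicts the finite limit $L_j/L_i$. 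If instead $L_j=L_i$, then $r_j^{(t)}/r_i^{(t)}\to1$ and $\ln\frac{r_j^{(t)}}{r_i^{(t)}}\to0$, contradicting that this quantity is non-decreasing and never falls below $W_1>0$. Either way we reach a contradiction, so the conclusion of Claim~\ref{clm:0signflip} cannot hold and the signs must eventually change.

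The delicate point --- and where I expect the main obstacle --- is the case split on the limits $L_i,L_j$: both $L_j>L_i$ and the degenerate possibility $L_j=L_i$ have to be ruled out, and this relies on the strong form of Lemma~\ref{lemma:rqgrp} (that the log-ratio is bounded strictly away from $0$, not merely positive). A secondary check is that the eventual sign change is genuinely a 1-sign-flip and not a 2-sign-flip: this holds because $t_0$ is taken after the final 2-sign-flip via Lemma~\ref{lemmmm:noffilop}, and because $r_j^{(t)}>r_i^{(t)}$ forces the smaller, faster-decaying sequence $r_i^{(t)}$ to cross $r^*$ first while $r_j^{(t)}$ remains above it, exactly producing Event 3 (modulo the boundary cases $r_i^{(t)}=r^*$ or $r_j^{(t)}=r^*$, already reduced to the Basic Cases earlier).
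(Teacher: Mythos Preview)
Your proposal is correct and follows essentially the same line as the paper: assume the signs never change, use monotonicity of $f$ to show both ratios converge, and then obtain a contradiction with the strict lower bound $\ln\tfrac{r_j^{(t)}}{r_i^{(t)}}\geq W_1>0$ from Lemma~\ref{lemma:rqgrp}. The only cosmetic difference is that the paper identifies both limits as $r^*$ directly (via $f(\ell_m)=\lim_{t\to\infty}\tfrac{1}{\eta}(\ln r_n^{(t+1)}-\ln r_n^{(t)})=0$), whereas you instead case-split on $L_j>L_i$ versus $L_j=L_i$; both routes lead to the same contradiction.
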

Suppose that the claim above is true. Lemma~\ref{lemma:rqgrp} tells us that $r_j^{(t)}>r_i^{(t)}$ for all $t$. First, suppose that $r_j^{(t)}>r_i^{(t)}>r^*$ for all $t>t_0$. Since $f(\cdot)$ is strictly decreasing, we have for any player $m \in\{1,2\}$ that $f(r_m^{(t)})<f(r^*)=0$.
Equation~\eqref{eq:rt-update} would then give us
\[
r_m^{(t+1)}=r_m^{(t)}\exp\bigl(\eta f(r_n^{(t)})\bigr),
\]
for any player pair $m \neq n \in \{1,2\}$.
This implies that the ratio $\{r_m^{(t)}\}_{t \geq 1}$ would be decreasing for each player $m \in \{1,2\}$.
Moreover, $r_m^{(t)}\geq 0$ since it is a probability ratio. Therefore, based on the monotone convergence theorem, we have $\lim_{t\rightarrow \infty}r^{(t)}_m=\ell_m$ for some constant $\ell_m\geq 0$. Further, since $f(\cdot)$ is continuous and monotone, we have
\[
 f(\ell_m)=f \left(\lim_{t\rightarrow\infty }r^{(t)}_m\right)=\lim_{t\rightarrow\infty }f(r^{(t)}_m)
   =\frac{1}{\eta}\lim_{t\rightarrow\infty }\bigl(\ln r^{(t+1)}_n -\ln r^{(t)}_n\bigr)=0,
\]
i.e., $f(\ell_m)=0$. Since $f$ is strictly monotone and $f(r^*)=0$, we have $\ell_m=r^*$ for both players $m\in\{1,2\}$. Thus, we have
\[
 \lim\limits_{t\rightarrow \infty} \ln\frac{r_j^{(t)}}{r_i^{(t)}} 
   =  \lim\limits_{t\rightarrow \infty} \ln{r_j^{(t)}}-\ln{r_i^{(t)}}=0.
\]
However, Lemma~\ref{lemma:rqgrp} implies that $\ln{r_j^{(t)}}-\ln{r_i^{(t)}}\geq W_1>0$, which contradicts the above equation. \\

We similarly handle the other possible situation, where $r_i^{(t)}<r_j^{(t)}<r^*$ for all $t>t_0$. Since $f(\cdot)$ is strictly decreasing, we have for any player $m \in\{1,2\}$ that $f(r_m^{(t)})>f(r^*)=0$.
Equation~\eqref{eq:rt-update} would then give us
\[
r_m^{(t+1)}=r_m^{(t)}\exp \bigl(\eta f(r_n^{(t)})\bigr)
\]
for any player pair $m \neq n \in \{1,2\}$.
This implies that the ratio $\{r_m^{(t)}\}_{t \geq 1}$ would be increasing for each player $m \in \{1,2\}$.
Moreover, under the claim that the signs do not change for either player, we would have $r_m^{(t)}\leq r^*$. 
Therefore, based on the monotone convergence theorem, we have $\lim_{t\rightarrow \infty}r^{(t)}_m=\ell_m$ for some constant $\ell_i\leq r^*$. Further, since $f$ is continuous and monotone, we have
\[
 f(\ell_m)=f \left(\lim_{t\rightarrow\infty }r^{(t)}_m\right)
 =\lim_{t\rightarrow\infty }f(r^{(t)}_m)
 =\frac{1}{\eta}\lim_{t\rightarrow\infty }\bigl(\ln r^{(t+1)}_n -\ln r^{(t)}_n\bigr)=0,
\]
i.e., $f(\ell_m)=0$. Since $f$ is monotone and $f(r^*)=0$, we have $\ell_m=r^*$ for both players $m\in\{1,2\}$. Thus, we have
\[
 \lim\limits_{t\rightarrow \infty} \ln\frac{r_j^{(t)}}{r_i^{(t)}} 
 =  \lim\limits_{t\rightarrow \infty} \ln{r_j^{(t)}}-\ln{r_i^{(t)}}=0.
\]
However, Lemma~\ref{lemma:rqgrp} implies that $\ln{r_j^{(t)}}-\ln{r_i^{(t)}}\geq W_1>0$, which contradicts the above equation.

This completes the \emph{disproof} of Claim~\ref{clm:0signflip}.
Combined with Lemma~\ref{lem:::verysimpleobsercation}, this means that a $1$-sign-flip (Event 3) must happen at some finite round, which completes the proof of Theorem~\ref{thm:e1l0e2g0} in this case.

\subsection{Convergence and Divergence in the Identical Initialization Case}
Finally, we consider the case of equal initialization, where $\Delta_1^{(1)}=\Delta_2^{(1)}$.

\paragraph{Condition 4: $\Delta_1^{(1)}=\Delta_2^{(1)}$, $\eta<\frac{8}{|\epsilon_1|+|\epsilon_2|}.$}
For convenience, we define $u_i^{(t)}=\ln\frac{r_{i}^{(t)}}{r^*}.$ We can then define a $1$-dimensional dynamical system for each player, given by
$$ u_i^{(t+1)}= \ln\frac{r^{(t+1)}_i}{r^*}=u_i^{(t)}+\eta\,f(r^*\exp(u_{j}^{(t)})). $$ 
Since $\Delta_1^{(1)}=\Delta_2^{(1)}$ and both players use the same update rule~\eqref{eq:rt-update}, we observe that $r^{(t)}_1=r^{(t)}_2$ for all $t \geq 1$, and the same holds for $u_i^{(t)}$. Therefore, we can denote a common variable $u^{(t)} := u_i^{(t)}$ and write its update in the following form:
$$ u^{(t+1)}=T(u^{(t)}), $$
where we define
$$ T(u)= u+\eta\,\frac{\epsilon_1r^*\exp(u)+\epsilon_2}{1+r^*\exp(u)}.$$
We now have 
$$T'(u) = 1+ \eta\,\frac{(\epsilon_1 - \epsilon_2)r^* \exp(u)}{(1 + r^* \exp(u))^2}= 1-\eta\,\frac{\Gamma r^* \exp(u)}{(1 + r^* \exp(u))^2}.$$
where we denote $\Gamma=|\epsilon_1|+\epsilon_2>0$ as shorthand. Also note that $T(0)=0$ is a fixed point.
Because $T$ is a continuous function, we can use the mean value theorem to write
$$ |T(u)-T(0)|=|T'(u')||u-0|$$
for any value $u$, where $u'$ is some point between $u$ and $0$. Under our assumption on the step size, we have $\eta\Gamma\in(0,8)$. In this case, based on the fact that $\max_{z\geq 0}\frac{z}{(1+z)^2}=\frac{1}{4}$, it can be easily seen that $|T'(u')|<1$ for all $u'$. Putting these together, we see that $T$ is a \emph{contraction map}.
Applying Banach's fixed point theorem then implies that $u^{(t)} \to 0$.

\paragraph{Condition 5: $\Delta_1^{(1)}=\Delta_2^{(1)}$, $\eta>\frac{8}{|\epsilon_1|+|\epsilon_2|}.$} Here, we construct a simple example in which the algorithm oscillates. For simplicity, let $\eta=1$ and $|\epsilon_1|=\epsilon_2$.
This means that $r^*=1$, and the map $T$ simplifies to
$$T(u)=u+\epsilon_1\frac{\exp(u)-1}{1+\exp(u)}=u+\epsilon_1 \text{tanh}\left(\frac{u}{2}\right).$$
Since $\text{tanh}(u)$ and $u$ are both odd functions in $u$, $T$ is also an odd function. Formally, we have $T(-u)=-T(u)$. Let $a>0$ be any constant, and $$\epsilon_1=-2a \text{coth}\left(\frac{a}{2}\right), \ \ \epsilon_2=2a \text{coth}\left(\frac{a}{2}\right).$$
Note that $a\text{coth}(\frac{a}{2})> 2$, so $\eta(|\epsilon_1|+|\epsilon_2|)>8$ in this case. Therefore, we have
 \[
T(a) = a + \varepsilon_1 \tanh \left(\tfrac{a}{2}\right)
= a - 2 a \coth \left(\tfrac{a}{2}\right)\tanh \left(\tfrac{a}{2}\right)
= a - 2 a \,(\tanh \cdot \coth) = -a,
\]
and because $T$ is an odd function we have $T(-a)= -T(a) = -(-a) = a$. Therefore, the algorithm will oscillate if initialized at $a$ or $-a$. 
Note that such a game can be defined for any positive constant $a$.

\subsection{Proof of Lemma \ref{lemmmm:noffilop} } 
\label{Sectionprofejjrhrhueihi}

We complete this section by providing the proof of Lemma~\ref{lemmmm:noffilop}.
Recall that, without loss of generality, we index a player pair $i \neq j \in \{1,2\}$ such that $r_j^{(1)}>r_i^{(1)}$.  We also define $u_m^{(t)}= \ln\frac{r^{(t)}_m}{r^*}$ for each player $m \in \{1,2\}$.
Since $r_i^{(1)}<r_j^{(1)}$, we have $u^{(1)}_i<u^{(1)}_j$. Moreover, for each player $m\in\{1,2\}$, $u^{(t)}_m$ and $r^{(t)}_m-r^*$ have the same sign. Reproducing the main argument in the proof of Lemma~\ref{lemma:rqgrp} below, we also have
\begin{equation}
u^{(t+1)}_j- u^{(t+1)}_i =\ln\frac{r_j^{(t+1)}}{r_i^{(t+1)}} = \ln \frac{r_j^{(t)}}{r_i^{(t)}} + \eta[f(r_i^{(t)})-f(r_j^{(t)})]> \ln \frac{r_j^{(t)}}{r_i^{(t)}}= u^{(t)}_j -u^{(t)}_i >0.
\end{equation}
Next, we show that, if Event 1 (i.e. ``2-sign-flip") happens at round $t$, then  $|u^{(t)}_m|$ must be small for both players $m \in \{1,2\}$.

\begin{lemma}
\label{lem:uusdnsdji}
Suppose $u^{(t)}_i$ and $u^{(t)}_j$ have the same sign, and the sign of both $u^{(t+1)}_i$ and $u^{(t+1)}_j$ flip. Then, we must have $\max\{|u^{(t)}_i|,|u_{j}^{(t)}|\}\leq \beta = \eta\cdot \max\{-\epsilon_1,\epsilon_2\}$.
\end{lemma}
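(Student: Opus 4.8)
The plan is to unwind the definition $u_m^{(t)}=\ln\frac{r_m^{(t)}}{r^*}$ and turn the lemma into a one-step sign computation. First I would record that, for the two players $m\neq n$, the update \eqref{eq:rt-update} in logarithmic form reads $u_m^{(t+1)}=u_m^{(t)}+\eta f(r_n^{(t)})=u_m^{(t)}+\eta\Delta_n^{(t)}$, so that the motion of player $m$'s coordinate is driven by the \emph{other} player's functional $\Delta_n^{(t)}$. Since $f$ is strictly decreasing with unique root $r^*$ and $\ln(\cdot/r^*)$ is increasing with a zero at $r^*$, we get the sign identity $\operatorname{sign}(\Delta_n^{(t)})=\operatorname{sign}(f(r_n^{(t)}))=-\operatorname{sign}(r_n^{(t)}-r^*)=-\operatorname{sign}(u_n^{(t)})$, which is the key structural fact.

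Second, I would establish the uniform bound $|f(r)|\le\max\{-\epsilon_1,\epsilon_2\}$ for every $r\ge 0$: because $f$ is continuous and strictly decreasing on $[0,\infty)$ with $f(0)=\epsilon_2>0$ and $f(r)\to\epsilon_1<0$ as $r\to\infty$, its range lies in $(\epsilon_1,\epsilon_2)$, hence $|\Delta_m^{(t)}|=|f(r_m^{(t)})|\le\max\{-\epsilon_1,\epsilon_2\}$ for all $t$ and $m$, i.e.\ $\eta|\Delta_m^{(t)}|\le\beta$.

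Third, the case analysis. Suppose $u_i^{(t)}>0$ and $u_j^{(t)}>0$. By the sign identity, $\Delta_i^{(t)}<0$ and $\Delta_j^{(t)}<0$, so $u_i^{(t+1)}=u_i^{(t)}+\eta\Delta_j^{(t)}$ and $u_j^{(t+1)}=u_j^{(t)}+\eta\Delta_i^{(t)}$ are both strictly smaller than their predecessors. A ``$2$-sign-flip'' forces $u_i^{(t+1)}\le 0$ and $u_j^{(t+1)}\le 0$, which gives $u_i^{(t)}\le-\eta\Delta_j^{(t)}=\eta|\Delta_j^{(t)}|\le\beta$ and $u_j^{(t)}\le-\eta\Delta_i^{(t)}=\eta|\Delta_i^{(t)}|\le\beta$; since both are positive, $\max\{|u_i^{(t)}|,|u_j^{(t)}|\}\le\beta$. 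The case $u_i^{(t)}<0$, $u_j^{(t)}<0$ is symmetric: now $\Delta_i^{(t)},\Delta_j^{(t)}>0$, both coordinates strictly increase, and a flip to positivity forces $-u_i^{(t)}\le\eta\Delta_j^{(t)}\le\beta$ and $-u_j^{(t)}\le\eta\Delta_i^{(t)}\le\beta$, again yielding $\max\{|u_i^{(t)}|,|u_j^{(t)}|\}\le\beta$.

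I do not anticipate a genuine obstacle; the statement is essentially immediate once the update is written in the $u$-coordinates. The only points requiring care are (i) keeping straight the ``cross'' structure of the dynamics, namely that player $i$'s increment is $\eta\Delta_j^{(t)}$ rather than $\eta\Delta_i^{(t)}$, and (ii) correctly reading off the endpoint behavior of $f$ on $[0,\infty)$ to obtain the uniform bound $\beta=\eta\max\{-\epsilon_1,\epsilon_2\}$; the remainder is sign bookkeeping. (This lemma will then feed into the potential argument for Lemma~\ref{lemmmm:noffilop}, where the smallness of $\max\{|u_i^{(t)}|,|u_j^{(t)}|\}$ at a $2$-sign-flip is combined with the geometric growth of $u_j^{(t)}-u_i^{(t)}$ from Lemma~\ref{lemma:rqgrp}.)
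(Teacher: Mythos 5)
Your proposal is correct and follows essentially the same route as the paper: both rest on the uniform bound $\epsilon_1\le f(r)\le\epsilon_2$ (so $\eta|\Delta_m^{(t)}|\le\beta$) together with the cross-coupled update $u_m^{(t+1)}=u_m^{(t)}+\eta\Delta_n^{(t)}$, yours phrased as a direct rearrangement at a flip and the paper's as the contrapositive (assume $|u|>\beta$ and show the sign cannot flip). The only cosmetic difference is that you bound both $|u_i^{(t)}|$ and $|u_j^{(t)}|$ individually, while the paper bounds only the larger one using the ordering from Lemma~\ref{lemma:rqgrp}; either way the conclusion is identical.
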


\begin{proof}\ First, note that $f$ is monotonically decreasing, $f(0)=\epsilon_2>0$, and $f(\infty)=\epsilon_1<0$. 
This means that $\epsilon_1 \leq f(r) \leq \epsilon_2$ for any ratio $r$.

Next, consider the case where $u^{(t)}_m<0$, and $u^{(t+1)}_m>0$ for both players $m \in \{1,2\}$.
We would like to show that in this case that $\max\{-u^{(t)}_1, -u^{(t)}_2\}\leq \beta.$ We will prove this statement by contradiction. Since we have $- u_i^{(t)} > - u_j^{(t)}$, we suppose instead that $-u_i^{(t)}>\beta>\eta\epsilon_2$, i.e., $u_i^{(t)}<-\beta<-\eta\epsilon_2.$ Then, we have 
$$ u_i^{(t+1)} = u_i^{(t)} + \eta f(r^*\exp(u^{(t)}_j))<-\eta\epsilon_2+\eta\epsilon_2=0,$$
where the inequality above substitutes our supposition $u_i^{(t)} < -\eta \epsilon_2$ and uses the fact that $f(r) \leq \epsilon_2$ for any value of $r$.
Overall, this yields $u_i^{(t+1)} < 0$, which contradicts our assumption of ``2-sign-flip" which would instead yield $u^{(t+1)}_i>0$. 
Similarly, consider the case where $u^{(t)}_m>0$, and $u^{(t+1)}_m<0$ for both players $m\in\{1,2\}$. We will again prove the statement by contradiction. Suppose instead that $u_j^{(t)}>\beta> - \eta\epsilon_1$. Then, we have
$$u^{(t+1)}_j = u_j^{(t)}+\eta f(r^*\exp(u^{(t)}_i))>-\eta\epsilon_1+\eta\epsilon_1=0, $$
where the inequality above substitutes our supposition $u_j^{(t)} > - \eta \epsilon_1$ and uses the fact that $f(r) \geq \epsilon_1$ for any value of $r$.
Overall, this yields $u_j^{(t+1)} > 0$, which contradicts our assumption of ``2-sign-flip" which would instead yield $u_j^{(t+1)} < 0$.
This completes the proof of this lemma.
\end{proof}

Armed with Lemma~\ref{lem:uusdnsdji}, we are ready to prove Lemma~\ref{lemmmm:noffilop}. Define the potential function $W_t=u_j^{(t)}-u_i^{(t)}$. We know from Lemma~\ref{lemma:rqgrp} that $W_t$ is monotonically increasing. Then, we have 
$$W_{t+1} =u_j^{(t+1)}-u_i^{(t+1)}= u_j^{(t)}-u_i^{(t)} +\eta(g(u_i^{(t)})-g(u_j^{(t)}))=(1-\eta g'(\delta_t))(u_j^{(t)}-u_i^{(t)}), $$
where we defined $g(u):=f(r^*\exp(u))$ as shorthand.
As it is easy to see that $g(u)$ is monotone and continuous, the last equality above is based on the mean value theorem (where $\delta^{(t)}\in[u^{(t)}_i,u^{(t)}_j]$). Note that for any $|u|\leq \beta$, we have
$$ 1-\eta g'(u) = 1+\eta\frac{(\epsilon_2-\epsilon_1)r^*\exp(u)}{(1+r^*\exp(u))^2}\geq 1+ \eta(\epsilon_2-\epsilon_1)\min\left\{\frac{r^{*}\exp(-\beta)}{(1+r^{*}\exp(-\beta))^2},\frac{r^{*}\exp(\beta)}{(1+r^{*}\exp(\beta))^2}\right\}. $$
Let $\{t_1,\ldots,t_n\}$ denote the rounds on which ``2-sign-flip" happens.
Using the monotonically increasing property of $\{W_t\}_{t \geq 1}$, we have
$$W_{t_n}\geq (1+C)W_{t_{n-1}} \ldots \geq (1+C)^n W_1,$$
where $n$ is the total number of rounds where Event 1 happens, and 
\[
C=\eta(\epsilon_2-\epsilon_1)\min\left\{\frac{r^{*}\exp(-\beta)}{(1+r^{*}\exp(-\beta))^2},\frac{r^{*}\exp(\beta)}{(1+r^{*}\exp(\beta))^2}\right\}>0.
\] 
On the other hand, Lemma~\ref{lem:uusdnsdji} tells us that $W_{t_n}\leq 2\beta$.
Combining the upper and lower bounds on $W_{t_n}$ completes the proof of Lemma~\ref{lemmmm:noffilop}.

\section{Proof of Theorem \ref{thm:theoppositcase}}
\label{appendix:thm:theoppositcase}

\begin{figure}
  \centering

  \begin{subfigure}{0.45\textwidth}
    \includegraphics[width=\linewidth]{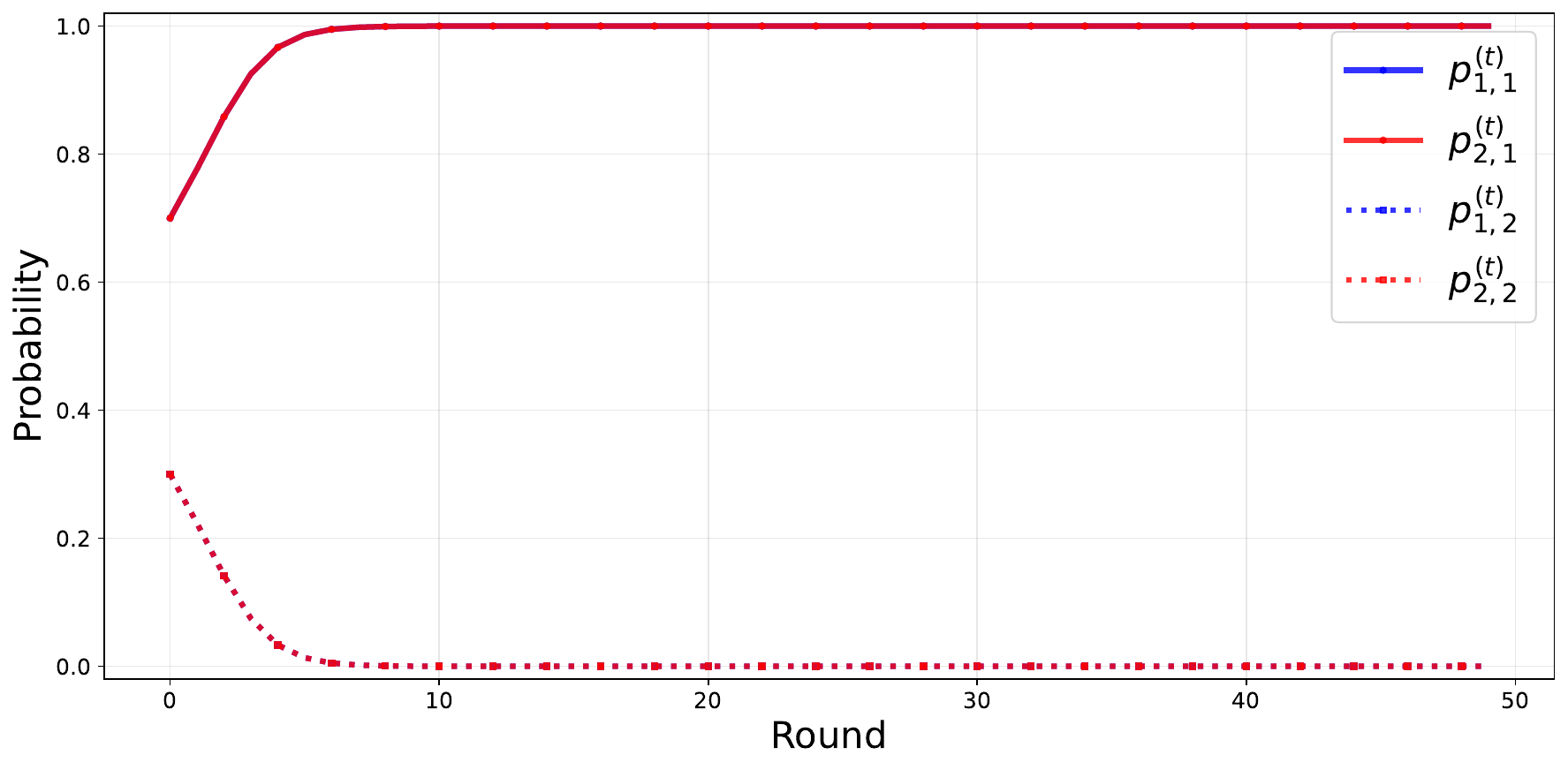}
    \caption*{r6. Identical initialization}
  \end{subfigure}
  \begin{subfigure}{0.45\textwidth}
    \includegraphics[width=\linewidth]{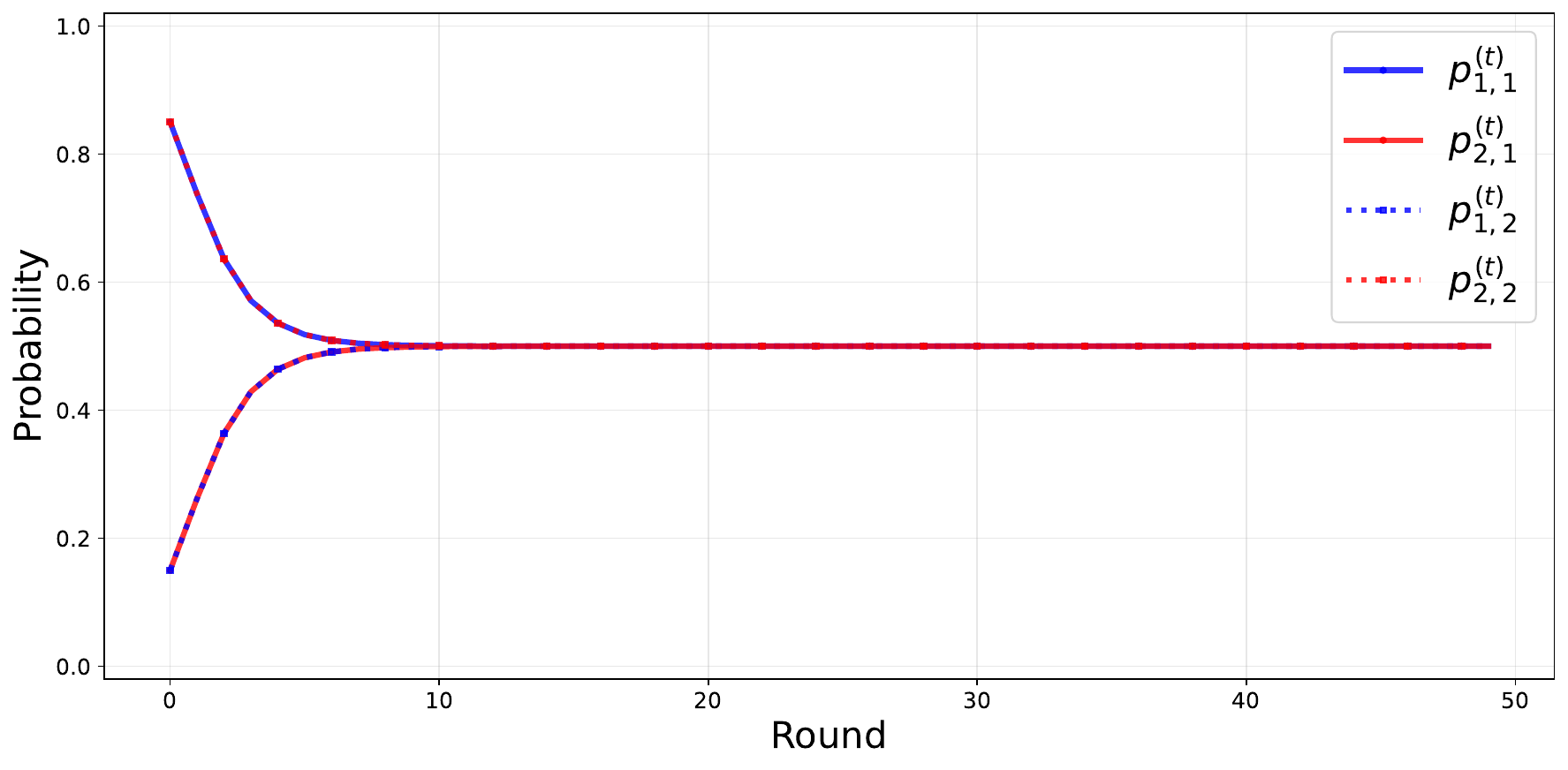}
    \caption*{r7. Opposite-sign initialization with $\eta>\frac{8}{|\epsilon_1|+|\epsilon_2|}$}
  \end{subfigure}

  \begin{subfigure}{0.45\textwidth}
    \includegraphics[width=\linewidth]{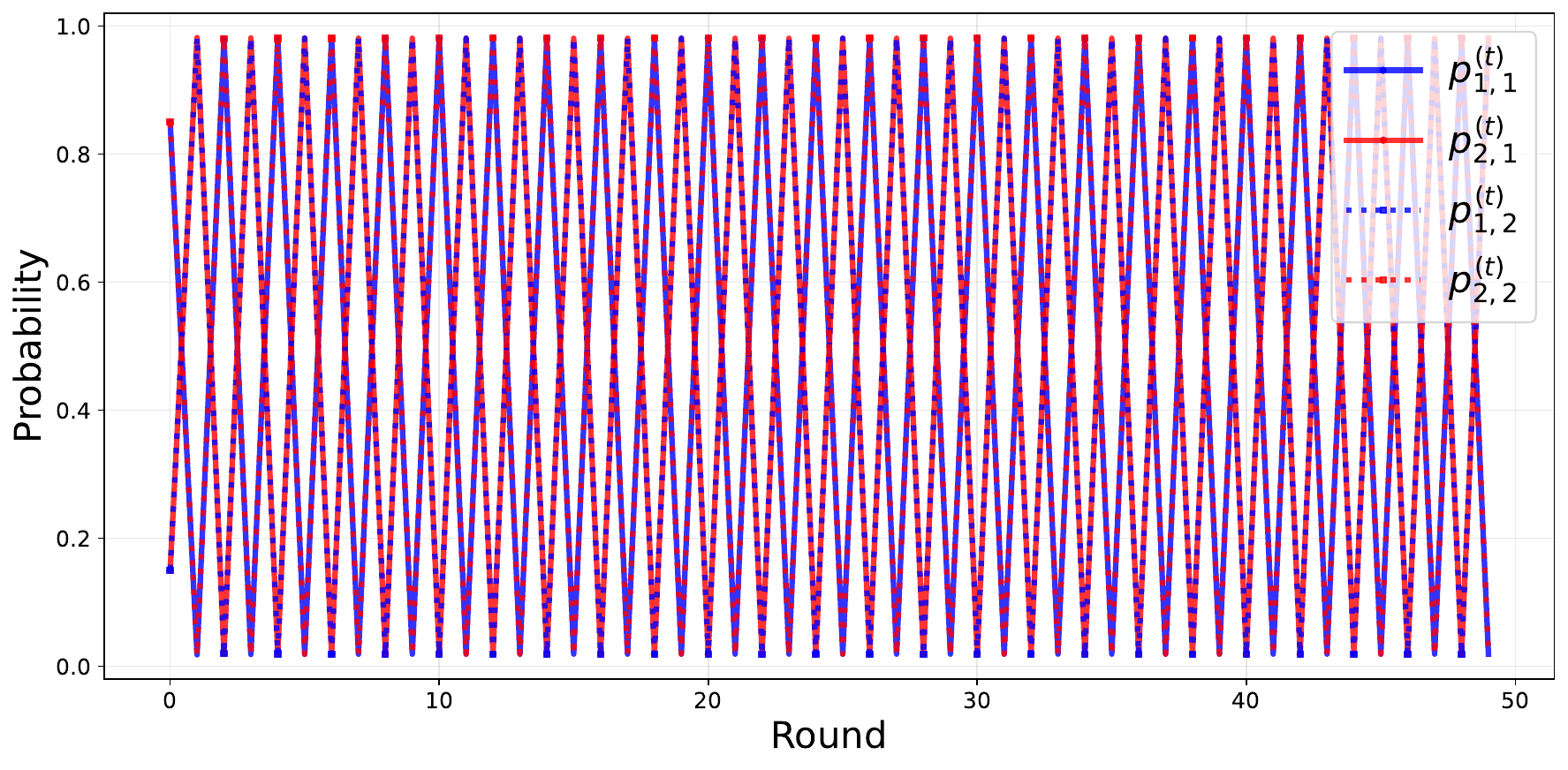}
    \caption*{r7. Opposite-sign with $\eta>\frac{8}{|\epsilon_1|+|\epsilon_2|}$}
  \end{subfigure}
  \begin{subfigure}{0.45\textwidth}
    \includegraphics[width=\linewidth]{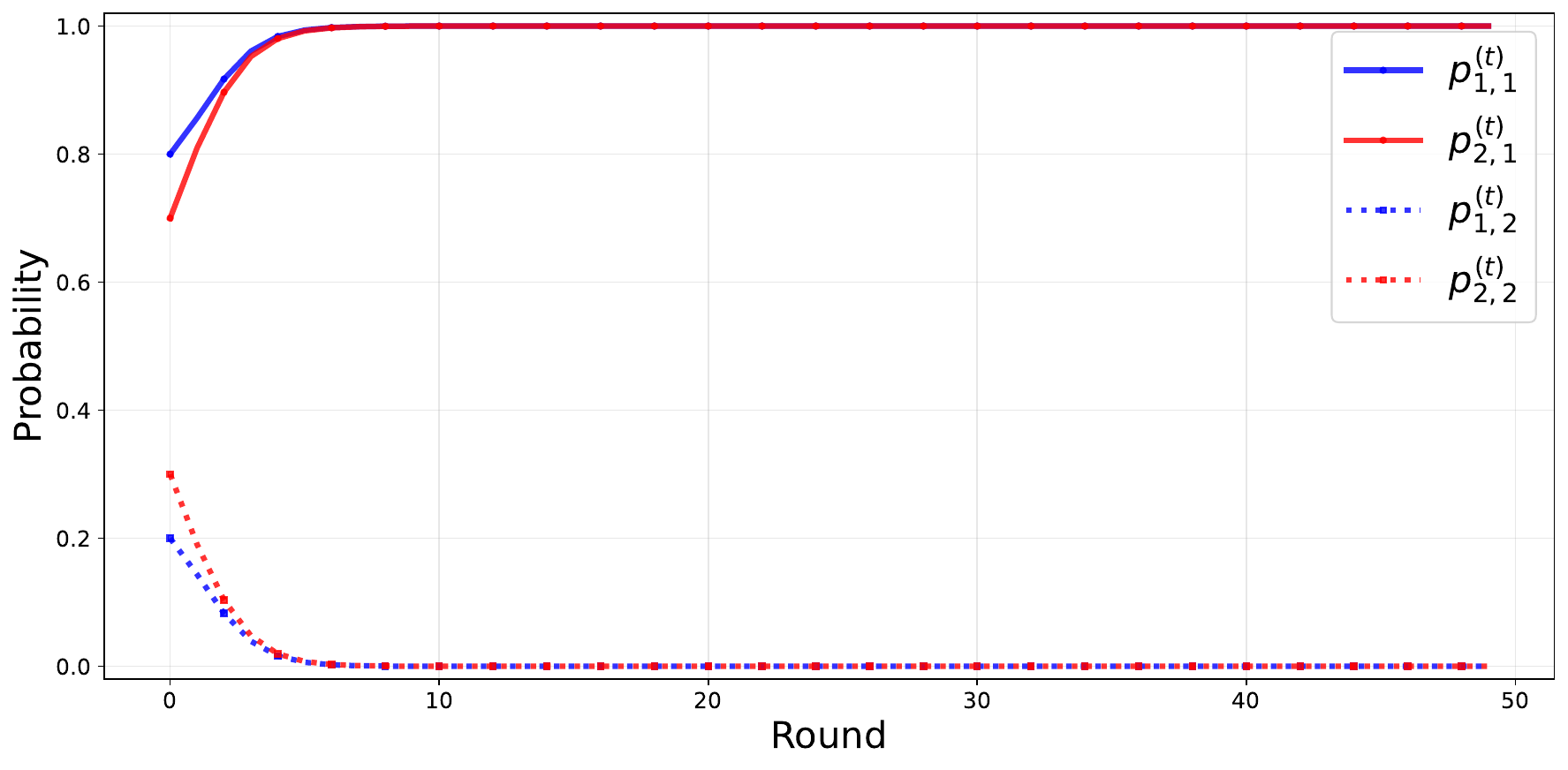}
    \caption*{r6. Same-sign initialization}
  \end{subfigure}

  \caption{Simulation results for Theorem \ref{thm:theoppositcase}. The row in Table~\ref{Table:NEconvergence} to which each case corresponds is marked in the sub-captions.}
  \vspace{-0.4cm}
  \label{figure:part 3}
\end{figure}

In this section, we prove Theorem~\ref{thm:theoppositcase}. In this case, we have $\epsilon_1>0$ and $\epsilon_2<0$. The corresponding simulation results are given in Figure~\ref{figure:part 3} (we did not include them in the main paper due to space limitations). We first recall/introduce some basic conclusions. Recall that $f'(r)=\frac{\epsilon_1-\epsilon_2}{(1+r)^2}>0$, which means that $f(\cdot)$ is strictly increasing for the case where $\epsilon_1 > 0$ and $\epsilon_2 < 0$. Moreover, $r^*=\frac{|\epsilon_2|}{\epsilon_1}$ is the unique root of $f(\cdot)$.
Recall that for both players $i\in\{1,2\}$, we can write the following dynamical system
$$ u_{i}^{(t+1)}= \ln\frac{r^{(t+1)}_i}{r^*}= u_i^{(t)} + g(u_{j}^{(t)}),$$
where we defined $g(u)=\eta f(r^*\exp(u))$ as shorthand.
Note that, since $f$ is strictly increasing and $f(r^*)=0$, if $r-r^*>0$, $\Delta=f(r)>0$, and also $u=\ln\frac{r}{r^*}>0$. On the other hand, if $r-r^*<0$, $\Delta=f(r)<0$, and also $u=\ln\frac{r}{r^*}<0$. Thus, we have $\text{Sign}(\Delta)=\text{Sign}(r-r^*)=\text{Sign}(u)$. Next, we have
$$g'(u)=\frac{\eta(\epsilon_1-\epsilon_2)r^*\exp(u)}{(1+r^*\exp(u))^2}.$$
For all $u\in\R$, we have $g'(u)\leq \frac{\eta(\epsilon_1-\epsilon_2)}{4}.$ Therefore, we have $g'(u)\in(0,2)$ when $\eta\in\left(0,\frac{8}{|\epsilon_1|+|\epsilon_2|}\right)$. Next, we provide our convergence result for different choices of initialization. 

\subsection{Exponential Convergence with Same Sign Initialization}
We first provide a formal description of exponential convergence in the ``easy" cases where the signs of the initialization functionals $\Delta_1^{(1)}$ and $\Delta_2^{(1)}$ are the same.

\begin{theorem}\label{thm:partthree-easycase-formal}
Suppose Assumption \ref{as:nondegenerate} holds, $\epsilon_1>0$ and $\epsilon_2<0$. Then, Algorithm \ref{alg:Hedge} with any step size $\eta > 0$ satisfies the following:
\begin{itemize}
    \item If $\Delta_1^{(1)}>0$, $\Delta_2^{(1)}>0$, then we have $p^{(t+1)}_{i,2}\leq \frac{1}{r^{(1)}_i}\exp\left(-\eta \Delta_{j}^{(1)}\right)$ for any player pair $i \neq j \in \{1,2\}$. This implies that the algorithm converges to the pure symmetric NE $(\theta_{1},\theta_1)$ exponentially fast;
\item If $\Delta_1^{(1)}<0$, $\Delta_2^{(1)}<0$, then we have $p^{(t+1)}_{i,1}\leq {r^{(1)}_i}\exp\left(\eta \Delta_{j}^{(1)}\right)$ for any player pair $i \neq j \in \{1,2\}$. This implies that the algorithm converges to the pure symmetric NE $(\theta_{2},\theta_2)$ exponentially fast.
\end{itemize}
\end{theorem}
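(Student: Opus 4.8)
The plan is to mirror the ``Basic Case'' analysis already carried out for Theorem~\ref{thm:e1l0e2g0}, adapted to the fact that here $f$ is strictly \emph{increasing} rather than decreasing. Recall from the preamble of this section that $\epsilon_1 > 0 > \epsilon_2$ gives $f'(r) = \frac{\epsilon_1 - \epsilon_2}{(1+r)^2} > 0$ with unique root $r^* = \frac{|\epsilon_2|}{\epsilon_1}$, and that $\text{Sign}(\Delta_i^{(t)}) = \text{Sign}(f(r_i^{(t)})) = \text{Sign}(r_i^{(t)} - r^*)$ for both players. Assumption~\ref{as:nondegenerate} guarantees $r_i^{(1)} \in (0,\infty)$ and $\Delta_i^{(1)} \neq 0$, so each player starts strictly on one side of $r^*$, and the monotonicity of $f$ will therefore never be evaluated at the degenerate point where $\Delta = 0$.

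First I would treat the case $\Delta_1^{(1)}, \Delta_2^{(1)} > 0$, which forces $r_1^{(1)}, r_2^{(1)} > r^*$. The key step is a joint induction over both players showing that for every $t \geq 1$ and $i \neq j \in \{1,2\}$, $r_i^{(t+1)} > r_i^{(t)}$ and $\Delta_i^{(t+1)} \geq \Delta_i^{(t)} > 0$. For the inductive step, since $\Delta_j^{(t)} > 0$, Equation~\eqref{eq:rt-update} gives $r_i^{(t+1)} = r_i^{(t)} \exp(\eta \Delta_j^{(t)}) > r_i^{(t)}$, and monotonicity of $f$ then yields $\Delta_i^{(t+1)} = f(r_i^{(t+1)}) > f(r_i^{(t)}) = \Delta_i^{(t)}$. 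Telescoping Equation~\eqref{eq:rt-update} and using $\Delta_j^{(k)} \geq \Delta_j^{(1)}$ for all $k$ gives $r_i^{(t+1)} = r_i^{(1)} \exp\!\big(\eta \sum_{k=1}^t \Delta_j^{(k)}\big) \geq r_i^{(1)} \exp(\eta t \Delta_j^{(1)})$, so $p_{i,2}^{(t+1)} = \frac{1}{1 + r_i^{(t+1)}} \leq \frac{1}{r_i^{(t+1)}} \leq \frac{1}{r_i^{(1)}} \exp(-\eta t \Delta_j^{(1)})$, which decays exponentially since $\Delta_j^{(1)}$ is a fixed positive constant. Hence $p_{i,1}^{(t+1)} \to 1$ for both players, i.e. convergence to the symmetric pure NE $(\theta_1,\theta_1)$.

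The case $\Delta_1^{(1)}, \Delta_2^{(1)} < 0$ is perfectly symmetric: now $r_1^{(1)}, r_2^{(1)} < r^*$, the same induction with inequalities reversed shows each $\{r_i^{(t)}\}_{t \geq 1}$ is strictly decreasing with $\Delta_i^{(t)} \leq \Delta_i^{(1)} < 0$, telescoping gives $r_i^{(t+1)} \leq r_i^{(1)} \exp(\eta t \Delta_j^{(1)})$, and since $p_{i,1}^{(t+1)} = \frac{r_i^{(t+1)}}{1 + r_i^{(t+1)}} \leq r_i^{(t+1)}$ this forces $p_{i,1}^{(t+1)} \to 0$ exponentially, i.e. convergence to $(\theta_2,\theta_2)$. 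Note that no restriction on $\eta$ is needed in either subcase, because the monotone feedback loop between the ratios and $f$ never reverses direction — in contrast to the opposite-sign initialization, where the sign of $r_i^{(t)} - r^*$ can flip and a step-size condition must be imposed.

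I do not expect a serious obstacle: this is genuinely the ``easy'' regime and the argument is essentially a one-sided monotonicity analysis. The only point requiring care is the bookkeeping of the simultaneous induction over both players (each player's ratio drives the other's update), and checking that the strict separation $r_i^{(1)} \neq r^*$ supplied by Assumption~\ref{as:nondegenerate} propagates to all rounds so that $f$'s monotonicity can be invoked at every step. Both are routine, so this will close out Theorem~\ref{thm:partthree-easycase-formal}.
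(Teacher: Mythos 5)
Your proposal is correct and follows essentially the same route as the paper's proof: a joint induction showing the ratios $r_i^{(t)}$ and functionals $\Delta_i^{(t)}$ are monotone (increasing when $\Delta_1^{(1)},\Delta_2^{(1)}>0$, decreasing when both are negative), then telescoping Equation~\eqref{eq:rt-update} with $\Delta_j^{(k)}$ bounded by $\Delta_j^{(1)}$ to get the exponential rate. Your explicit bounds $p_{i,2}^{(t+1)}\leq \frac{1}{r_i^{(1)}}\exp(-\eta t\Delta_j^{(1)})$ and $p_{i,1}^{(t+1)}\leq r_i^{(1)}\exp(\eta t\Delta_j^{(1)})$ (with the factor of $t$ in the exponent) are the intended conclusions and match the paper's derivation.
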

We now provide the proof of Theorem~\ref{thm:partthree-easycase-formal}, considering each condition listed above.

\paragraph{Condition 1: $\Delta_1^{(1)}>0$, $\Delta_2^{(1)}>0$.} Recall that for any player pair $i \neq j \in\{1,2\}$, we have $r_i^{(2)}=r_i^{(1)}\exp(\Delta_j^{(1)}).$ Based on induction, it is easy to see that the ratios for both players, $r_1^{(t)}$ and  $r_2^{(t)}$, are increasing in $t$. Moreover, for each player $i \in \{1,2\}$, the functional $\Delta_i^{(t)}=f(r_i^{(t)})$ is also increasing in $t$ (since $f(\cdot)$ is increasing in its argument and $r_i^{(t)}$ is increasing in $t$). Therefore, we have
$$ r_i^{(t+1)}=r^{(1)}_i\exp\left(\eta\sum_{k=1}^t \Delta^{(k)}_j\right)\geq r^{(1)}_i\exp\left( \eta t\Delta_j^{(1)}\right)$$
for any player pair $i \neq j \in \{1,2\}$.
Noting that $\Delta_j^{(1)} > 0$ is a positive constant, this proves the desired exponential rate of convergence to the symmetric pure NE $(\theta_1,\theta_1)$.

\paragraph{Condition 2: $\Delta_1^{(1)}<0$, $\Delta_2^{(1)}<0$.} 
Using a similar inductive argument, it is easy to see that the ratios for both players $r_1^{(t)}$ and $r_2^{(t)}$ are now decreasing in $t$.
Because $f(\cdot)$ is increasing, the functional $\Delta_i^{(t)}=f(r_i^{(t)})$ is also decreasing in $t$ (since $f(\cdot)$ is increasing in its argument and $r_i^{(t)}$ is decreasing in $t$).
Therefore, we have 
$$ r_i^{(t+1)}=r^{(1)}_i\exp\left(\eta\sum_{k=1}^t \Delta^{(k)}_j\right)\leq r^{(1)}_i\exp\left( \eta t\Delta_j^{(1)}\right)$$
for any player pair $i \neq j \in \{1,2\}$.
Noting that $\Delta_j^{(1)} < 0$ is a negative constant, this proves the desired exponential rate of convergence to the symmetric pure NE $(\theta_2,\theta_2)$.

\subsection{Convergence with Opposite Sign Initialization}
Finally, we consider the case where $\Delta_1^{(1)}$ and $\Delta_2 ^{(1)}$ have different signs.
\paragraph{Condition 3: $\Delta_1^{(1)}$ and $\Delta_2 ^{(1)}$ have different signs. } Suppose that we have $\Delta_i^{(t)}>0$ and $\Delta_j^{(t)}<0$ for some player pair $i \neq j\in\{1,2\}$ for some $t\geq 1$.  We start by excluding some corner cases for the update to round $t+1$, i.e. where one or both of the $\Delta_i^{(t+1)}$ becomes equal to $0$.
\begin{itemize}
    \item Suppose that $\Delta_i^{(t+1)}=0$ and $\Delta_j^{(t+1)}>0$. Then, we have $\Delta_{j}^{(t+2)}=\Delta_{j}^{(t+1)}>0$, and $r_i^{(t+2)}=r_i^{(t+1)}\exp(\eta \Delta_{j}^{(t+1)})>r_i^{(t+1)}$.
    Because $f(\cdot)$ is increasing, we have $\Delta_{i}^{(t+2)}>\Delta_i^{(t+1)}=0$. 
    This means that $\Delta_i^{(t+2)} > 0$ for both players $i \in \{1,2\}$, and the situation reduces to Condition 1.
    \item Suppose that $\Delta_i^{(t+1)}=0$ and $\Delta_j^{(t+1)}<0$. A similar argument to the first corner case yields $\Delta^{(t+2)}_i < 0$ for both players $i \in \{1,2\}$, and the situation reduces to Condition 2.
    \item Finally, suppose that $\Delta_{j}^{(t+1)}=\Delta_{i}^{(t+1)}=0$. Then, since $f$ is strictly monotone and $f(r^*)=0$, we have that $r^{(t+1)}_i=r^{(t+1)}_j=r^*$. Moreover, since $r^{(t+1)}_i=r_i^{(t)}\exp(\Delta_{_j}^{(t)})$, the algorithm will stay at  $r^{(t+1)}_i=r^{(t+1)}_j=r^*$, which is the unique \emph{strictly} mixed NE of the game.
\end{itemize}
Therefore, we can exclude the corner cases above. Then, as in the proof of Theorem~\ref{thm:e1l0e2g0}, three events can transpire going from $t \to t+1$:
\begin{itemize}
    \item Event 1 (``2-sign-flip"): $\Delta_i^{(t+1)}<0$ and $\Delta_j^{(t+1)}>0$;
    \item Event 2 (``1-sign-flip"): $\Delta_i^{(t+1)}$ and $\Delta_j^{(t+1)}$ have the same sign;
    \item Event 3 (``0-sign-flip"): $\Delta_i^{(t+1)}>0$ and $\Delta_j^{(t+1)}<0$;
\end{itemize}
We first focus on Event 1. In particular, suppose Event 1 happens at round $t$. Define the potential function $V_t=|u_{1}^{(t)}|+|u_{2}^{(t)}|$. Because the signs of $u_i^{(t)}$ and $\Delta_i^{(t)}$ match for both players $i \in \{1,2\}$, we have
$$ V_{t}=u_{i}^{(t)}-u_{j}^{(t)},$$
and therefore we can write the potential function update as
\begin{equation}
    \begin{split}
    \label{eqn:vvtandt+1}
V_{t+1} =  u_{j}^{(t+1)}-u_{i}^{(t+1)}  =  u_{j}^{(t)}-u_{i}^{(t)} +(g(u^{(t)}_i)-g(u^{(t)}_j))=(g'(\delta^{(t)})-1)V_t,
    \end{split}
\end{equation}
where the last equality is based on the mean value  theorem for some $\delta^{(t)}\in[u_j^{(t)},u_i^{(t)}]$. From this, we can infer the following:
\begin{itemize}
    \item If $\eta\in\left(0,\frac{4}{|\epsilon_1|+|\epsilon_2|}\right)$, then $g'(u)<1$. Thus, \eqref{eqn:vvtandt+1} implies that $V_{t+1}<0$, which contradicts that $V_{t+1}\geq 0$ by definition. Thus, if   $\eta\in\left(0,\frac{4}{|\epsilon_1|+|\epsilon_2|}\right)$, Event 1 cannot ever happen.
    \item If instead $\eta\in\left[\frac{4}{|\epsilon_1|+|\epsilon_2|},\frac{8}{|\epsilon_1|+|\epsilon_2|}\right)$, we have $g'(u)-1<1$. In this case, $V_{t+1}$ will shrink by a constant every time Event 1 happens. 
\end{itemize}
To summarize, if $\eta\in\left(0,\frac{8}{|\epsilon_1|+|\epsilon_2|}\right)$, one of the two outcomes will transpire:
\begin{itemize}
    \item \emph{If Event 1 a finite number of times:} i.e., there exists a large enough $t_0$ such that Event 1 will not happen after $t_0$. Now, consider all rounds $t>t_0$. If Event 2 happen at some round $t$, then the problem reduces to Conditions 1 or 2. Otherwise, only Event 3 happens at each round. Based on the monotone convergence theorem and monotonicity of $f(\cdot)$, the algorithm converges to the strictly mixed NE in this case.
    \item \emph{If Event 1 happens infinitely many times:} In this case, $\lim_{t\rightarrow\infty } V_t=0$, which means that both $|u_{i}^{(t)}|$ and $|u_{j}^{(t)}|$ converge to 0. In this case, the algorithm converges to the strictly mixed NE. 
\end{itemize}
In conclusion, for $\eta\in\left(0,\frac{8}{|\epsilon_1|+|\epsilon_2|}\right)$, the algorithm will converge to either the pure NE or the strictly mixed NE.
This completes the proof of Theorem~\ref{thm:theoppositcase}.

\subsection{An example of oscillation with large step size under opposite-sign initialization}

In this section, we provide a simple example of a game in which the algorithm oscillates if the step size is instead too large, i.e. $\eta > \frac{8}{|\epsilon_1| + |\epsilon_2|}$, with a very specific initialization.
This eample is analogous and similar to the one provided in Condition 5 in the proof of Theorem~\ref{thm:e1l0e2g0}.
For simplicity, we consider $\eta = 1$ and $\epsilon_1 = |\epsilon_2| = \epsilon$, where $\epsilon$ will be a parameter that we set later.
Note that this implies that the unique root of $f(\cdot)$ is $r^* = 1$.

We now consider \emph{equal and opposite} initialization, i.e. $\Delta_1^{(1)} = - \Delta_1^{(2)} = \Delta^{(1)}$.
It is easy to verify that in this case, $r_2^{(1)} = 1/r_1^{(1)}$.
Moreover, we can show via induction that $r_2^{(t)} = 1/r_1^{(t)}$.
Therefore, it suffices to keep track of a common ratio $r^{(t)} := r_1^{(t)}$ and the corresponding common variable $u^{(t)} = \ln \frac{r^{(t)}}{r^*} = \ln r^{(t)}$.
We can then define a $1$-dimensional dynamical system on $u^{(t)}$, where the update $t \to t + 1$ is given by
\begin{align*}
    u^{(t+1)} &= u^{(t)} - \Delta^{(t)} \\
    &= u^{(t)} - f(r^{(t)}) \\
    &= u^{(t)} - f(\exp(u^{(t)})).
\end{align*}
Recalling the definition $f(r) = \frac{\epsilon_1 r + \epsilon_2}{1 + r}$ and noting that we are considering a game where $\epsilon_1 = - \epsilon_2 = \epsilon$, we have $f(r) = \epsilon \frac{r-1}{1+r}$, we have
\begin{align*}
    u^{(t+1)} &= u^{(t)} - \epsilon \frac{\exp(u^{(t)}) - 1}{1 + \exp(u^{(t)})} \\
    &= u^{(t)} - \epsilon \tanh{\left(\frac{u^{(t)}}{2}\right)} =: T(u^{(t)}).
\end{align*}
Ultimately, we have the map $u \to T(u) := u - \epsilon \tanh{\left(\frac{u}{2}\right)}$.
Analogous to the example provided in Condition 5 of the proof of Theorem~\ref{thm:e1l0e2g0}, we now consider $a > 0$ to be any constant, and set
$$\epsilon = 2a \coth{\left(\frac{a}{2}\right)}.$$
As previously argued, this choice satisfies $\eta > \frac{8}{|\epsilon_1| + |\epsilon_2|}$.
Therefore, we have
\begin{align*}
    T(a) &= a - \epsilon \tanh\left(\frac{a}{2}\right) \\
    &= a - 2a \coth{\left(\frac{a}{2}\right)}  \tanh\left(\frac{a}{2}\right) = -a.
\end{align*}
Similarly, noting that $T(u)$ is again an even function, we have $T(-a) = a$.
Therefore, the algorithm will oscillate if initialized at $a$ or $-a$.
Note that such a game can be defined for any positive constant $a$.
 
\section{Proof of Theorem \ref{thm:forcasewithonezero}}
\label{appendix:thm:forcasewithonezero}

\begin{figure}
  \centering

  \begin{subfigure}{0.45\textwidth}
    \includegraphics[width=\linewidth]{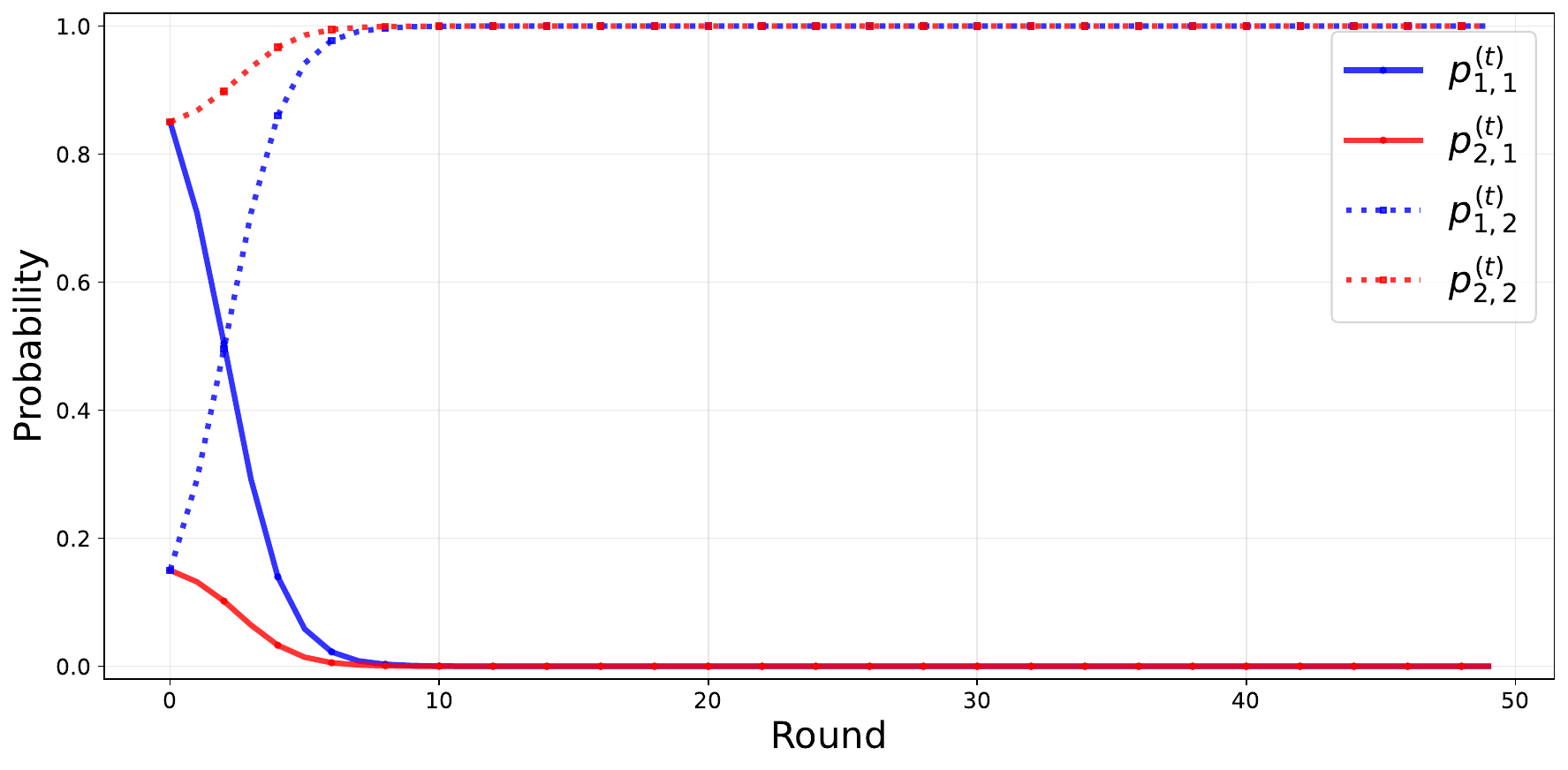}
    \caption*{r8. $\epsilon_2<0$}
  \end{subfigure}
  \begin{subfigure}{0.45\textwidth}
    \includegraphics[width=\linewidth]{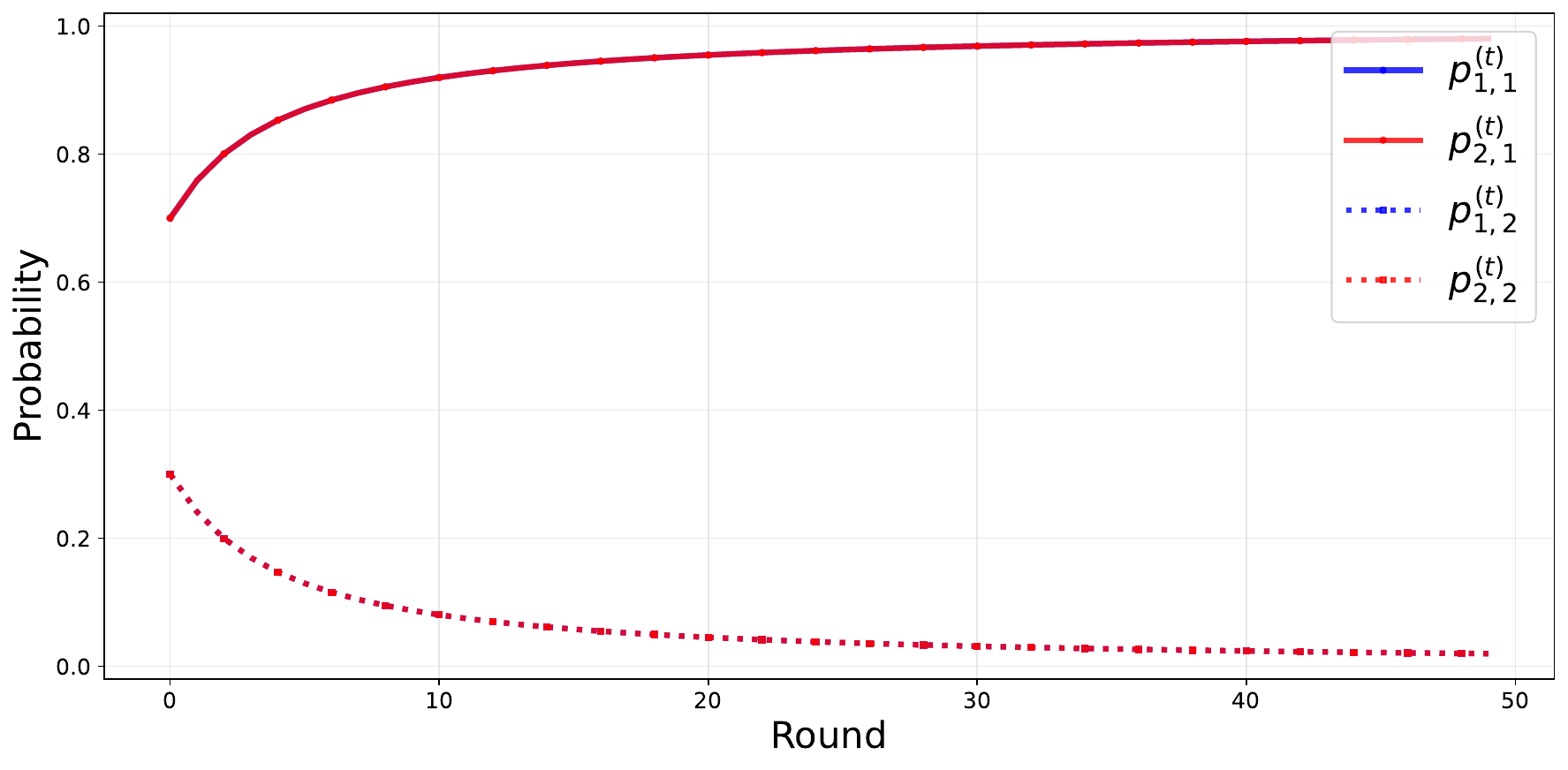}
    \caption*{r9. Identical initialization}
  \end{subfigure}
  
  \begin{subfigure}{0.45\textwidth}
    \includegraphics[width=\linewidth]{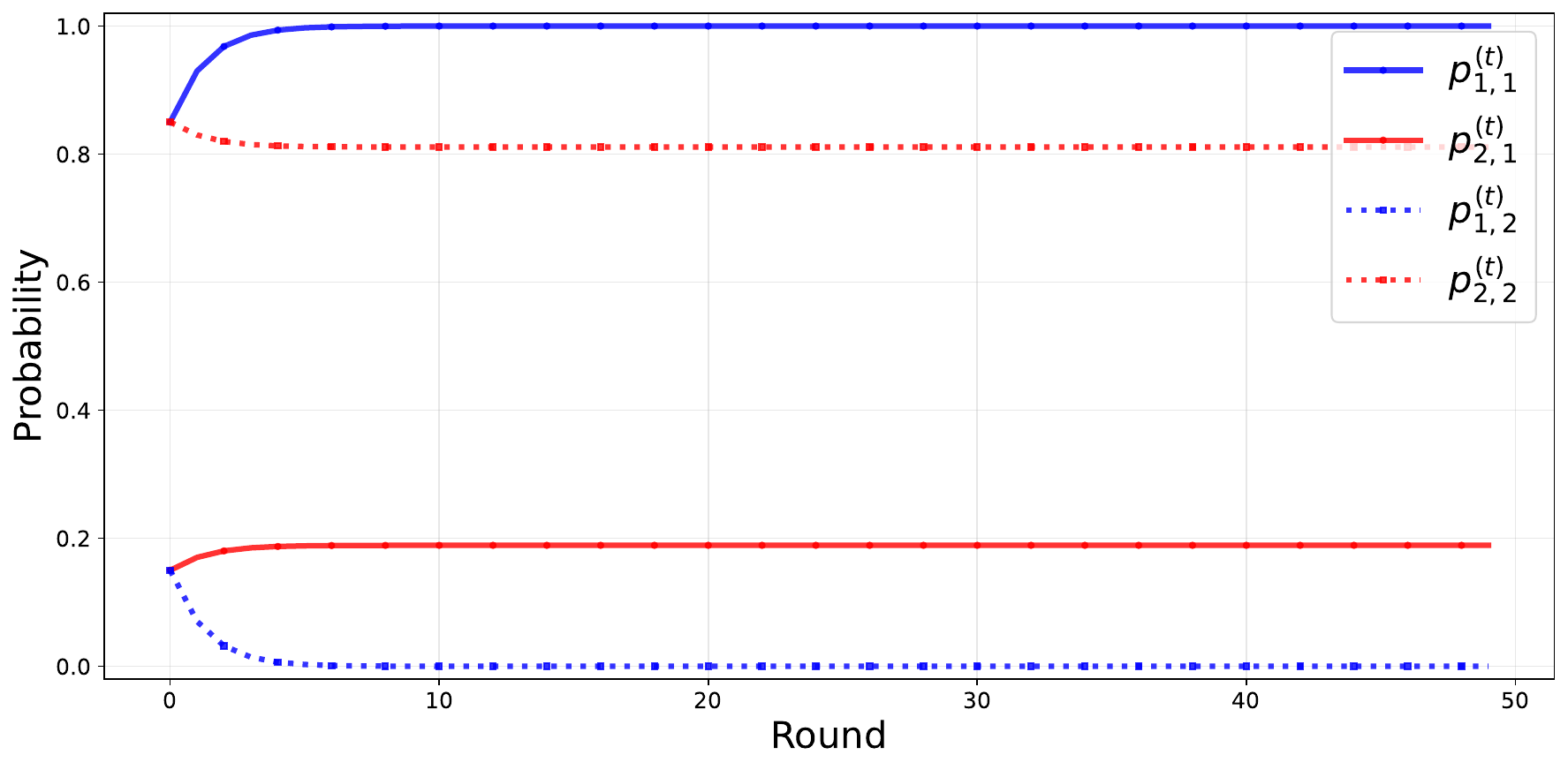}
    \caption*{r9. Opposite-Sign initialization}
  \end{subfigure}
  \begin{subfigure}{0.45\textwidth}
    \includegraphics[width=\linewidth]{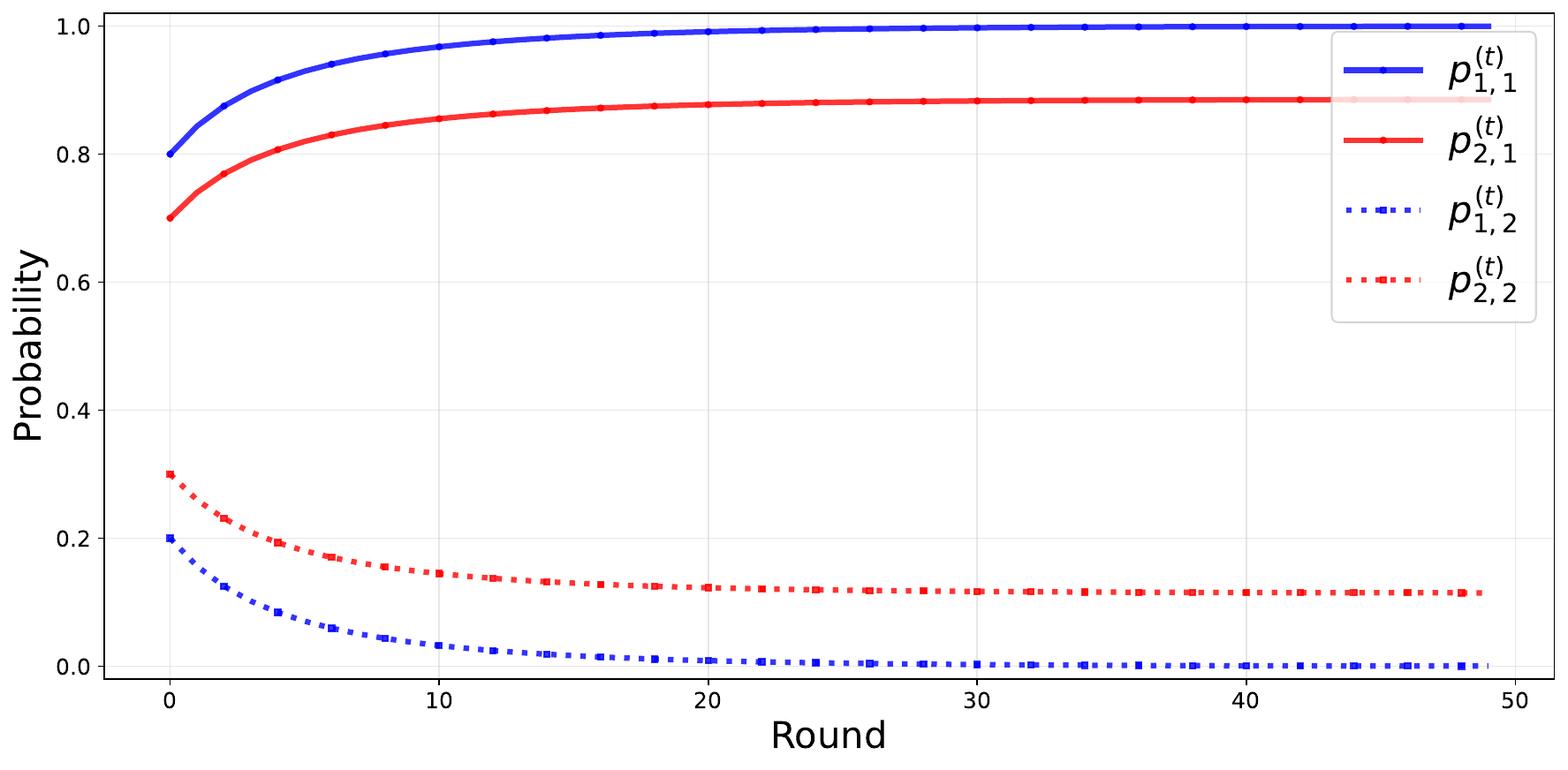}
    \caption*{r9. Same-sign initialization}
  \end{subfigure}

  \caption{Simulation results for Theorem \ref{thm:forcasewithonezero}. The row in Table~\ref{Table:NEconvergence} to which each case corresponds is marked in the sub-captions.}
  \vspace{-0.4cm}
  \label{figure1111111111-2}
\end{figure}

In this section, we prove Theorem~\ref{thm:forcasewithonezero}. In this case, we have (without loss of generality) that $\epsilon_1 = 0$ and $\epsilon_2 \neq 0$.
The corresponding simulation results are given in Figure~\ref{figure1111111111} (we did not include them in the main paper due to space limitations).
We study the convergence of the algorithm under different conditions on the game, listed below.

\paragraph{Condition 1: $\epsilon_1=0$, $\epsilon_2<0$.}

Recall from Equation~\eqref{eq:rt-update} that we have $r^{(t+1)}_i=r^{(t)}_i\exp\left(\eta f(r_j^{(t)})\right)$ for any player pair $i \neq j \in \{1,2\}$, where $f(r)=\frac{\epsilon_2}{1+r}< 0$, and $f'(r)=-\frac{\epsilon_2}{(1+r)^2}>0$. Since $f(r)< 0$, we can argue through induction that the ratio $\{r^{(t)}_i\}_{t=1}^{\infty}$ is
non-increasing for both players $i \in \{1,2\}$. This means that $r^{(t)}_i\leq r^{(1)}_i$, and, because $f(\cdot)$ is increasing, we have $f(r^{(t)}_i)\leq f(r^{(1)}_i)$. As a consequence, we have for any player pair $i \neq j \in \{1,2\}$: 
$$r^{(t+1)}_i=r_i^{(1)}\exp\left(\eta \sum_{k=1}^tf(r_j^{(k)})\right)\leq r_i^{(1)}\exp(\eta t\Delta_{j}^{(1)}).$$
This implies that $p^{(t+1)}_{i,1}\leq p^{(t+1)}_{i,2}r^{(1)}_i\exp(\eta t\Delta_{j}^{(1)})\leq r^{(1)}_i\exp(\eta t\Delta_{j}^{(1)})=r^{(1)}_i\exp(\eta tp^{(1)}_j\epsilon_2),$ and $\epsilon_2<0$. Therefore, $p_{i,1}^{(t)}$ converges to $0$ at an exponentially fast rate under this condition.

\paragraph{Condition 2: $\epsilon_1=0$, $\epsilon_2>0$, $\Delta_1^{(1)}=\Delta_2^{(1)}$.}  In this case, we have $f(r)>0$.
A similar inductive argument as in Condition 1 then tells us that $\{r_i^{(t)}\}_{t=1}^{\infty}$ is strictly increasing. Moreover, we have 
$$ \ln r^{(t+1)}_i=\ln r^{(t)}_i+\eta \frac{\epsilon_2}{1+r_i^{(t)}}. $$
Suppose that$\lim_{t\rightarrow\infty}r_i^{(t)}=\ell_i<\infty$ for some $\ell_i>0$. Taking the limit as $t \to \infty$ on both sides, we have
$$ \ln\ell_i=\ln\ell_i +\eta \frac{\epsilon_2}{1+\ell_i}.$$
The above equation is solved at $\ell_i=\infty,$ which contradicts with the assumption that $\ell_i<\infty$. Thus, $\lim_{t\rightarrow \infty}r_i^{(t)}=\infty$ for $i\in\{1,2\}$.  

\paragraph{Condition 3: $\epsilon_1=0$, $\epsilon_2>0$, $\Delta^{(1)}_1\not=\Delta^{(1)}_2$.} 
For this case, we first show that at least one player will converge to $\theta_1$, which means that the algorithm will at least converge to a mixed NE. Specifically, we have 
$$\ln r_i^{(t+1)}=\ln r_i^{(t)} + \eta\epsilon_2\frac{1}{1+r_j^{(t)}}.$$
Since the ratio is always non-negative, i.e. $r\geq 0$, we know both $r_i^{(t)}$ and $r_j^{(t)}$ are monotonically increasing. Based on the monotone convergence theorem,  $r_i^{(t)}$ (and $r_j^{(t)}$) either converge to some $0<\ell_i<\infty$ ($0<\ell_j<\infty$), or go to infinity. Suppose both $r^{(t)}_i$ and   $r^{(t)}_j$ converge to some constants. Let $\lim_{t\rightarrow\infty}r_i^{(t)}=\ell_i$ and $\lim_{t\rightarrow\infty}r_j^{(t)}=\ell_j$. Then, we have 
$$\ell_i=\lim_{t\rightarrow\infty}\ln r_i^{(t+1)}=\lim_{t\rightarrow\infty}\ln r_i^{(t)} + \lim_{t\rightarrow\infty}\eta\epsilon_2\frac{1}{1+r_j^{(t)}}=\ell_i+\frac{\eta\epsilon_2}{1+\ell_{j}},$$
The above equation can only be solved at $\ell_j=\infty$, which leads to a contradiction. Therefore, at least one of $r^{(t)}_i$ and $r^{(t)}_j$ will go to infinity. 

\paragraph{Example for converging to strictly mixed NE} Here, we construct an instance where $r^{(t)}_j$ converges to a constant, while $r^{(t)}_i$ goes to infinity. 
\begin{theorem}
For any player pair $i,j\in\{1,2\}$, $i\not=j$, let $r_j^{(1)}$ be a constant, and let $A$ be any constant such that $A>r_j^{(1)}$. Let 
$$r^{(1)}_i\geq \frac{\eta \epsilon_2}{\left(1-\exp\left(-\frac{\eta\epsilon_2}{1+A}\right)\right)\ln\frac{A}{r_j^{(1)}}}>0.$$
Then, we have $\lim_{t\rightarrow \infty}r_j^{(t)}=A'\in(0,A]$, and $\lim_{t\rightarrow \infty}r^{(t)}_i=\infty$. This implies that $\lim_{t\rightarrow \infty}p^{(t)}_{q,1}=\frac{A'}{1+A'}$, while $\lim_{t\rightarrow \infty}p^{(t)}_{p,1}=\infty.$
\end{theorem}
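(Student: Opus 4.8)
The plan is to work entirely in the log-linearized dynamics already derived for Condition 3 of this proof: $\ln r_i^{(t+1)} = \ln r_i^{(t)} + \frac{\eta\epsilon_2}{1+r_j^{(t)}}$ and symmetrically $\ln r_j^{(t+1)} = \ln r_j^{(t)} + \frac{\eta\epsilon_2}{1+r_i^{(t)}}$, with both $\{r_i^{(t)}\}$ and $\{r_j^{(t)}\}$ monotonically increasing (since $\epsilon_2>0$). The goal is to show that the stated lower bound on $r_i^{(1)}$ forces $r_j^{(t)} \le A$ for all $t$; once this is established, $\{r_j^{(t)}\}$ is increasing and bounded, hence converges to some $A' \in [r_j^{(1)},A] \subseteq (0,A]$ by monotone convergence, while the cap $r_j^{(t)}\le A$ keeps the multiplicative growth of $r_i^{(t)}$ bounded below by the constant $\kappa := \exp\!\big(\tfrac{\eta\epsilon_2}{1+A}\big) > 1$, so $r_i^{(t)} \ge r_i^{(1)}\kappa^{t-1} \to \infty$.

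The heart of the argument is a bootstrap induction on the statement $P(T):$ ``$r_j^{(t)} \le A$ for all $t \le T$''. The base case $P(1)$ is the hypothesis $r_j^{(1)} < A$. For the inductive step, assume $P(T)$. Then for every $t \le T$, telescoping the $r_i$-update and using $r_j^{(s)} \le A$ for $s < t$ gives $\ln r_i^{(t)} \ge \ln r_i^{(1)} + (t-1)\tfrac{\eta\epsilon_2}{1+A}$, i.e. $r_i^{(t)} \ge r_i^{(1)}\kappa^{t-1}$. Substituting this into the telescoped $r_j$-update, bounding $\tfrac{1}{1+r_i^{(t)}} \le \tfrac{1}{r_i^{(t)}}$, and enlarging the finite sum to the full geometric series,
$$\ln r_j^{(T+1)} \;=\; \ln r_j^{(1)} + \sum_{t=1}^{T}\frac{\eta\epsilon_2}{1+r_i^{(t)}} \;\le\; \ln r_j^{(1)} + \frac{\eta\epsilon_2}{r_i^{(1)}}\sum_{t\ge 1}\kappa^{-(t-1)} \;=\; \ln r_j^{(1)} + \frac{\eta\epsilon_2}{r_i^{(1)}\big(1-\kappa^{-1}\big)}.$$
Since $1 - \kappa^{-1} = 1 - \exp\!\big(-\tfrac{\eta\epsilon_2}{1+A}\big)$, the stated lower bound on $r_i^{(1)}$ is exactly the inequality making the last term at most $\ln\tfrac{A}{r_j^{(1)}}$; hence $\ln r_j^{(T+1)} \le \ln A$, which is $P(T+1)$.

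With $P(T)$ holding for all $T$, the rest is routine. The increasing, $A$-bounded sequence $\{r_j^{(t)}\}$ converges to some $A' \in (0,A]$; independently $r_j^{(t)} \le A$ for all $t$ yields $r_i^{(t+1)} \ge \kappa\, r_i^{(t)}$, so $r_i^{(t)} \to \infty$. Passing back through the bijection $r \mapsto p$ (with $p_{m,1}^{(t)} = \tfrac{r_m^{(t)}}{1+r_m^{(t)}}$), we get $p_{j,1}^{(t)} \to \tfrac{A'}{1+A'}$ and $p_{i,1}^{(t)} \to 1$; that is, player $i$ converges to the pure strategy $\theta_1$ while player $j$ converges to the strictly mixed strategy $\big(\tfrac{A'}{1+A'}, \tfrac{1}{1+A'}\big)$. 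By Table~\ref{Table:NEs} (case $\epsilon_1=0,\epsilon_2>0$), the profile $\big(\theta_1, [\tfrac{A'}{1+A'}, \tfrac{1}{1+A'}]\big)$ is a (strictly mixed) Nash equilibrium of the asserted form.

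The main obstacle is ensuring the bootstrap closes with the precise constant in the hypothesis: one must phrase the induction hypothesis as ``$r_j^{(t)} \le A$ for all $t \le T$'' (rather than only at $t=T$) so that the geometric lower bound $r_i^{(t)} \ge r_i^{(1)}\kappa^{t-1}$ is simultaneously available for every $t$ appearing in the bound on $\ln r_j^{(T+1)}$, and one must pass from the finite partial sum $\sum_{t=1}^{T}$ to the infinite series $\sum_{t\ge1}$ (legitimate by positivity of all terms) so that the closed form $\tfrac{1}{1-\kappa^{-1}}$ exactly matches the denominator $1-\exp(-\eta\epsilon_2/(1+A))$ in the assumed bound on $r_i^{(1)}$. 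Everything else — monotonicity, the monotone convergence theorem, and the $r \leftrightarrow p$ translation — is already available from earlier in the paper.
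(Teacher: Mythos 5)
Your proposal is correct and follows essentially the same route as the paper's proof: induction maintaining $r_j^{(t)}\le A$, the geometric lower bound $r_i^{(t)}\ge r_i^{(1)}\exp\bigl(\eta\epsilon_2(t-1)/(1+A)\bigr)$, bounding the cumulative increment of $\ln r_j$ by the full geometric series so the hypothesis on $r_i^{(1)}$ closes the bound at $\ln A$, and then monotone convergence plus the $r\mapsto p$ bijection. Your only departure is the more carefully stated strong induction hypothesis (``$r_j^{(t)}\le A$ for all $t\le T$''), which the paper's argument uses implicitly; this is a presentational improvement, not a different proof.
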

\begin{proof}
We prove the theorem by induction. Note that $r^{(1)}_j\leq A$ by assumption. Assuming that $r^{(t)}_j\leq A$,we have 
$$ r_i^{(t+1)} = r_i^{(t)}\exp\left(\eta\epsilon_2 \frac{1}{1+r_j^{(t)}} \right)\geq r_i^{(1)}\exp\left(\eta\epsilon_2 \frac{t}{1+A} \right)$$
for all $t \geq 1$.
Therefore, we have
\begin{equation}
    \begin{split}
\ln r_j^{(t+1)}= &\ln r_j^{(1)} + \eta\epsilon_2\sum_{k=1}^t \frac{1}{1+r_i^{(k)}}\leq  \ln r_j^{(1)} + \eta\epsilon_2\sum_{k=1}^t \frac{1}{r_i^{(k)}}\\
\leq &   \ln r_j^{(1)} + \frac{\eta\epsilon_2}{\left(1-\exp\left(-\frac{\eta\epsilon_2}{1+A}\right)\right)}\cdot\frac{1}{r_i^{(1)}}\leq \ln A,
    \end{split}
\end{equation}
and thus $r_{j}^{(t+1)}\leq A$. To summarize, by induction, we showed that $r_j^{(t)}\leq A$ for all $t$. Since the ratio $\{r_j^{(t)}\}_{t \geq 1}$ is also increasing in $t$, we have $\lim_{t\rightarrow\infty}r_j^{(t)}=A'$ for some $A'\in(0,A]$ by the monotone convergence theorem. Moreover, because $r_i^{(t)}\geq r_i^{(1)}\exp\left(\frac{\eta \epsilon_2 t}{1+A}\right),$
we have $\lim_{t\rightarrow \infty}r_i^{(t)}=\infty.$
This completes the proof of Theorem~\ref{thm:forcasewithonezero}.
\end{proof}

\section{Experimental Result For The Bank Game}
\label{sec: exp-results}
 We now illustrate the convergence of our framework through experiments based on the bank setup introduced in Section \ref{sectionMPPP}.

\subsection{The Bank Game}
\label{sbusectionthebankgame}
\begin{table}[!h]
\centering
\begin{tabular}{@{}ccccc@{}}
\toprule
\textbf{} &
  \textbf{$(\tau_{\ell},\gamma_{\ell})$} &
  \textbf{$(\tau_{\ell},\gamma_{h})$} &
  $(\tau_{h},\gamma_{\ell})$ &
  $(\tau_{h},\gamma_{h})$ \\ \midrule
$(\tau_{\ell},\gamma_{\ell})$ &
  $\frac{1}{2}h(\gamma_{\ell},\tau_{\ell},1)$ &
  0 &
  $\frac{1}{2}h(\gamma_{\ell},\tau_{h},1)$ &
  0 \\ \midrule
$(\tau_{\ell},\gamma_{h})$ &
  $h(\gamma_{\ell},\tau_{\ell},1)$ &
  $\frac{1}{2}h(\gamma_{h},\tau_{\ell},1)$ &
  $h(\gamma_{\ell},\tau_{h},1)$ &
  $\frac{1}{2}h(\gamma_{h},\tau_h,1)$ \\ \midrule
$(\tau_{h},\gamma_{\ell})$ &
  $h(\gamma_{\ell},\tau_{\ell},\tau_h)+\frac{1}{2}h(\gamma_{\ell},\tau_h,1)$ &
  $h(\gamma_{h},\tau_{\ell},\tau_h)$ &
  $\frac{1}{2}h(\gamma_{\ell},\tau_h,1)$ &
  0 \\ \midrule
$(\tau_{h},\gamma_{h})$ &
  $h(\gamma_{\ell},\tau_{\ell},1)$ &
  $h(\gamma_{h},\tau_{\ell},\tau_h)+\frac{1}{2}h(\gamma_{h},\tau_h,1)$ &
  $h(\gamma_{\ell},\tau_{h},1)$ &
  $\frac{1}{2}h(\gamma_{h},\tau_h,1)$ \\ \bottomrule
\end{tabular}
\caption{Utility matrix for bank 1. The rows are the strategies for bank 2, while the columns are the strategies for bank 1.}
\label{tab:my-table2}
\end{table}
In this section, we provide more details about the bank game defined in Section \ref{sectionMPPP}. More specifically, for each player, there are four actions: $(\tau_{\ell},\gamma_{\ell}), (\tau_{\ell},\gamma_{h}), (\tau_h,\gamma_{\ell})$
 and  $(\tau_{h},\gamma_h)$. The utilities of both banks under different pair of decisions are governed by \eqref{eqn:utility-onebank}. We provide a full characterization of the utility of Bank 1  in Table \ref{tab:my-table2}. 
 
\paragraph{Thresholds selection} For fixed interest rates $0<\gamma_{\ell}<\gamma_{h}<1$, we always set the corresponding thresholds as
$$\tau_{\ell}=\frac{1}{1+\gamma_h},\ \ \text{and}\ \  \tau_{h}=\frac{1}{1+\gamma_{\ell}}.$$ These choices for the thresholds are natural from Equation~\eqref{eqn:utility-onebank}. Indeed, it is a dominant strategy for a \emph{rational} bank with interest rate $\gamma$ to set a threshold of $\tau^*(\gamma) := \frac{1}{ 2 + \gamma}$. Setting a higher threshold $\tau > \tau^*(\gamma)$ leads to ignoring customers in the range $y \in [\tau^*(\gamma), \tau]$ that provide a utility of $(2 + \gamma) y - 1 > 0$, while setting a lower threshold $\tau < \tau^*(\gamma)$ leads to potentially incorporating some customers in the range $y < \tau^*(\gamma)$ that would yield a utility of $(2 + \gamma) y - 1 < 0$ as a result of low probability of repayment; both cases cannot improve total utility.

Next, we show that the two decisions $(\tau_{\ell},\gamma_{\ell})$ and $(\tau_{h},\gamma_{h})$ are always dominated decisions, so the weights assigned to these actions by the exponential weights algorithm will decrease exponentially fast. We firstly introduce several easy to verify properties of $h$. 
\begin{lemma}
\label{lem:property of f}
Assume $\emph{Supp}(D_y)=[0,1]$. We have the following:
\begin{enumerate}
\item Let $\tau_a<\tau_b<\tau_c$. Then $h_{D_y}(\gamma_1,\tau_a,\tau_c)=h_{D_y}(\gamma_1,\tau_a,\tau_b)+h_{D_y}(\gamma_1,\tau_b,\tau_c)$.
    \item $h(\gamma,\tau_a,\tau_b)$ is monotonically increasing with respect to $\gamma$. 
     \item We have $h(\gamma,\tau_a,\tau_b)\in[-1,2]$. 
    \item Let $y_{\gamma}=\frac{1}{2+\gamma}$. If $y_{\gamma}\in(\tau_a,\tau_b)$, then $h(\gamma,\tau_a,y_{\gamma})<0$, and  $h(\gamma,y_{\gamma},\tau_b)>0$.    
\end{enumerate} 

\end{lemma}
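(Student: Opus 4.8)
The plan is to verify each of the four claims directly from the definition \eqref{eqn:utility-onebank}, $h_{D_y}(\gamma,\tau_a,\tau_b)=\int_{\tau_a}^{\tau_b}[(2+\gamma)y-1]p(y)\,dy$, where $p$ denotes the density of $D_y$. The entire lemma reduces to the sign structure and $\gamma$-dependence of the affine integrand $y\mapsto (2+\gamma)y-1$, combined with linearity and monotonicity of the integral, so no delicate analysis is involved.

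For item 1, I would simply invoke additivity of the integral over the split point, $\int_{\tau_a}^{\tau_c}=\int_{\tau_a}^{\tau_b}+\int_{\tau_b}^{\tau_c}$, applied to the integrable function $[(2+\gamma_1)y-1]p(y)$. For item 2, I would avoid differentiating under the integral sign and instead write, for $\gamma'>\gamma$, the finite-difference identity $h(\gamma',\tau_a,\tau_b)-h(\gamma,\tau_a,\tau_b)=(\gamma'-\gamma)\int_{\tau_a}^{\tau_b}y\,p(y)\,dy\ge 0$; this is strictly positive because $\mathrm{Supp}(D_y)=[0,1]$ forces $\int_{\tau_a}^{\tau_b}p(y)\,dy>0$ on the nondegenerate subinterval $(\tau_a,\tau_b)\subseteq[0,1]$ while $y>0$ on $(0,1]$, so $h$ is (strictly) increasing in $\gamma$.

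For item 3, I would bound the integrand pointwise: for $y\in[0,1]$ and $\gamma\in\{\gamma_\ell,\gamma_h\}\subseteq[0,1]$ we have $-1\le (2+\gamma)y-1\le (2+\gamma)-1=1+\gamma\le 2$. Integrating against $p$ over $[\tau_a,\tau_b]\subseteq[0,1]$ and using $0\le\int_{\tau_a}^{\tau_b}p(y)\,dy\le 1$ yields $h(\gamma,\tau_a,\tau_b)\in[-1,2]$. For item 4, note that $y_\gamma=\tfrac{1}{2+\gamma}$ is exactly the root of $(2+\gamma)y-1$, so this function is negative on $(0,y_\gamma)$ and positive on $(y_\gamma,\infty)$. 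Hence on $(\tau_a,y_\gamma)$ the integrand $[(2+\gamma)y-1]p(y)$ is $\le 0$ and strictly negative on a set of positive $p$-measure (by the full-support assumption, since $(\tau_a,y_\gamma)$ is a nondegenerate subinterval of $[0,1]$), giving $h(\gamma,\tau_a,y_\gamma)<0$; symmetrically, on $(y_\gamma,\tau_b)$ the integrand is $\ge 0$ and strictly positive on a positive-measure set, giving $h(\gamma,y_\gamma,\tau_b)>0$.

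I do not expect a genuine obstacle: each item is a one- or two-line consequence of the sign and monotonicity of $y\mapsto (2+\gamma)y-1$ together with elementary integral properties. The only points requiring mild care are (i) the implicit assumption that $D_y$ admits a density $p$, as written in \eqref{eqn:utility-onebank}, and (ii) invoking $\mathrm{Supp}(D_y)=[0,1]$ to upgrade the weak inequalities in items 2 and 4 to the strict inequalities claimed — this is precisely what guarantees $\int_I p(y)\,dy>0$ for every nondegenerate subinterval $I\subseteq[0,1]$.
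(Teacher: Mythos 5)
Your proof is correct: each item follows exactly as you argue from the sign and $\gamma$-dependence of the affine integrand $(2+\gamma)y-1$, additivity of the integral, and the full-support assumption for the strict inequalities. The paper states this lemma without proof (as "easy to verify"), and your direct verification is precisely the intended elementary argument, with the mild implicit assumptions you flag (existence of the density $p$, $\gamma\in[0,1]$, $\tau_a\le\tau_b$) consistent with the paper's setup.
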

Next, based on these properties, it can be easily observe that $u_1((\tau_{h},\gamma_{\ell}),\theta)>u_1((\tau_{\ell},\gamma_{\ell}),\theta)$ for all $\theta\in\{(\tau_{\ell},\gamma_{\ell}), (\tau_{\ell},\gamma_{h}), (\tau_h,\gamma_{\ell}),(\tau_{h},\gamma_h)\}$. Excluding $(\tau_{\ell},\gamma_{\ell})$, we have $u_1((\tau_{\ell},\gamma_{h}),\theta)>u_1((\tau_{h},\gamma_{h}),\theta)$ for $\theta\in\{(\tau_{\ell},\gamma_{h}), (\tau_h,\gamma_{\ell}),(\tau_{h},\gamma_h)\}$.Therefore, $(\tau_{\ell},\gamma_{\ell})$ and $(\tau_{h},\gamma_{h})$ are dominated decisions.

\paragraph{Credit score distribution} 
We model customer credit scores using two types of distributions:  

\begin{itemize}
    \item Truncated Gaussian (\(\mu, \sigma\)): This is a standard normal distribution conditioned to lie within the normalized credit score range \( y \in [0,1] \), and models customers with credit scores symmetrically distributed around a mean value $\mu \in [0,1]$.
    \item Piecewise Uniform Distribution: This distribution models \(\beta_1\) fraction of customers with credit scores uniformly distributed in \([0, \tau_l]\), a fraction \(\beta_2\) with scores in \((\tau_l, \tau_h)\), and the remainder in \([\tau_h, 1]\).
  \begin{equation}
    \label{eq:pu-dist}
    y \sim  
    \begin{cases}  
        \frac{\beta_1}{\tau_l}, & \quad y \in [0, \tau_l] \\[4pt]  
        \frac{\beta_2}{\tau_h - \tau_l} & \quad y \in (\tau_l, \tau_h) \\[4pt]  
        \frac{1 - (\beta_1 + \beta_2)}{1 - \tau_h}, & \quad y \in [\tau_h, 1]
    \end{cases}
\end{equation}  
\end{itemize}

\paragraph{Interest Rates  and Credit Thresholds} 
We ran experiments across a wide range of interest rates, where $0 < \gamma_l < \gamma_h < 1$, and the corresponding thresholds 
To simulate different conditions for $\epsilon_1$ and $\epsilon_2$, we consider the following parameter configurations (Here, $(+,+)$ means $\epsilon_1>0$, $\epsilon_2>0$): 
\begin{itemize}
    \item (+, +): $\gamma_l = 0.4$, $\gamma_h = 0.8$, with a truncated Gaussian distribution $(\mu, \sigma) = (0.3, 0.1)$.
    \item (-, -): $\gamma_l = 0.4$, $\gamma_h = 0.8$, with a truncated Gaussian distribution $(\mu, \sigma) = (0.1, 0.3)$.
    \item (+, -): $\gamma_l = 0.4$, $\gamma_h = 0.8$, with a truncated Gaussian distribution $(\mu, \sigma) = (0.1, 0.2)$.
    \item (-, +): $\gamma_l = 0.6$, $\gamma_h = 0.7$, with piecewise uniform distribution $\beta_1 = 0.01, \beta_2 = 0.95$. 
\end{itemize}

\paragraph{Algorithm Configuration} We set the step size $\eta$ in Algorithm \ref{alg:Hedge} to be $0.1$. To make the setting more general, we consider applying the exponential weights  algorithm to the more general setting where we have four decisions: $(\tau_{\ell},\gamma_{\ell}), (\tau_{\ell},\gamma_{h}), (\tau_h,\gamma_{\ell})$
 and  $(\tau_{h},\gamma_h)$.
 As discussed in Section \ref{sectionMPPP},  the first and last action are dominated and the weights will converge to 0 fast.

The experimental results are give in Figures \ref{fig:dyna-mm}--\ref{fig:dyna-pm-2}. It can be seen that the convergence results matches our theorem. We start with the `-\,-' and `+\,+' cases in Figures~\ref{fig:dyna-mm} and \ref{fig:dyna-pp} (corresponding to Theorem \ref{thm:same-sign}), and see convergence to the unique pure symmetric NE. 

\begin{figure}[H]
    \centering
    \begin{subfigure}{0.49\linewidth}
        \centering
        \includegraphics[width=\linewidth]{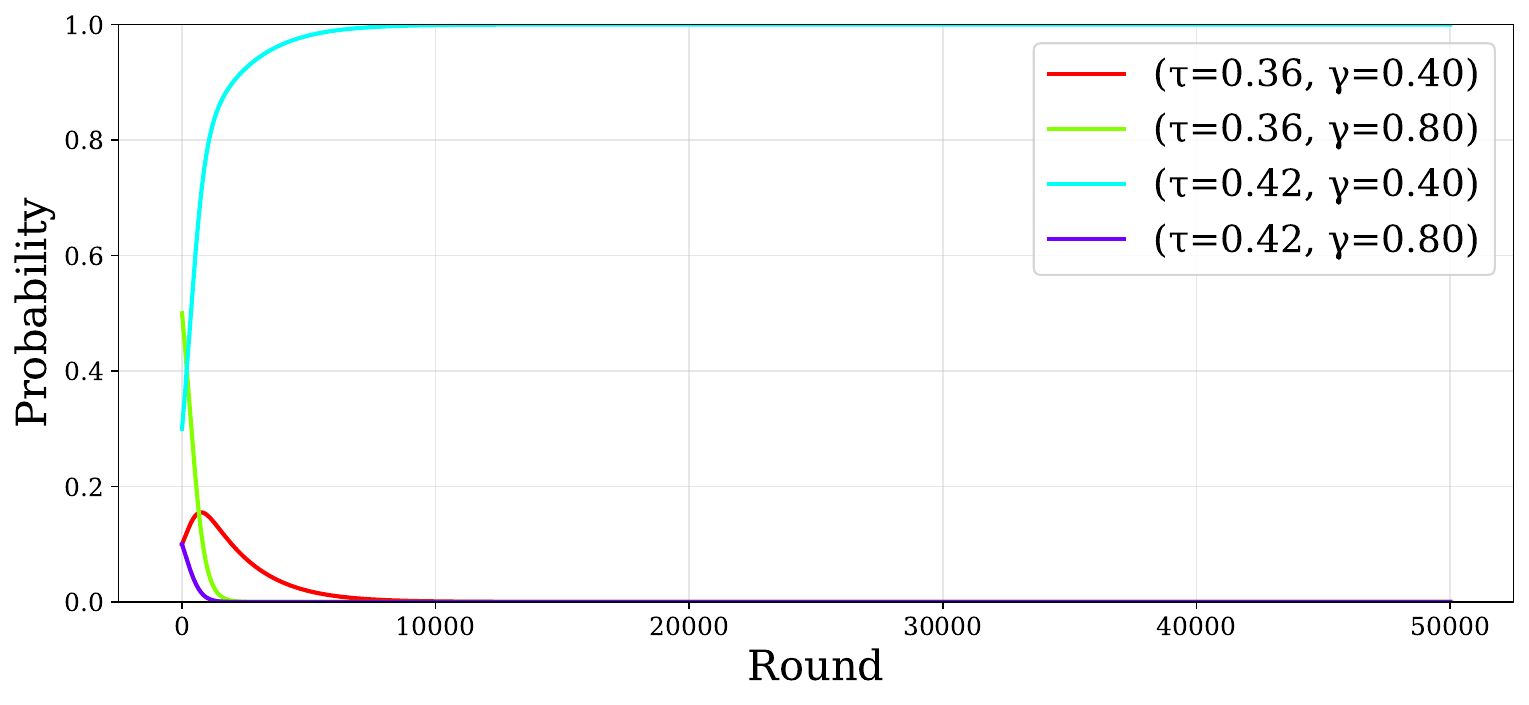}
    \end{subfigure}
    \begin{subfigure}{0.49\linewidth}
        \centering
        \includegraphics[width=\linewidth]{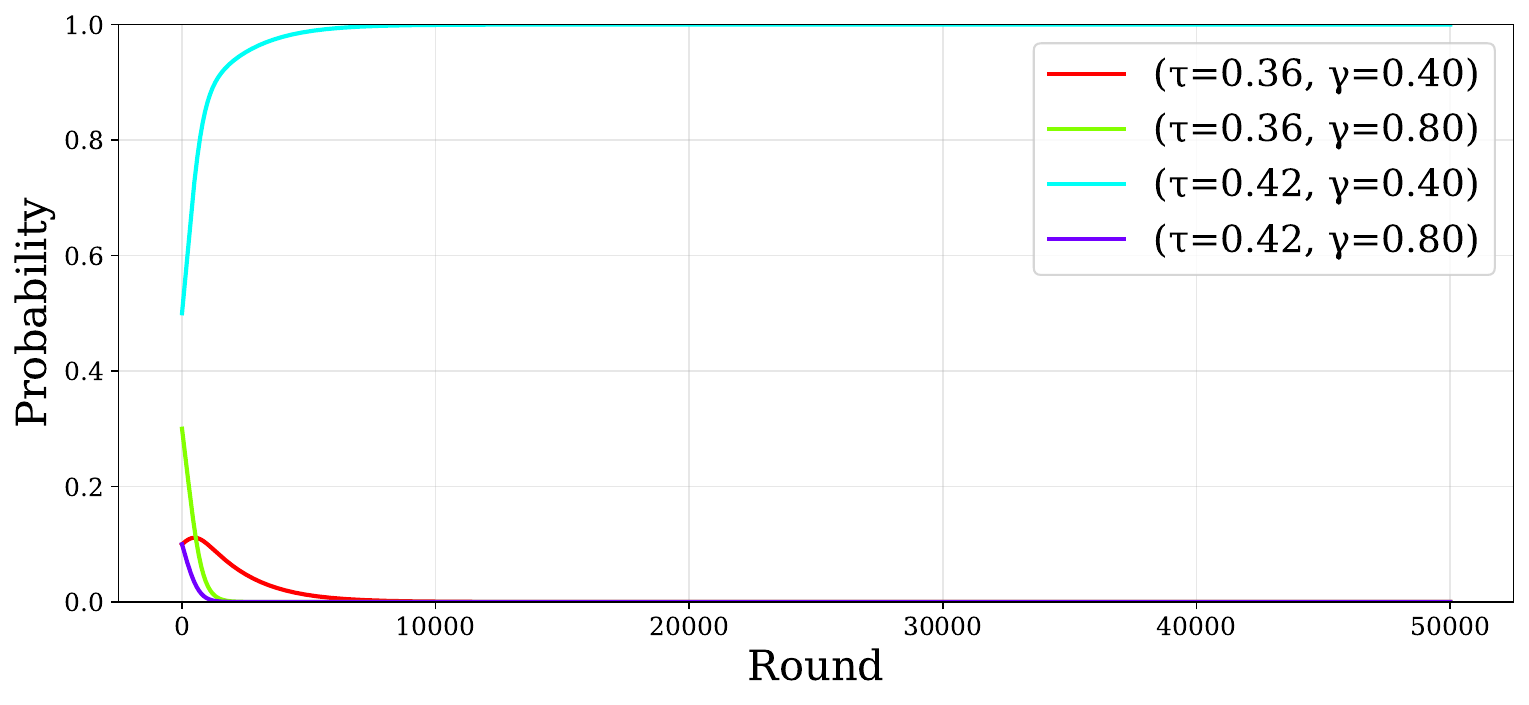}
    \end{subfigure}
    \caption{\textbf{Case - -} Exponential weights dynamics for Bank1 (left) and Bank2 (right), converging to $((\tau_h, \gamma_l),(\tau_h, \gamma_l))$.  $y \sim$ truncated Gaussian ($\mu=0.1, \sigma=0.3$), $\gamma_l = 0.4$, $\gamma_h = 0.8$.  Initial weights: Bank1 $(0.1, 0.5, 0.3, 0.1)$, Bank2 $(0.1, 0.3, 0.5, 0.1)$. \label{fig:dyna-mm}}
\end{figure}

\begin{figure}[H]
    \centering
    \begin{subfigure}{0.49\linewidth}
        \centering
        \includegraphics[width=\linewidth]{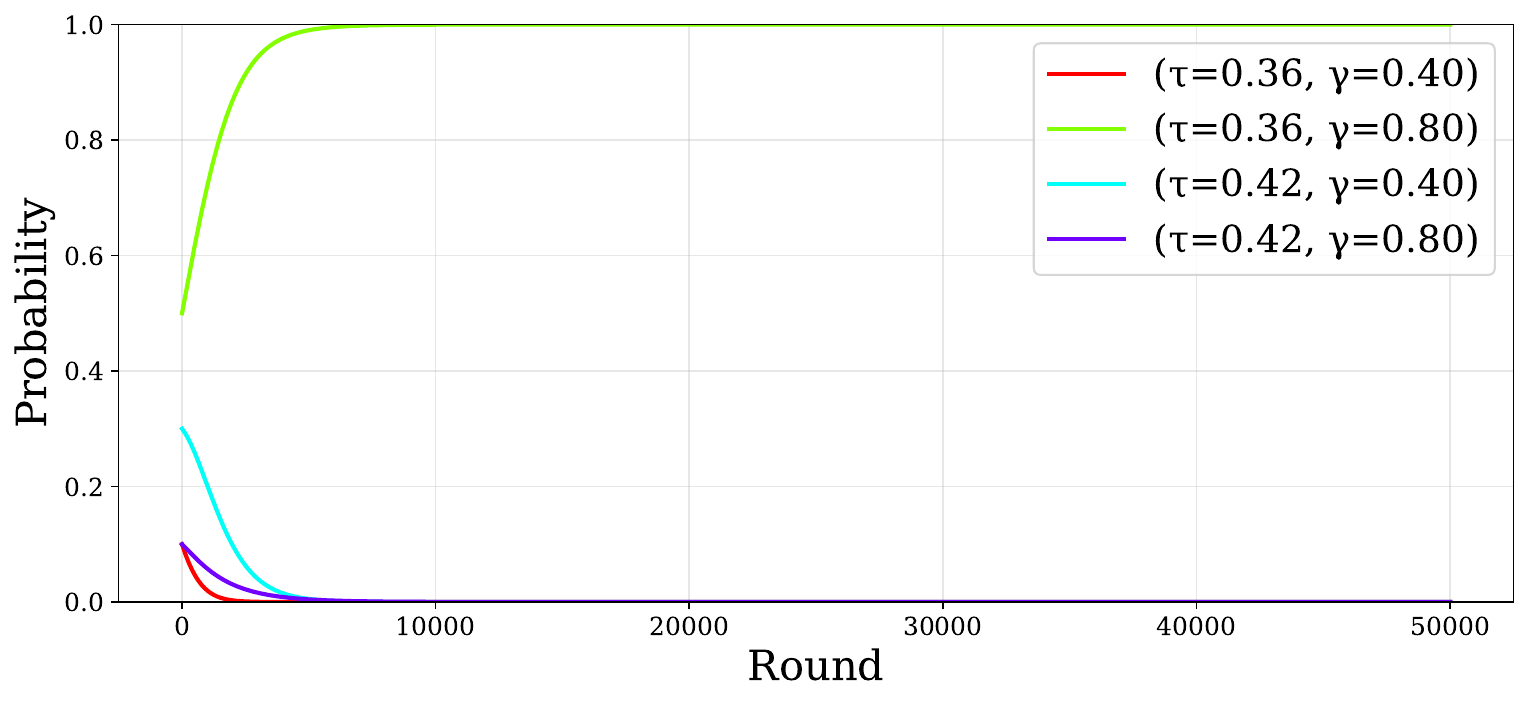}
    \end{subfigure}
    \begin{subfigure}{0.49\linewidth}
        \centering
        \includegraphics[width=\linewidth]{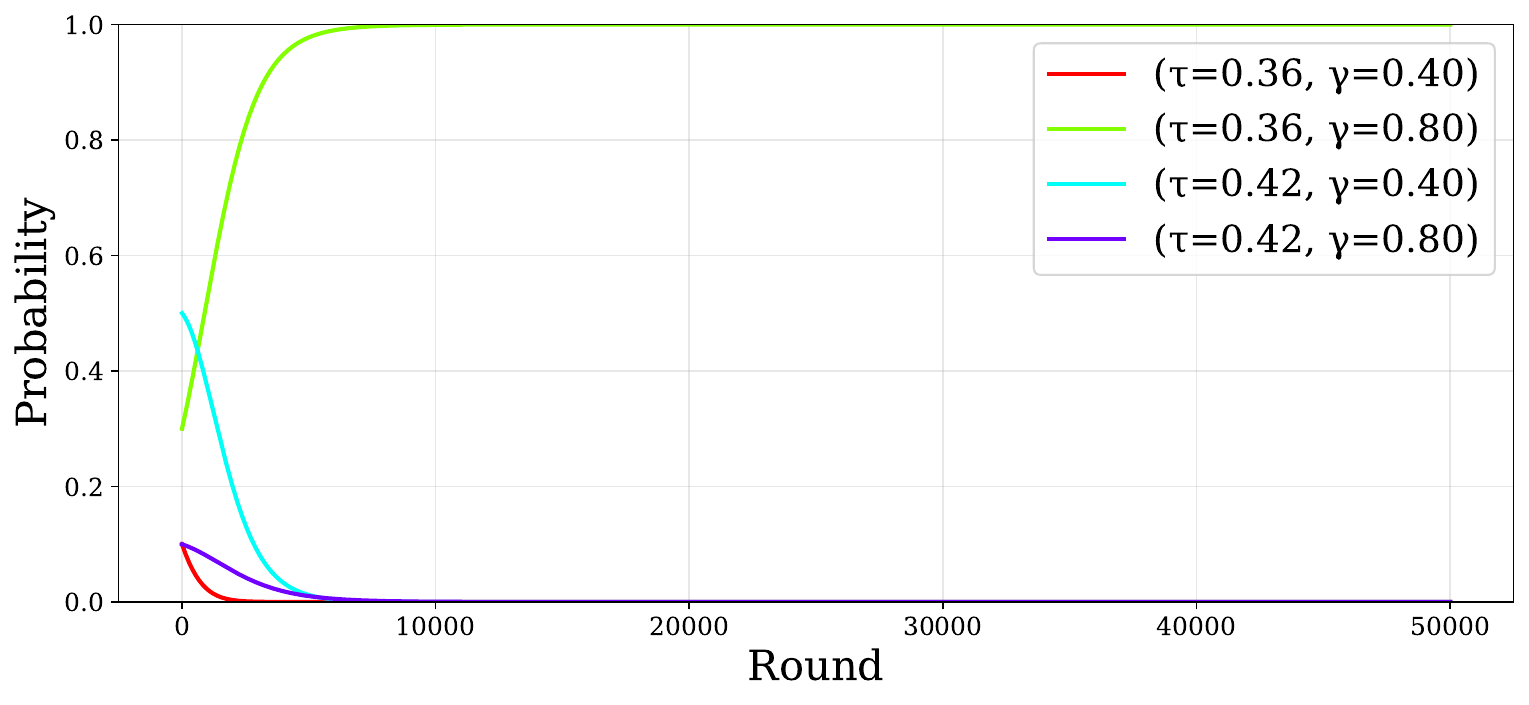}
    \end{subfigure}

    \caption{\textbf{Case +\,+} Exponential weights dynamics for Bank1 (left) and Bank2 (right), converging to $((\tau_l, \gamma_h),(\tau_l, \gamma_h))$. $y \sim$ truncated Gaussian ($\mu=0.3, \sigma=0.1$), $\gamma_l = 0.4$, $\gamma_h = 0.8$. Initial weights: Bank1 $(0.1, 0.5, 0.3, 0.1)$, Bank2 $(0.1, 0.3, 0.5, 0.1)$. \label{fig:dyna-pp}}
\end{figure}

We now present two representative figures for the `\(-\,+\)' case (corresponding to Theorem \ref{thm:e1l0e2g0}). Figure~\ref{fig:dyna-mp-ass} illustrates convergence to the pure asymmetric NE $((\tau_l, \gamma_h), (\tau_h, \gamma_l))$.
\begin{figure}[H]
    \centering
    \begin{subfigure}{0.49\linewidth}
        \centering
        \includegraphics[width=\linewidth]{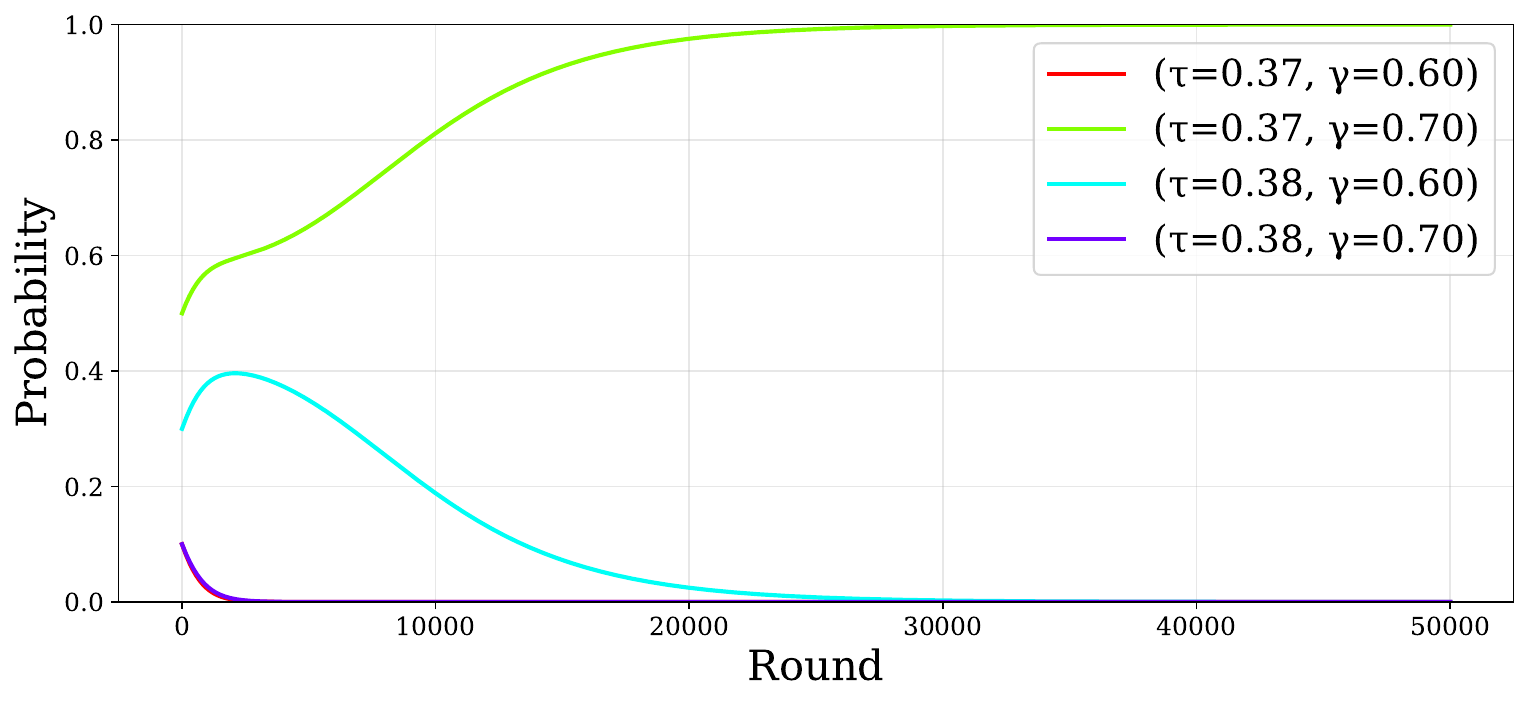}
    \end{subfigure}
    \begin{subfigure}{0.49\linewidth}
        \centering
        \includegraphics[width=\linewidth]{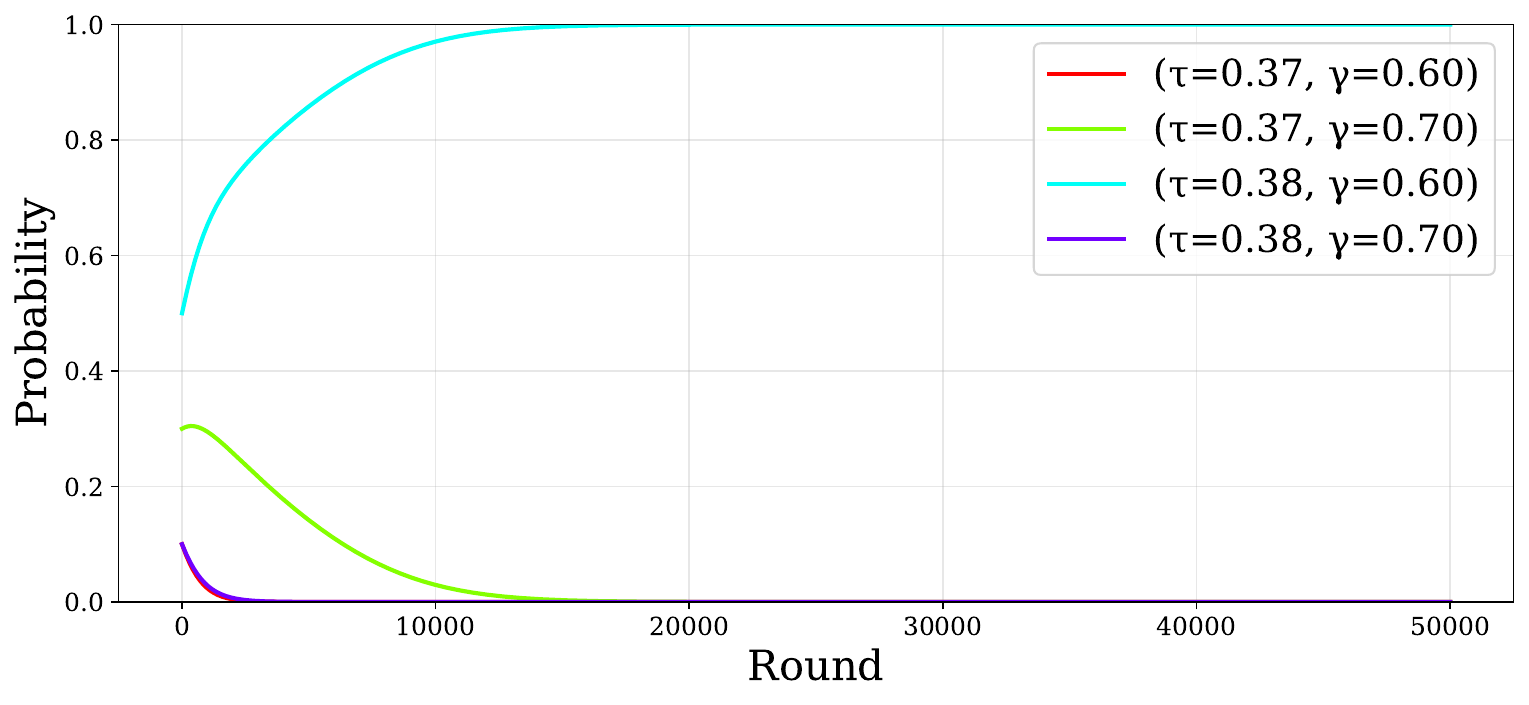}
    \end{subfigure}
    \caption{\textbf{Case -\,+ } Exponential weights dynamics for Bank1 (left) and Bank2 (right) converging to the pure NE $((\tau_l, \gamma_h), (\tau_h, \gamma_l))$. Parameters: $\epsilon_1<0, \epsilon_2>0$, $y \sim$ piecewise uniform distribution, $\gamma_l = 0.6$, $\gamma_h = 0.7$. Initial weights: Bank1 $(0.1, 0.5, 0.3, 0.1)$, Bank2 $(0.1, 0.3, 0.5, 0.1)$. \label{fig:dyna-mp-ass}}
\end{figure}

Finally, we present three figures for the `\(+\,-\)' case. Figure \ref{fig:dyna-pm-2} shows convergence to the pure symmetric NE $((\tau_h, \gamma_l), (\tau_h, \gamma_l))$.

\begin{figure}[H]
    \centering
    \begin{subfigure}{0.49\linewidth}
        \centering
        \includegraphics[width=\linewidth]{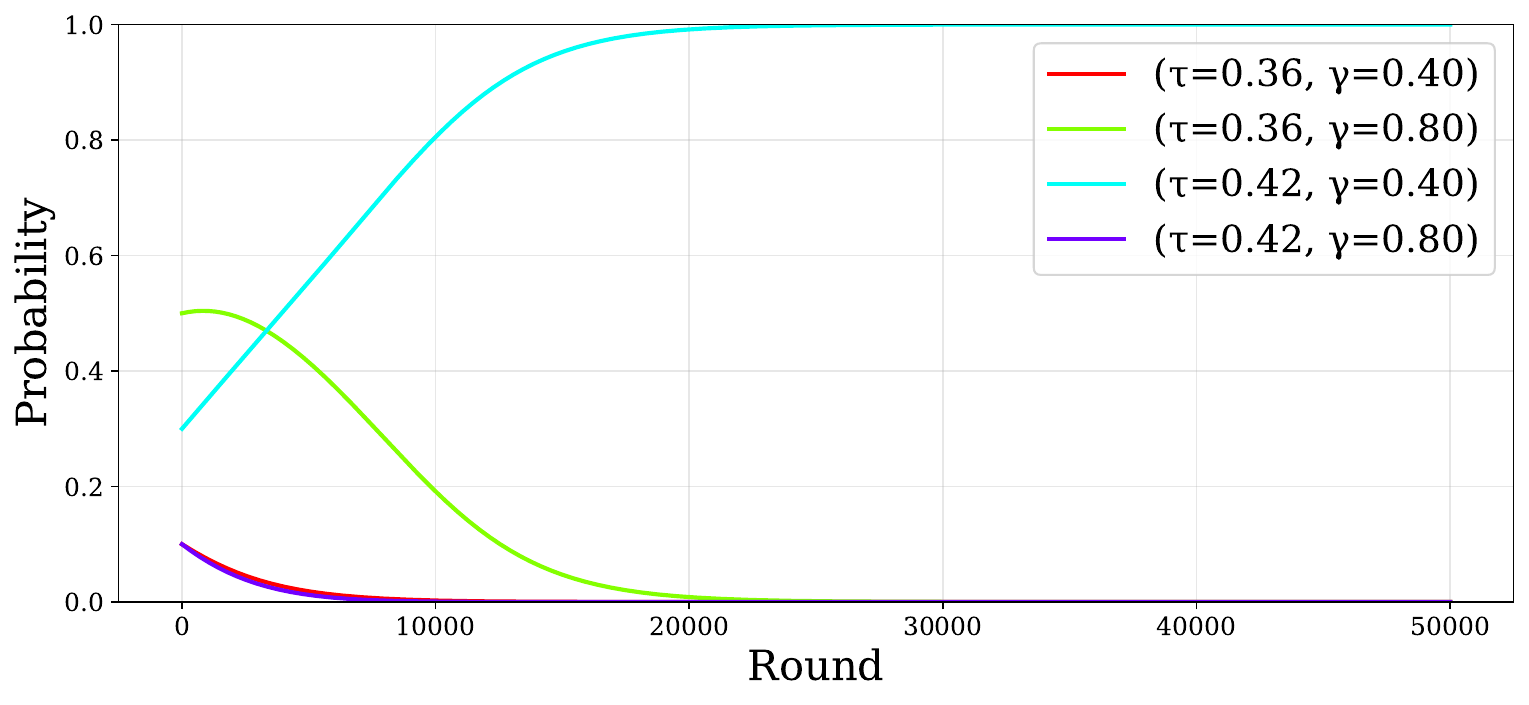}
    \end{subfigure}
    \begin{subfigure}{0.49\linewidth}
        \centering
        \includegraphics[width=\linewidth]{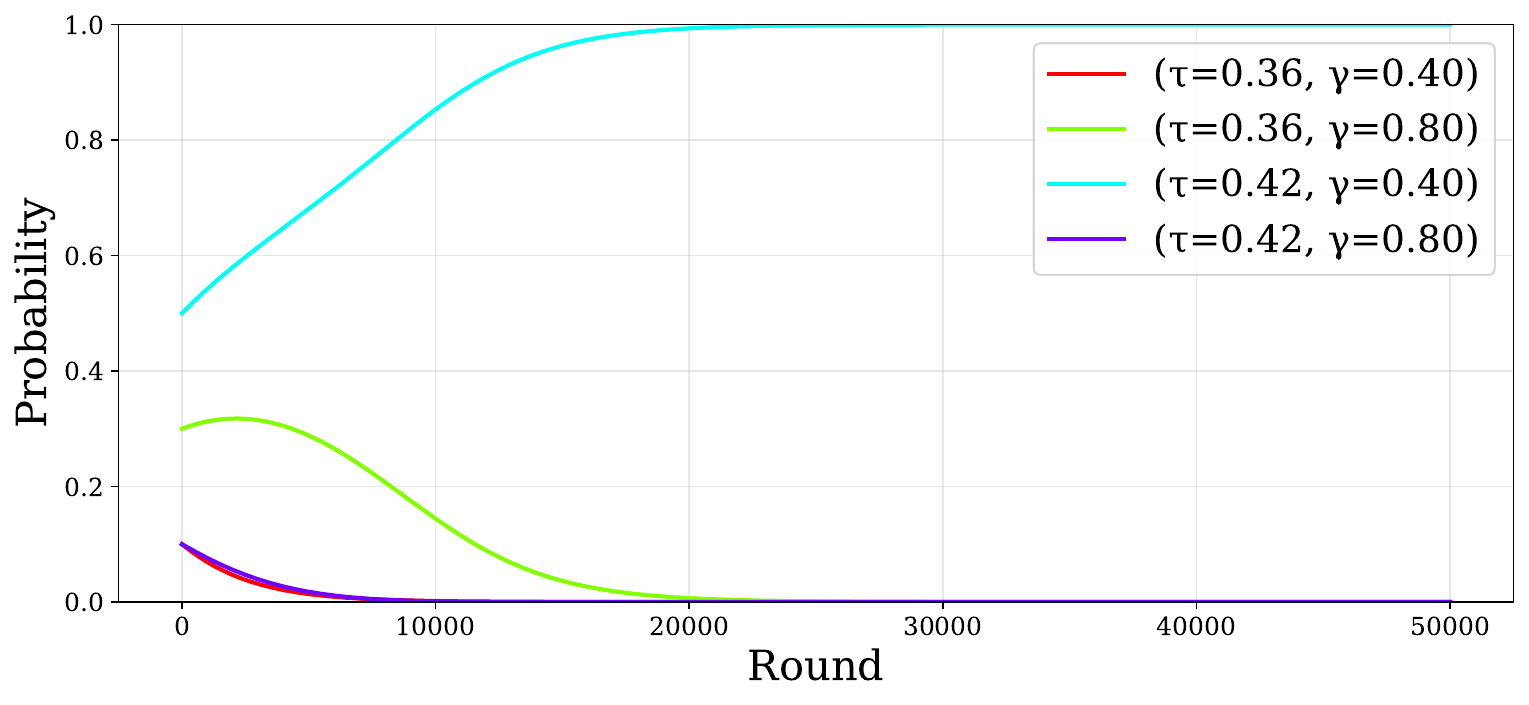}
    \end{subfigure}
    \caption{\textbf{Case +\,-}  Exponential weights dynamics for Bank1 (left) and Bank2 (right), converging to the pure NE $((\tau_h, \gamma_l),(\tau_h, \gamma_l))$. $y \sim$ truncated Gaussian ($\mu=0.1, \sigma=0.2$), $\gamma_l = 0.4$, $\gamma_h = 0.8$. Initial weights: Bank1 $(0.1, 0.5, 0.3, 0.1)$, Bank2 $(0.1, 0.3, 0.5, 0.1)$ \label{fig:dyna-pm-2}}.
\end{figure}

\end{document}